\newcommand{\escale}[1]{\ensuremath{\scalebox{0.8}{#1}}}
\newcommand{\nscale}[1]{\ensuremath{\scalebox{0.8}{#1}}}
\newcommand{\myEdge}[2]{ \tikz[baseline=-1pt]{
\draw[#2,line width=0.3pt] (0,0) -- ++(0.6,0) node[anchor=base, yshift=3pt, pos=0.5] {\escale{$#1$}};
}}
\newcommand{\myEEdge}[2]{ \tikz[baseline=-1pt]{
\draw[#2,line width=0.3pt] (0,0) -- ++(0.6,0) node[anchor=base, yshift=5pt, pos=0.5] {\escale{$#1$}};
}}
\newcommand{\mylEdge}[2]{ \tikz[baseline=-1pt]{
\draw[#2,line width=0.3pt] (0,0) -- ++(0.9,0) node[anchor=base, yshift=5pt, pos=0.5] {\escale{$#1$}};
}}
\newcommand{\myllEdge}[2]{ \tikz[baseline=-1pt]{
\draw[#2,line width=0.3pt] (0,0) -- ++(1.4,0) node[anchor=base, yshift=5pt, pos=0.5] {\escale{$#1$}};
}}
\newcommand{\edge}[1]{\myEdge{#1}{->}}
\newcommand{\Edge}[1]{\myEEdge{#1}{->}}
\newcommand{\ledge}[1]{\mylEdge{#1}{->}}
\newcommand{\lledge}[1]{\myllEdge{#1}{->}}
\newcommand{\nop}{\ensuremath{\textsf{nop}}}
\newcommand{\inp}{\ensuremath{\textsf{inp}}}
\newcommand{\out}{\ensuremath{\textsf{out}}}
\newcommand{\set}{\ensuremath{\textsf{set}}}
\newcommand{\res}{\ensuremath{\textsf{res}}}
\newcommand{\swap}{\ensuremath{\textsf{swap}}}
\newcommand{\free}{\ensuremath{\textsf{free}}}
\newcommand{\used}{\ensuremath{\textsf{used}}}
\newcommand{\problemtitle}[1]{\gdef\@problemtitle{#1}}
\newcommand{\probleminput}[1]{\gdef\@probleminput{#1}}
\newcommand{\problemquestion}[1]{\gdef\@problemquestion{#1}}
  \par\addvspace{.5\baselineskip}
  \par\addvspace{.5\baselineskip}
\newcommand{\problemtask}[1]{\gdef\@problemtask{#1}}
  \par\addvspace{.5\baselineskip}
  \par\addvspace{.5\baselineskip}
\begin{document}

\setcounter{page}{125}
\publyear{2021}
\papernumber{2084}
\volume{183}
\issue{1-2}

  \finalVersionForARXIV

\title{The Complexity of Synthesis of $b$-Bounded Petri Nets}

\author{Ronny Tredup\thanks{Address for correspondence: Universit\"at Rostock,  Institut f\"ur Informatik,
                   Theoretische Informatik, Albert-Einstein-Stra\ss e 22, 18059, Rostock, Germany}
\\
  Institut f\"ur Informatik,  Theoretische Informatik\\
  Universit\"at Rostock\\
Albert-Einstein-Stra\ss e 22, 18059, Rostock, Germany\\
ronny.tredup@uni-rostock.de
}

\maketitle

\runninghead{R. Tredup}{The Complexity of Synthesis of $b$-Bounded Petri Nets}

\begin{abstract}
For a fixed type of Petri nets $\tau$, \textsc{$\tau$-Synthesis} is the task of finding for a given transition system $A$ a Petri net $N$ of type $\tau$ ($\tau$-net, for short) whose reachability graph is isomorphic to $A$ if there is one.
The decision version of this search problem is called \textsc{$\tau$-Solvability}.
If an input $A$ allows a positive decision, then it is called $\tau$-solvable and a sought net $N$ $\tau$-solves $A$.
As a well known fact, $A$ is $\tau$-solvable if and only if it has the so-called $\tau$-\emph{event state separation property} ($\tau$-ESSP, for short) and the $\tau$-\emph{state separation property} ($\tau$-SSP, for short).
The question whether $A$ has the $\tau$-ESSP or the $\tau$-SSP defines also decision problems.
In this paper, for all $b\in \mathbb{N}$, we completely characterize the computational complexity of \textsc{$\tau$-Solvability}, \textsc{$\tau$-ESSP} and \textsc{$\tau$-SSP} for the types of pure $b$-bounded Place/Transition-nets, the $b$-bounded Place/Transition-nets and their corresponding $\mathbb{Z}_{b+1}$-extensions.
\end{abstract}

\begin{keywords}
Petri nets, synthesis, $b$-bounded, SSP, ESSP, solvability
\end{keywords}

\section{Introduction}\label{introduction}%

The task of system \emph{analysis} is to examine the behavior of a system and to derive its behavioral properties.
Its counterpart, \emph{synthesis}, is the task of automatically finding an implementing system for a given behavioral specification.
A valid synthesis procedure then computes a system that is correct by design if it exists.

In this paper we investigate a certain instance of synthesis:
For a fixed type of Petri nets $\tau$, \textsc{$\tau$-Synthesis} is the task to find, for a given directed labeled graph $A$, called transition system (TS, for short), a Petri net $N$ of type $\tau$ ($\tau$-net, for short) whose state graph is isomorphic to $A$ if such a net exists.
The decision version of \textsc{$\tau$-Synthesis} is called $\tau$-\textsc{Solvability}.

Synthesis for Petri nets has been investigated and applied for many years and in various fields:
It is used to extract concurrency and distributability data from sequential specifications like transition systems or languages \cite{DBLP:journals/fac/BadouelCD02}.
Synthesis has applications in the field of process discovery to reconstruct a model from its execution traces \cite{DBLP:books/daglib/0027363}.
In \cite{DBLP:journals/deds/HollowayKG97}, it is employed in supervisory control for discrete event systems.
It is also used for the synthesis of speed-independent circuits \cite{DBLP:journals/tcad/CortadellaKKLY97}.
In this paper, we investigate the computational complexity of synthesis for certain types of \emph{bounded} Petri nets, that is, Petri nets for which there is a positive integer $b$ that restricts the number of tokens on every place in every reachable marking.

In \cite{DBLP:conf/tapsoft/BadouelBD95,DBLP:series/txtcs/BadouelBD15}, synthesis has been shown to be solvable in polynomial time for bounded and pure bounded \emph{Place/Transition}-nets (P/T-nets, for short).
The approach provided in \cite{DBLP:conf/tapsoft/BadouelBD95,DBLP:series/txtcs/BadouelBD15} guarantees a (pure) bounded P/T-net to be output if such a net exists.
Unfortunately, it does not work for preselected bounds.
In fact, in \cite{DBLP:journals/tcs/BadouelBD97} it has been shown that solvability is NP-complete for $1$-bounded P/T-nets (there referred to as \emph{elementary net systems}), that is, if the bound $b=1$ is chosen in advance.
In \cite{DBLP:conf/stacs/Schmitt96}, the type of pure $1$-bounded P/T-nets is extended by the additive group $\mathbb{Z}_2$ of integers modulo $2$ (there referred to as \emph{flip-flop nets}).
Transitions of these nets can simulate the addition of integers modulo 2.
The result of \cite{DBLP:conf/stacs/Schmitt96} shows that this suffices to bring the complexity of synthesis down to polynomial time.
In \cite{DBLP:conf/tamc/TredupR19,DBLP:conf/apn/TredupR19}, we progressed the approach of examining the effects of the presence and absence of different interactions on the complexity of synthesis for the broader class of \emph{Boolean} Petri nets that enable independence between places and transitions.
This class also contains the type of $1$-bounded P/T-nets and its $\mathbb{Z}_2$-extension.
Although~\cite{DBLP:conf/tamc/TredupR19,DBLP:conf/apn/TredupR19} show that synthesis remains hard for 75 of the 128 possible Boolean types (allowing independence), \cite{DBLP:conf/tamc/TredupR19} also discovers 36 types for which synthesis is doable in polynomial time.
The latter applies in particular for the $\mathbb{Z}_2$ extension of $1$-bounded P/T-nets.
As another aspect that possibly might influence the complexity of synthesis of (pure) $1$-bounded P/T-nets, the grade $g$ of a TS $A$ as has been introduced in \cite{DBLP:conf/apn/TredupRW18}:
A TS $A$ is $g$-\emph{grade} if every state of $A$ has at most $g$ incoming and at most $g$ outgoing labeled edges.
There we showed that synthesis of pure $1$-bounded P/T-nets remains NP-complete even for acyclic $1$-grade TS.
In \cite{DBLP:journals/corr/abs-1911-05834}, for any fixed $g\in \mathbb{N}$, we completely characterize the computational complexity of synthesis from $g$-grade TS for all Boolean Petri net types that enable independence.
Surprisingly enough, for many other Boolean types, synthesis remains hard for all $g\geq 1$.
For example, this applies to the type of \emph{inhibitor nets} and the type of \emph{contextual nets}, which have originally been introduced in~\cite{DBLP:conf/apn/Pietkiewicz-Koutny97} and~\cite{DBLP:journals/acta/MontanariR95} and are referred to as $\{\nop, \inp, \out, \free\}$ and $\{\nop, \inp, \out, \used, \free\}$ in~\cite{DBLP:journals/corr/abs-1911-05834}, respectively.
However, there are several types for which the complexity changes when $g$ becomes small enough.
This applies in particular to the Boolean type of \emph{trace nets} that has originally been introduced in \cite{DBLP:journals/acta/BadouelD95} and is referred to as $\{\nop,\inp,\out,\res,\set,\used,\free\}$ in \cite{DBLP:journals/corr/abs-1911-05834}.
Synthesis for this type is hard if $g\geq2$, but polynomial for $g < 2$.
The same is true for the type of \emph{set nets} that has originally been introduced in~\cite{DBLP:journals/acta/KleijnKPR13} and is referred to as $\{\nop, \inp, \set, \used\}$ in~\cite{DBLP:journals/corr/abs-1911-05834}.

However, some questions in the area of synthesis for Petri nets are still open.
Recently, the complexity status of synthesis for (pure) $b$-bounded P/T-nets, where $b\geq 2$, has been reported as unknown~\cite{DBLP:conf/concur/SchlachterW17}.
Furthermore, it has not yet been analyzed whether extending (pure) $b$-bounded P/T-nets by the group $\mathbb{Z}_{b+1}$ provides also a tractable superclass if $b\geq 2$.

Let $b\in\mathbb{N}^+$.
In this paper, we show that solvability for (pure) $b$-bounded P/T-nets is NP-complete even if the input is an acyclic $1$-grade TS.
Moreover, for $b\geq 2$, we introduce (pure) $\mathbb{Z}_{b+1}$-extended $b$-bounded P/T-nets.
This type originates from (pure) $b$-bounded P/T-nets by adding interactions between places and transitions simulating the addition of integers modulo $b+1$.
This extension is a natural generalization of Schmitt's approach that does this for $b=1$~\cite{DBLP:conf/stacs/Schmitt96}.
In contrast to Schmitt's result~\cite{DBLP:conf/stacs/Schmitt96}, in this paper, we show that solvability for (pure) $\mathbb{Z}_{b+1}$-extended $b$-bounded P/T-nets remains NP-complete for all $b\geq 2$ even if the input is restricted to $g$-grade TS where $g\geq 2$.
In particular, this makes the synthesis of all of these $b$-bounded Petri net types NP-hard.
The question arises whether there are also types of $b$-bounded P/T-nets for which synthesis is tractable if $b\geq 2$.
We affirm this question and propose the type of restricted $\mathbb{Z}_{b+1}$-extended $b$-bounded P/T-nets.
This paper shows, that synthesis is solvable in polynomial time for this type.

To prove the NP-completeness of solvability we use its well known close connection to the so-called \emph{event state separation property} (ESSP, for short) and \emph{state separation property} (SSP, for short).
In fact, a TS $A$ is solvable with respect to a Petri net type if and only if it has the type related ESSP \emph{and} SSP \cite{DBLP:series/txtcs/BadouelBD15}.
The question of whether a TS $A$ has the ESSP or the SSP also defines decision problems.
The possibility to efficiently decide if $A$ has at least one of both properties serves as quick-fail pre-processing mechanisms for solvability.
Moreover, if $A$ has the ESSP then synthesizing Petri nets up to language equivalence is possible \cite{DBLP:series/txtcs/BadouelBD15}.
This makes the decision problems ESSP and SSP worth to study.
In \cite{DBLP:journals/tcs/Hiraishi94}, both problems have been shown to be NP-complete for pure $1$-bounded P/T-nets.
This has been confirmed for almost trivial inputs in \cite{DBLP:conf/apn/TredupRW18,DBLP:conf/concur/TredupR18}.

In this paper, for all $b\in \mathbb{N}^+$, we show that ESSP and SSP are NP-complete for (pure) $b$-bounded P/T-nets even if the input is an acyclic $1$-grade TS.
Moreover, for all $b\geq 2$, the ESSP is shown to remain NP-complete for (pure) $\mathbb{Z}_{b+1}$-extended $b$-bounded P/T-nets for $g$-grade TS where $g\geq 2$.
By way of contrast, in this paper, we show that SSP is decidable in polynomial time for the type of (pure) $\mathbb{Z}_{b+1}$-extended $b$-bounded P/T-nets, for all $b\in \mathbb{N}$.
To the best of our knowledge, so far, this is the first net family where the provable computational complexity of SSP is different to solvability and ESSP.

All presented NP-completeness proofs base on a reduction from the monotone one-in-three 3-SAT problem that is known to be NP-complete~\cite{DBLP:journals/dcg/MooreR01}.
Every reduction starts from a given boolean input expression $\varphi$ and results in an accordingly restricted $g$-grade TS $A$.
The expression $\varphi$ belongs to monotone one-in-three 3-SAT if and only if $A$ has the ESSP or the SSP or the solvability, depending on which of the properties is queried.

The proofs of the announced polynomial time results base on a generalization of Schmitt's approach~\cite{DBLP:conf/stacs/Schmitt96} that reduces ESSP and SSP to systems of linear equations modulo $b+1$.
It exploits that the solvability of such systems is decidable in polynomial time.

This paper is organized as follows:
Section~\ref{sec:preliminaries} introduces necessary definitions and provides them with illustrating examples.
Moreover, it also presents some basic results that are used throughout the paper.
Section~\ref{sec:unions} introduces the concept of unions applied by the proofs of our hardness results.
Section~\ref{sec:hardness_results} provides the NP-completeness results and presents the corresponding reductions that prove their validity.
Section~\ref{sec:poly_results} provide the announced tractability results.
Finally, Section~\ref{sec:conclusion} closes the paper.
This paper is an extended version of~\cite{DBLP:conf/apn/Tredup19,DBLP:conf/apn/Tredup19a}.

\section{Preliminaries}\label{sec:preliminaries}%
In this section, we introduce necessary notions and provide some basic results that we use throughout the paper as well as some examples.

\begin{definition}[Transition System]\label{def:ts}
A (deterministic) \emph{transition system} (TS, for short) $A=(S,E, \delta)$ is a directed labeled graph with states $S$, events $E$ and partial \emph{transition function} $\delta: S\times E \longrightarrow S$, where $\delta(s,e)=s'$ is interpreted as the \emph{edge} $s\edge{e}s'$.
For $s\edge{e}s'$ we say $s$ is a source and $s'$ is a target of $e$, respectively.
An event $e$ \emph{occurs} at a state $s$, denoted by $s\edge{e}$, if $\delta(s,e)$ is defined.
A word $w=e_0\dots e_n\in E^*$ \emph{occurs} at a state $s$, denoted by $s\edge{w}$, if it is the empty word $\varepsilon$ or there are states $q_0,\dots, q_n$ such that $s=q_0$ and $\delta(q_i,e_{i+1})=q_{i+1}$ is defined for all $i\in \{0,\dots,n-1\}$.
An \emph{initialized} TS $A=(S,E,\delta, \iota)$ is a TS with a distinct state $\iota \in S$ such that every state $s\in S$ is \emph{reachable} from $\iota$ by a directed labeled path.
The language of $A$ is the set $L(A)=\{w\in E^* \mid \iota \edge{w}\}$.
\end{definition}

In the remainder of this paper, if not explicitly stated otherwise, we assume all TS to be initialized and if a TS $A$ is not explicitly defined, then we refer to its components consistently by $S(A)$ (states) and $E(A)$ (events) and $\delta_A$ (transition function) and $\iota_A$ (initial states).

\begin{definition}[$g$-grade, linear]\label{def:grade}
Let $g\in \mathbb{N}$.
A TS $A=(S,E,\delta,\iota)$ is \emph{$g$-grade} if, for every state $s\in S$, the number of incoming and outgoing labeled edges at $s$ is at most $g$: $\vert \{e\in E\mid \edge{e}s\}\vert\leq g$ and $\vert \{e\in E\mid s\edge{e}\}\vert\leq g$.
If a TS is $1$-grade and cycle free, that is, there are pairwise distinct states $s_0,\dots, s_m$ such that $A=s_0\edge{e_1}\dots\edge{e_m}s_m$, then we say $A$ is \emph{linear};
we call $s_m$ the \emph{terminal state} of $A$ and, for all $i < j\in \{1,\dots, m\}$, we say $e_j$ and $s_j$ occur after $e_i$.
\end{definition}


In this paper, we deal with (different kinds of Petri) nets.
Nets have places, transitions, a flow and an initial marking.
Places can contain \emph{tokens}.
A global marking of a net defines for every place $p$ how many tokens it contains initially.
The firing of a transition can change locally the content of some places and thus globally the marking of the net.
The flow defines the relations between places and transitions:
how many token must a place contain to allow the firing of a transition and in which way changes the firing of a transition the content of a place.
Nets are classified by the number of tokens that a place can maximally contain (markings) and according to how places and transitions may influence each other (flow).
This way to classify nets leads to infinite many different classes of nets.
In order to deal with these classes in a uniform way, the notion of types of nets has been developed in~\cite{DBLP:series/txtcs/BadouelBD15}:

\begin{definition}[Type of nets]\label{def:type_of_nets}
A type of nets $\tau$ is a (non-initialized) TS $\tau=(S_\tau, E_\tau,\delta_\tau)$ with $S_\tau\subseteq \mathbb{N}$.
\end{definition}

Based on this notion, we are now able to define $\tau$-nets, where the states $S_\tau$ of $\tau=(S_\tau,E_\tau,\delta_\tau)$ correspond to possible contents of places, the events $E_\tau$ correspond to possible relations between places and transitions and the partial transition function $\delta_\tau$ describe how the contents of places can be changed by the firing of a transition and, moreover, which contents can inhibit such a firing:

\begin{definition}[$\tau$-Nets]\label{def:tau_nets}
Let $\tau=(S_\tau, E_\tau, \delta_\tau)$ be a type of nets.
A Petri net $N = (P, T, M_0, f)$ of type $\tau$, ($\tau$-net, for short) is given by finite and disjoint sets $P$ of places and $T$ of transitions, an initial marking $M_0: P\longrightarrow  S_\tau$, and a (total) flow function $f: P \times T \rightarrow E_\tau$.
A $\tau$-net realizes a certain behavior by firing sequences of transitions:
A transition $t \in T$ can fire in a marking $M: P \longrightarrow  S_\tau$ if $\delta_\tau(M(p), f(p,t))$ is defined for all $p\in P$.
By firing, $t$ produces the next marking $M' : P\longrightarrow  S_\tau$ where $M'(p)=\delta_\tau(M(p), f(p,t))$ for all $p\in P$.
This is denoted by $M \edge{t} M'$.
Given a $\tau$-net $N=(P, T, M_0, f)$, its behavior is captured by a transition system $A_N$, called the reachability graph of $N$.
The state set of $A_N$ is the reachability set $RS(N)$, that is, the set of all markings that, starting from initial state $M_0$, are reachable by firing a sequence of transitions.
For every reachable marking $M$ and transition $t \in T$ with $M \edge{t} M'$ the state transition function $\delta_{A_N}$ of $A_N$ is defined by $\delta_{A_N}(M,t) = M'$.
\end{definition}

Let $b\in \mathbb{N}^+$ be arbitrary but fixed.
In this paper, the following types of ($b$-bounded Petri) nets are the subject of our investigations:

\begin{definition}[$\tau_{PT}^b$]
The type of \emph{$b$-bounded P/T-nets} $\tau_{PT}^b=(S_{\tau_{PT}^b}, E_{\tau_{PT}^b}, \delta_{\tau_{PT}^b})$ has the state set $S_{\tau_{PT}^b}=\{0,\dots, b\}$ and  the event set $E_{\tau_{PT}^b}=\{0,\dots, b\}^2$ and, for all $s\in S_{\tau_{PT}^b}$ and all $(m,n)\in E_{\tau_{PT}^b}$, the transition function is defined by $\delta_{\tau_{PT}^b}(s,(m,n))=s-m+n$ if $s\geq m$ and $ s-m+n \leq b$, and undefined otherwise.
\end{definition}

\begin{definition}[$\tau_{PPT}^b$]
The type $\tau_{PPT}^b=(S_{\tau_{PPT}^b}, E_{\tau_{PPT}^b}, \delta_{\tau_{PPT}^b})$ of \emph{pure $b$-bounded P/T-nets} is a restriction of $\tau_{PT}^b$ that discards all events $(m,n)$ from $E_{\tau_{PT}^b}$ where both $m$ and $n$ are positive.
To be exact, $S_{\tau_{PPT}^b}=S_{\tau_{PT}^b}$ and $E_{\tau_{PPT}^b}=E_{\tau_{PT}^b} \setminus \{(m,n) \mid 1 \leq m,n \leq b\}$ and, for all $s\in S_{\tau_{PPT}^b}$ and all $e\in E_{\tau_{PPT}^b}$, we have $\delta_{\tau_{PPT}^b}(s,e)=\delta_{\tau_{PT}^b}(s,e)$.
\end{definition}

\begin{definition}[$\tau_{\mathbb{Z}PT}^b$]
\begin{spacing}{1.06}
The type $\tau_{\mathbb{Z}PT}^b=(S_{\tau_{\mathbb{Z}PT}^b}, E_{\tau_{\mathbb{Z}PT}^b}, \delta_{\tau_{\mathbb{Z}PT}^b})$ of \emph{$\mathbb{Z}_{b+1}$-extended $b$-bounded P/T-nets} originates from $\tau_{PT}^b$ by extending the event set $E_{\tau_{PT}^b}$ with the elements $0,\dots, b$.
The transition function additionally simulates the addition modulo (b+1).
More exactly,  $S_{\tau_{\mathbb{Z}PT}^b}=S_{\tau_{PT}^b}$ and $E_{\tau_{\mathbb{Z}PT}^b}=(E_{\tau_{PT}^b}\setminus \{(0,0)\}) \cup \{0,\dots, b\}$ and, for all $s\in S_{\tau_{\mathbb{Z}PT}^b}$ and all $e\in E_{\tau_{\mathbb{Z}PT}^b}$ we have that $\delta_{\tau_{\mathbb{Z}PT}^b}(s,e)=\delta_{\tau_{PT}^b}(s,e)$ if $e\in E_{\tau_{PT}^b}$, else $\delta_{\tau_{\mathbb{Z}PT}^b}(s,e)=(s+e) \text{ mod } (b+1)$.
\end{spacing}\vspace*{-2mm}
\end{definition}

\begin{definition}[$\tau_{\mathbb{Z}PPT}^b$]
The type $\tau_{\mathbb{Z}PPT}^b=(S_{\tau_{\mathbb{Z}PPT}^b}, E_{\tau_{\mathbb{Z}PPT}^b}, \delta_{\tau_{\mathbb{Z}PPT}^b})$ of pure \emph{$\mathbb{Z}_{b+1}$-extended $b$-bounded P/T-nets} is a restriction of $\tau_{\mathbb{Z}PT}^b$ such that $S_{\tau_{\mathbb{Z}PPT}^b}=S_{\tau_{\mathbb{Z}PT}^b}$ and $E_{\tau_{\mathbb{Z}PPT}^b}=E_{\tau_{\mathbb{Z}PT}^b}\setminus \{(m,n) \mid 1 \leq m,n \leq b\}$ and, for all $s\in S_{\tau_{\mathbb{Z}PPT}^b}$ and all $e\in E_{\tau_{\mathbb{Z}PPT}^b}$, the transition function is defined by $\delta_{\tau_{\mathbb{Z}PPT}^b}(s,e)=\delta_{\tau_{\mathbb{Z}PT}^b}(s,e)$.
\end{definition}

\begin{definition}[$\tau_{R\mathbb{Z}PT}^b$]
\begin{spacing}{1.1}
The type of \emph{restricted $\mathbb{Z}_{b+1}$-extended $b$-bounded P/T-nets} $\tau_{R\mathbb{Z}PT}^b=(S_{\tau_{R\mathbb{Z}PT}^b},E_{\tau_{R\mathbb{Z}PT}^b}, \delta_{\tau_{R\mathbb{Z}PT}^b})$ has the same state set $S_{\tau_{R\mathbb{Z}PT}^b}=S_{\tau_{\mathbb{Z}PT}^b}$ and the same event set $E_{\tau_{R\mathbb{Z}PT}^b}=E_{\tau_{\mathbb{Z}PT}^b}$ as $\tau_{\mathbb{Z}PT}^b$, but a restricted transition function $\delta_{\tau_{R\mathbb{Z}PT}^b}$.
In particular, $\delta_{\tau_{R\mathbb{Z}PT}^b}$ restricts $\delta_{\tau_{\mathbb{Z}PT}^b}$ in such a way that for $s\in S_{\tau_{R\mathbb{Z}PT}^b}$ and $(m,n)\in E_{\tau_{R\mathbb{Z}PT}^b}$ we have that $\delta_{\tau_{R\mathbb{Z}PT}^b}(s,(m,n))=\delta_{\tau_{\mathbb{Z}PT}^b}(s,(m,n))$ if $s=m$;
otherwise if $s\not=m$, then $\delta_{\tau_{R\mathbb{Z}PT}^b}(s,(m,n))$ remains undefined.
Hence, every $(m,n)\in E_{\tau_{R\mathbb{Z}PT}^b}$ occurs exactly once in $\tau_{R\mathbb{Z}PT}^b$.
Furthermore, if $(s,e)\in \{0,\dots,b\}^2$ then $\delta_{\tau_{R\mathbb{Z}PT}^b}(s,e)=\delta_{\tau_{\mathbb{Z}PT}^b}(s,e)$.
\end{spacing}
\end{definition}

\begin{figure}[h!]
\vspace*{-2mm}
\begin{minipage}{\textwidth}
\begin{center}
\begin{tikzpicture}[scale = 0.95]
\begin{scope}%
\node (type) at (-2.5,0) {$\tau_{PT}^b$:};
\node (0) at (0,0) {\nscale{$0$}};
\node (1) at (5,0) {\nscale{$1$}};
\node (2) at (10,0) {\nscale{$2$}};

\path (0) edge [<-, out=120,in=-120,looseness=10] node[left] {\escale{$(0,0)$}} (0);
\path (1) edge [->, out=225,in=-45,looseness=4] node[below] { \escale{$(1,1)$}  } (1);
\path (1) edge [->, out=-225,in=45,looseness=4] node[above] {  \escale{$(0,0)$}  } (1);
\path (2) edge [->, out=-60,in=60,looseness=10] node[right, align= left] {\escale{$(0,0)$} \\ \escale{$(1,1)$} \\ \escale{$(2,2)$}  } (2);
\path (1) edge [->, bend left=10] node[above, align= left] {   \escale{$(0,1), (1,2)$}}   (2);
\path (1) edge [<-, bend right=10] node[below, align= left] {   \escale{$(1,0),(2,1)$}   }   (2);
\graph{

(0) ->[bend left= 10, "\escale{$(0,1)$}"] (1);
(0) ->[bend left= 30, "\escale{$(0,2)$}"] (2);
(0) <-[bend right= 10,swap,  "\escale{$(1,0)$}"] (1);
(0) <-[bend right= 30, swap, "\escale{$(2,0)$}"] (2);
};
\end{scope}

\end{tikzpicture}
\end{center}
\end{minipage}
\begin{minipage}{\textwidth}
\begin{center}
\begin{tikzpicture}[scale = 0.95]
\begin{scope}%
\node(type) at (-2.5, 0) { $\tau_{\mathbb{Z}PT}^2$:};

\node (0) at (0,0) {\nscale{$0$}};
\node (1) at (5,0) {\nscale{$1$}};
\node (2) at (10,0) {\nscale{$2$}};

\path (0) edge [<-, out=120,in=-120,looseness=10] node[left] {\escale{$0$}} (0);
\path (1) edge [->, out=225,in=-45,looseness=4] node[below] { \escale{$(1,1)$}  } (1);
\path (1) edge [->, out=-225,in=45,looseness=4] node[above] {  \escale{$0$}  } (1);
\path (2) edge [->, out=-60,in=60,looseness=10] node[right, align= left] {\escale{$ 0$} \\ \escale{$(1,1)$} \\ \escale{$(2,2)$}  } (2);
\path (1) edge [->, bend left=10] node[above, align= left] {   \escale{$(0,1), (1,2), 1$}}   (2);
\path (1) edge [<-, bend right=10] node[below, align= left] {   \escale{$(1,0), (2,1), 2$}   }   (2);
\graph{

(0) ->[bend left= 10, "\escale{$(0,1),1$}"] (1);
(0) ->[bend left= 30, "\escale{$(0,2),2$}"] (2);
(0) <-[bend right= 10,swap,  "\escale{$(1,0), 2$}"] (1);
(0) <-[bend right= 30, swap, "\escale{$(2,0) ,1$}"] (2);
};
\end{scope}

\end{tikzpicture}
\end{center}
\end{minipage}
\begin{minipage}{\textwidth}
\begin{center}
\begin{tikzpicture}[scale = 0.95]
\begin{scope}%
\node(type) at (-2.5, 0) { $\tau_{R\mathbb{Z}PT}^2$:};

\node (0) at (0,0) {\nscale{$0$}};
\node (1) at (5,0) {\nscale{$1$}};
\node (2) at (10,0) {\nscale{$2$}};

\path (0) edge [<-, out=120,in=-120,looseness=10] node[left] {\escale{$0$}} (0);
\path (1) edge [->, out=225,in=-45,looseness=4] node[below] { \escale{$(1,1)$}  } (1);
\path (1) edge [->, out=-225,in=45,looseness=4] node[above] {  \escale{$0$}  } (1);
\path (2) edge [->, out=-60,in=60,looseness=10] node[right, align= left] {\escale{$ 0$} \\ \escale{$(2,2)$}  } (2);
\path (1) edge [->, bend left=10] node[above, align= left] {   \escale{$(1,2), 1$}}   (2);
\path (1) edge [<-, bend right=10] node[below, align= left] {   \escale{$(2,1), 2$}   }   (2);
\graph{

(0) ->[bend left= 10, "\escale{$(0,1),1$}"] (1);
(0) ->[bend left= 25, "\escale{$(0,2),2$}"] (2);
(0) <-[bend right= 10,swap,  "\escale{$(1,0), 2$}"] (1);
(0) <-[bend right= 25, swap, "\escale{$(2,0) ,1$}"] (2);
};
\end{scope}

\end{tikzpicture}
\end{center}\vspace*{-2mm}
\caption{The $2$-bounded types $\tau_{PT}^2$ (top) and $\tau_{\mathbb{Z}PT}^2$ (middle) and $\tau_{R\mathbb{Z}PT}^2$ (bottom), respectively, with states $0$ and $1$ and $2$.
Multiple edges with the same source and target (but different events) are represented as one edge with multiple events.
For example, $\tau_{PT}^b$ has three edges from $2$ to $2$: one labeled $(0,0)$, another $(1,1)$, and a third labeled $(2,2)$.
Similarly, $\tau_{\mathbb{Z}PT}^2$ has two edges from $2$ to $0$: one labeled $(2,0)$ and another labeled $1$.
}\label{fig:types}
\end{minipage}
\end{figure}
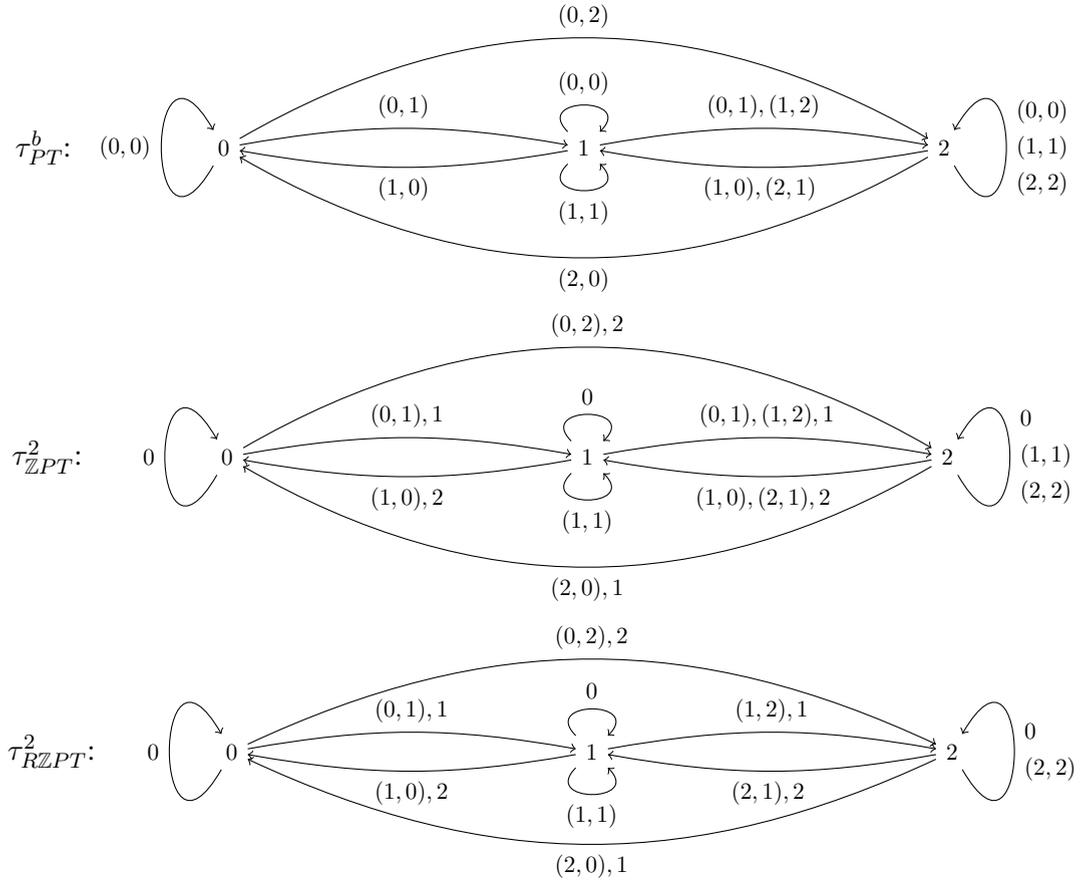

\begin{example}
Figure~\ref{fig:types} sketches $\tau_{PT}^2$ (top) and $\tau_{\mathbb{Z}PT}^2$ (middle).
Events separated by commas label different edges.
Omitting the events $(1,1)$, $(1,2)$, $(2,1)$ and $(2,2)$ and the corresponding edges yields $\tau_{PPT}^2$ and  $\tau_{\mathbb{Z}PPT}^2$, respectively.
Moreover, Figure~\ref{fig:types} sketches $\tau_{R\mathbb{Z}PT}^2$ (bottom).
\end{example}

\begin{example}\label{ex:tau_net}
Figure~\ref{fig:admissible_set} sketches the $\tau_{PPT}^2$-net $N_1$ and its reachability graph $A_{N_1}$.
$N_1$ has places $R_1$ and $R_2$, transitions $a$ and $b$ and flow $f(R_1,a)=(1,0)$, $f(R_2,b)=(1,0)$ and $f(R_1,b)=f(R_2,a)=(0,0)$ and initial marking $M_0(R_1)=M_0(R_2)=1$.
The $(0,0)$-labeled edges are omitted.
\end{example}

\begin{figure}[H]
\vspace*{-2mm}
\begin{center}
\begin{tikzpicture}[new set = import nodes]
\begin{scope}[yshift=-1cm,nodes={set=import nodes}]
		\node (T) at (-1.25,0.5) {$A_1:$};
		\node (0) at (0,0) {\nscale{$s_0$}};
		\node (1) at (1.8,0) {\nscale{$s_1$}};
		\node (2) at (0,1) {\nscale{$s_2$}};
		\node (3) at (1.8,1) {\nscale{$s_3$}};
		\draw[-latex](-0.6,0)to node[]{}(0);
\graph {(import nodes);
			0 ->[swap,"\escale{$a$}"]1->[swap, "\escale{$b$}"]3;
			0->[ "\escale{$b$}"]2->["\escale{$a$}"]3;
		};
\end{scope}

\begin{scope}[xshift=5.5cm, node distance=1.75cm,bend angle=45,auto]
\node (T) at (-1.75,-0.5) {$N_1:$};
\node [place] (R1)[tokens =1,label=left: $R_1$, node distance =1.5cm]{};
\node [transition] (a)[right of =R1]{$a$}
	 edge [pre, above] node {$\nscale{(1,0)}$}  (R1);

\node [place] (R3)[tokens =1,below of=R1, label=left: $R_2$, node distance =1cm]{};
\node [transition] (b)[right of =R3]{$b$}
	 edge [pre, above] node {$\nscale{(1,0)}$}  (R3);

\end{scope}
\begin{scope}[xshift=10.5cm, yshift=-1cm,nodes={set=import nodes}]
		\node (T) at (-1.25,0.5) {$A_{N_1}:$};
		\node (0) at (0,0) {\nscale{$11$}};
		\node (1) at (1.8,0) {\nscale{$01$}};
		\node (2) at (0,1) {\nscale{$10$}};
		\node (3) at (1.8,1) {\nscale{$00$}};
		\draw[-latex](-0.6,0)to node[]{}(0);
\graph {(import nodes);
			0 ->[swap, "\escale{$a$}"]1->[swap, "\escale{$b$}"]3;
			0->["\escale{$b$}"]2->["\escale{$a$}"]3;
		};
\end{scope}
\end{tikzpicture}
\end{center}\vspace*{-2mm}
\caption{The TS $A_1$, the $\tau_{PPT}^1$-Net $N_1$ and the reachability graph $A_{N_1}$ of $N_1$.}\label{fig:admissible_set}
\end{figure}
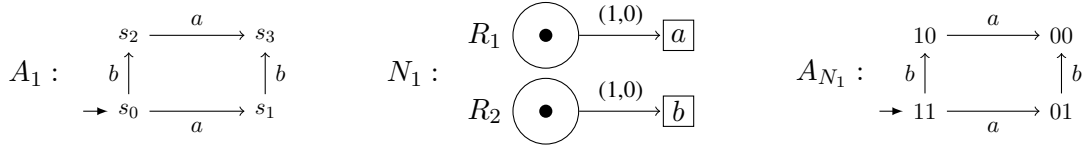

According to Definition~\ref{def:tau_nets}, for every $\tau$-net $N$, there is always a TS $A_N$, that reflects the global behavior of $N$, namely the corresponding reachability graph.
Moreover, by firing all possible sequences of transitions, the reachability graph $A_N$ can be computed effectively.
Naturally, this raises the question whether a given TS $A$ corresponds to the behavior of a $\tau$-net $N$.
Furthermore, in case of a positive decision, $N$ should be constructed.
This is the subject of the following search problem:

\noindent
\fbox{\begin{minipage}[t][1.7\height][c]{0.97\textwidth}
\begin{searchproblem}
  \problemtitle{\textsc{$\tau$-Synthesis}}
  \probleminput{A TS $A=(S,E,\delta, \iota)$.}
  \problemquestion{Find a $\tau$-net $N$ whose reachability graph is isomorphic to $A$ if it exists.}
\end{searchproblem}
\end{minipage}}
\bigskip

If an input $A=(S,E,\delta,\iota)$ of $\tau$-Synthesis allows a positive decision, then we want to construct a corresponding $\tau$-net $N$ purely from $A$.
Since $A$ and the reachability graph $A_N$ of $N$ shall be isomorphic, the events $E$ of $A$ become transitions of $N$.
The places, the flow function and the initial marking of $N$ originate from so-called $\tau$-regions of $A$.

\begin{definition}[$\tau$-Regions]\label{def:region}
Let $\tau\in \{\tau^b_0,\tau^b_1,\tau^b_2,\tau^b_3,\tau^b_4\}$ and $A=(S,E,\delta,\iota)$ be a TS.
A $\tau$-region of $A$ is a pair $(sup, sig)$ of \emph{support} $sup: S \rightarrow S_\tau $ and \emph{signature} $sig: E\rightarrow E_\tau $ such that for every edge $s \edge{e} s'$ of $A$ the image $sup(s) \ledge{sig(e)} sup(s')$ is present in $\tau$.
If $sig(e)=(m,n)$, then we define $sig^-(e)=m$ and $sig^+(e)=n$ and $\vert sig(e)\vert =0$, and if $sig(e)\in \{0,\dots, b\}$, then we define $sig^-(e)=sig^+(e)=0$ and $\vert sig(e)\vert =sig(e)$.
\end{definition}

A region $(sup, sig)$ models a place $p$ and its initial marking $M_0(p)$ as well as the corresponding part of the flow function $f(p,\cdot)$ of a sought $\tau$-net if it exist.
In particular, $sig(e)$ models $f(p,e)$ and $sup(\iota)$ models the number of tokens that $p$ contains initially and, more generally, $sup(s)$ models the number of tokens $M(p)$ in the marking $M$ that corresponds to the state $s$ according to the isomorphism $\varphi$ that justifies $A\cong A_N$.

\begin{definition}[Synthesized net]\label{def:synthesized_net}
Every set $\mathcal{R} $ of $\tau$-regions of $A=(S,E,\delta,\iota)$ defines the \emph{synthesized $\tau$-net} $N^{\mathcal{R}}_A=(\mathcal{R}, E, f, M_0)$ with set of places $\mathcal{R}$, set of transitions $E$, flow function $f((sup, sig),e)=sig(e)$ and initial marking $M_0((sup, sig))=sup(\iota)$ for all $(sup, sig)\in \mathcal{R}$ and all $e\in E$.
\end{definition}

To make sure that a synthesized net $N$ realizes the behavior of a TS exactly, distinct states $s$ and $s'$ of $A$ must correspond to different markings $M$ and $M'$ of the net.
Moreover, the firing of a transition $e$ needs to be inhibited at a marking $M$, when the event $e$ does not occur at the state $s$ that corresponds to $M$ by the isomorphism $\varphi$.
This is stated by so-called separation atoms and separation properties.

\begin{definition}[$\tau$-State Separation]\label{def:state_separation}
Let $\tau$ be a type of nets and $A=(S,E,\delta,\iota)$ a TS.
A pair $(s, s')$ of distinct states of $A$ defines a \emph{state separation atom} (SSA, for short).
A $\tau$-region $R=(sup, sig)$ \emph{solves} $(s,s')$ if $sup(s)\not=sup(s')$.
The meaning of $R$ is to ensure that $N^{\mathcal{R}}_A$ contains at least one place $R$ such that $M(R)\not=M'(R)$ for the markings $M$ and $M'$ corresponding to $s$ and $s'$, respectively.
If $s\in S$ is a state of $A$ and, for all states $s'\in S$ such that $s'\not=s$, there is a $\tau$-region that solves $(s,s')$ then $s$ is called \emph{$\tau$-solvable}.
If every state of $A$ or, equivalently, every SSA of $A$ is $\tau$-solvable, then $A$ has the \emph{$\tau$-state separation property} ($\tau$-SSP, for short).
\end{definition}

\begin{definition}[$\tau$-Event State Separation]\label{def:event_state_separation}
Let $\tau$ be a type of nets and $A=(S,E,\delta,\iota)$ a TS.
A pair $(e,s)$ of event $e\in E $ and state $s\in S$ where $e$ does not occur at $s$, that is $\neg s\edge{e}$, defines an \emph{event state separation atom} (ESSA atom, for short).
A $\tau$-region $R=(sup, sig)$ \emph{solves} $(e,s)$ if $sig(e)$ is not defined at $sup(s)$ in $\tau$, that is, $\neg \delta_\tau(sup(s), sig(e))$.
The meaning of $R$ is to ensure that there is at least one place $R$ in $N^{\mathcal{R}}_A$ such that $\delta_\tau(M(R),f(R,e))$ is not defined for the marking $M$ that corresponds to $s$ via the isomorphism, that is, $e$ cannot fire in $M$.
If, for all $s\in S$ such that $\neg s\edge{e}$, there is a $\tau$-region that solves $(e,s)$, then $e$ is called \emph{$\tau$-solvable}.
If every event of $A$ or, equivalently, every ESSA of $A$ is $\tau$-solvable, then $A$ has the \emph{$\tau$-event state separation property} ($\tau$-ESSP, for short).
\end{definition}

\begin{definition}[Witness, $\tau$-admissible set]\label{def:witness}
A set $\mathcal{R}$ of $\tau$-region is a ($\tau$-) \emph{witness} of the $\tau$-(E)SSP of $A$ if it contains for every (E)SSA a $\tau$-region that solves it.
If $A$ has the $\tau$-SSP and the $\tau$-ESSP, then $A$ is called $\tau$-solvable.
A set $\mathcal{R}$ that is a witness of both the $\tau$-SSP and the $\tau$-ESSP of $A$ is called $\tau$-\emph{admissible}.
\end{definition}

The following lemma, borrowed from \cite[p.163]{DBLP:series/txtcs/BadouelBD15}, summarizes the already implied connection between the existence of $\tau$-admissible sets of $A$ and (the solvability of) $\tau$-synthesis:
\begin{lemma}[\cite{DBLP:series/txtcs/BadouelBD15}]\label{lem:admissible}
Let $A$ be a TS and $\tau$ a type of nets.
The reachability graph $A_N$ of a $\tau$-net $N$ is isomorphic to $A$ if and only if there is a $\tau$-admissible set $\mathcal{R}$ of $A$ such that $N=N^{\mathcal{R}}_A$.
\end{lemma}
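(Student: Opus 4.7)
The plan is to establish both directions by exhibiting the candidate isomorphism $\varphi: S(A)\to RS(N)$ explicitly and showing that the SSP and ESSP are precisely what is needed for it to be a bijection that commutes with edges.

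For the ``only if'' direction, I assume an isomorphism $\psi: A \to A_N$ with $N=(P,T,M_0,f)$. To every place $p\in P$ I associate the pair $R_p=(sup_p,sig_p)$ defined by $sup_p(s)=\psi(s)(p)$ and $sig_p(e)=f(p,e)$. A direct check using the firing rule from Definition~\ref{def:tau_nets} shows that whenever $s\edge{e}s'$ is an edge of $A$, the edge $sup_p(s)\ledge{sig_p(e)}sup_p(s')$ is present in $\tau$, so $R_p$ is a $\tau$-region of $A$. Put $\mathcal{R}=\{R_p\mid p\in P\}$. Injectivity of $\psi$ guarantees for any two distinct states $s,s'$ the existence of some place $p$ with $\psi(s)(p)\neq \psi(s')(p)$, i.e.\ an $R_p$ solving the SSA $(s,s')$; thus $\mathcal{R}$ witnesses the $\tau$-SSP. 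Analogously, if $\neg s\edge{e}$ in $A$, then $e$ is not enabled at $\psi(s)$ in $N$, so some place $p$ satisfies $\neg \delta_\tau(\psi(s)(p),f(p,e))$, providing an $R_p$ that solves the ESSA $(e,s)$. Identifying each place $p$ of $N$ with the corresponding region $R_p$, which is exactly the identification prescribed by Definition~\ref{def:synthesized_net}, yields $N=N^{\mathcal{R}}_A$.

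For the ``if'' direction, I assume $\mathcal{R}$ is $\tau$-admissible, set $N=N^{\mathcal{R}}_A$, and define $\varphi: S(A)\to RS(N)$ by $\varphi(s)(R)=sup(s)$ for every $R=(sup,sig)\in \mathcal{R}$. The region condition, combined with a straightforward induction on the length of a path from $\iota_A$ to $s$, shows that $\varphi(s)$ is reachable in $N$ and that $\varphi$ is a morphism: whenever $s\edge{e}s'$ in $A$, the transition $e$ is enabled at $\varphi(s)$ and fires to $\varphi(s')$. The $\tau$-SSP forces $\varphi$ to be injective, and the $\tau$-ESSP forbids spurious edges in $A_N$, because if $e$ were enabled at $\varphi(s)$ although $\neg s\edge{e}$ in $A$, then every $R\in \mathcal{R}$ would satisfy that $\delta_\tau(sup(s),sig(e))$ is defined, contradicting the existence of a region solving $(e,s)$. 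Surjectivity of $\varphi$ onto $RS(N)$ then follows by a dual induction on the length of a firing sequence starting in $M_0=\varphi(\iota_A)$, using the morphism property to transfer each firing to a corresponding path in $A$.

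Since the entire argument is essentially bookkeeping between places of $N$ and regions of $A$, there is no deep obstacle; the only point that requires care is handling the two shapes of signature allowed by Definition~\ref{def:region}, namely the consumption/production pairs $(m,n)$ for $\tau_{PT}^b$-style events and the modular values $|sig(e)|$ for the group-valued events in $E_{\tau_{\mathbb{Z}PT}^b}$. However, because the region condition is phrased uniformly in terms of $\delta_\tau$, all five types $\tau_{PT}^b$, $\tau_{PPT}^b$, $\tau_{\mathbb{Z}PT}^b$, $\tau_{\mathbb{Z}PPT}^b$, $\tau_{R\mathbb{Z}PT}^b$ considered in the paper are treated by exactly the same argument.
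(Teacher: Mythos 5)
The paper does not prove this lemma at all: it is imported verbatim from \cite[p.~163]{DBLP:series/txtcs/BadouelBD15}, so there is no in-paper argument to compare against. Your proof is the standard place/region correspondence and is correct in both directions (regions $R_p$ from places via the isomorphism, and the map $\varphi(s)(R)=sup(s)$ with SSP giving injectivity, ESSP excluding spurious enabledness, and the two inductions giving reachability and surjectivity). The only point worth a footnote is that in the only-if direction two distinct places of $N$ may induce the identical region, in which case $N=N^{\mathcal{R}}_A$ holds only up to removal of duplicate places; this is the usual reading of the cited statement and does not affect anything in the paper.
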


\begin{example}\label{ex:admissible_set}
Let $\tau\in \{\tau_{PPT}^b,\tau_{PT}^b \mid b\in \mathbb{N}^+ \}$.
The TS $A_1$ of Figure~\ref{fig:admissible_set} has the $\tau$-ESSP and the $\tau$-SSP:
The region $R_1=(sup_1,sig_1)$, which is defined by $sup_1(s_0)=sup_1(s_2)=1$ and $sup_1(s_1)=sup_1(s_3)=0$ and $sig(a)=(1,0)$ and $sig(b)=(0,0)$, solves the ESSA $(a,s_1)$ and $(a,s_3)$ as well as the SSA $(s_0,s_1)$ and $(s_0,s_3)$ and $(s_2,s_1)$ and $(s_2,s_3)$.
Moreover, the region $R_2=(sup_2,sig_2)$, which is defined by $sup_2(s_0)=sup_2(s_1)=1$, $sup_2(s_2)=sup_2(s_3)=0$, $sig(a)=(0,0)$ and $sig(b)=(1,0)$ solves the remaining ESSA $(b,s_2)$ and $(b,s_3)$ as well as the SSA $(s_0,s_2)$ and $(s_1,s_3)$ of $A_1$.
Since $R_1$ and $R_2$ solve all SSA and ESSA, $\mathcal{R}=\{R_1,R_2\}$ is a $\tau$-admissible set.
Figure~\ref{fig:admissible_set} sketches the synthesized net $N_1=N^{\mathcal{R}}_{A_1}$, where $(0,0)$-labeled flow edges are omitted, and its reachability graph $A_{N_1}$.
The isomorphism $\varphi$ between $A_1$ and $A_{N_1}$ is given by $\varphi(s_0)=11$, $\varphi(s_1)=01$, $\varphi(s_2)=10$ and $\varphi(s_3)=00$.
\end{example}

\begin{figure}[H]
\vspace*{-7mm}
\begin{center}
\begin{tikzpicture}[new set = import nodes]

\begin{scope}
\node () at (-2.75,0) {$ A_2:$};
\node (s2) at (0,0) {\nscale{$s_2$}};
\node (s0) at (-1.4, -0.75) {\nscale{$s_0$}};
\node (s1) at (-1.4,0.75) {\nscale{$s_1$}};
\draw[-latex](-2,-0.75)to node[]{}(s0);

\path (s0) edge [->, bend left =20] node[pos=0.4,left] {\escale{$a$}} (s1);
\path (s1) edge [->, bend left =20] node[pos=0.6,above] {\escale{$a$}} (s2);
\path (s2) edge [->, bend left =20] node[pos=0.4,below] {\escale{$a$}} (s0);
\end{scope}

\begin{scope}[xshift=3.75cm, node distance=1.5cm,bend angle=45,auto]
\node (T) at (-1.75,0) {$N_2:$};
\node [place] (R1)[label=left: $R$, node distance =1.5cm]{};
\node [transition] (a)[right of =R1]{$a$}
	 edge [pre, above] node {$\nscale{1}$}  (R1);

\end{scope}
\begin{scope}[xshift=9.75cm,nodes={set=import nodes}]
\node () at (-2.75,0) {$ A_{N_2}:$};
\node (s2) at (0,0) {\nscale{$2$}};
\node (s0) at (-1.4, -0.75) {\nscale{$0$}};
\node (s1) at (-1.4,0.75) {\nscale{$1$}};
\draw[-latex](-2,-0.75)to node[]{}(s0);

\path (s0) edge [->, bend left =20] node[pos=0.4,left] {\escale{$a$}} (s1);
\path (s1) edge [->, bend left =20] node[pos=0.6,above] {\escale{$a$}} (s2);
\path (s2) edge [->, bend left =20] node[pos=0.4,below] {\escale{$a$}} (s0);
\end{scope}
\end{tikzpicture}
\end{center}\vspace*{-3mm}
\caption{The TS $A_2$, the $\tau_{\mathbb{Z}PPT}^2$-Net $N_2$ and the reachability graph $A_{N_2}$ of $N_2$.}\label{fig:admissible_set_2}\vspace*{-2mm}
\end{figure}
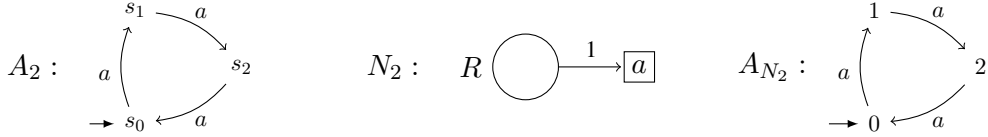

\begin{example}\label{ex:admissible_set_2}
The TS $A_2$ of Figure~\ref{fig:admissible_set_2} has no ESSA, since the only event $a$ occurs at every state of $A_2$.
Consequently, $A_2$ has the $\tau$-ESSP for all types of nets.
However, $A_2$ has the SSA $(s_0,s_1), (s_0,s_2)$ and $(s_1,s_2)$.
If $\tau\in \{\tau_{PPT}^b,\tau_{PT}^b \mid b\in \mathbb{N}^+ \}$, then neither of these atoms is $\tau$-solvable, since every $\tau$-region $R=(sup, sig)$ of $A_2$ satisfies $sup(s_0)=sup(s_0)-2sig^-(a)+2sig^+(a)$, which implies $sig(a)=(0,0)$ and thus $sup(s_0)=sup(s_1)=sup(s_2)$.
Nevertheless, if $\tau\in \{\tau_{\mathbb{Z}PPT}^b,\tau_{\mathbb{Z}PT}^b \mid b\geq 2 \}$, then $A_2$ has the $\tau$-SSP, since the following $\tau$-Region $R=(sup, sig)$ solves all SSA in one blow:
$sup(s_0)=0$, $sup(s_1)=1$ and $sup(s_2)=2$ and $sig(a)=1$.
Since $A_2$ has also the $\tau$-ESSP, $\mathcal{R}=\{R\}$ is a $\tau$-admissible set of $A_2$.

Figure~\ref{fig:admissible_set_2} sketches the synthesized net $N_2=N^{\mathcal{R}}_{A_2}$ and its reachability graph $A_{N_2}$.
This example also shows that the group-extended types $\tau_{\mathbb{Z}PPT}$ and $\tau_{\mathbb{Z}PT}$ are strictly more powerful than the types $\tau_{PPT}$ and $\tau_{PT}$.
\end{example}

A purpose of this paper is to characterize the computational complexity of $\tau$-\textsc{Synthesis} for all introduced $b$-bounded types of nets completely.
Since the corresponding complexity classes are defined for decision problems, we restrict our investigations to the decision version of $\tau$-\textsc{Synthesis} that is called $\tau$-\textsc{Solvability}.
By Lemma~\ref{lem:admissible}, there is a $\tau$-admissible set $\mathcal{R}$ of $A$ if and only if there is a $\tau$-net $N$ whose reachability graph is isomorphic to $A$.
This allows us to formulate the solvability problem for $\tau$-nets as follows:

\smallskip\noindent
\fbox{\begin{minipage}[t][1.7\height][c]{0.97\textwidth}
\begin{decisionproblem}
  \problemtitle{\textsc{$\tau$-Solvability}}
  \probleminput{A TS $A=(S,E,\delta, \iota)$.}
  \problemquestion{Does there exist a $\tau$-admissible set $\mathcal{R}$ of $A$?}
\end{decisionproblem}
\end{minipage}}
\bigskip

Although we are mainly interested in synthesis, the $\tau$-SSP and the $\tau$-ESSP are also interesting on their own.
This is because, for example, an algorithm that decides in polynomial time whether $A$ has the $\tau$-SSP or the $\tau$-ESSP could serve as a pre-synthesis method, which rejects inputs that does not have the property in question.
This leads to the following decision problems:

\smallskip\noindent
\fbox{\begin{minipage}[t][1.7\height][c]{0.97\textwidth}
\begin{decisionproblem}
  \problemtitle{\textsc{$\tau$-SSP}}
  \probleminput{A TS $A=(S,E,\delta, \iota)$.}
  \problemquestion{Does there exist a witness $\mathcal{R}$ for the $\tau$-SSP of $A$?}
\end{decisionproblem}
\end{minipage}}
\bigskip

\noindent
\fbox{\begin{minipage}[t][1.7\height][c]{0.97\textwidth}
\begin{decisionproblem}
  \problemtitle{\textsc{$\tau$-ESSP}}
  \probleminput{A TS $A=(S,E,\delta, \iota)$.}
  \problemquestion{Does there exist a witness $\mathcal{R}$ for the $\tau$-ESSP of $A$?}
\end{decisionproblem}
\end{minipage}}
\bigskip

In~\cite{DBLP:journals/tcs/BadouelBD97}, it was originally shown that \textsc{$\tau_{PPT}^1$-Solvability} (there referred to as \emph{elementary net synthesis}) is NP-complete.
In~\cite{DBLP:conf/apn/TredupRW18,DBLP:conf/concur/TredupR18}, we have shown that this remains true even for strongly restricted inputs and applies also to \textsc{$\tau_{PPT}^1$-SSP} and \textsc{$\tau_{PPT}^1$-ESSP}.
Moreover, the type $\tau_{\mathbb{Z}PPT}^1$ coincides with Schmitt's type (\emph{flip-flop nets}) for which the considered decision problems are tractable~\cite{DBLP:conf/stacs/Schmitt96}.
In \cite[p.~619]{DBLP:conf/tamc/TredupR19}, this characterization was found to be true for $\tau_{\mathbb{Z}PT}^1$ (there referred to as the Boolean type of nets $\tau=\{\nop,\inp,\out,\used,\swap\}$) as well.
In this paper, we complete the complexity characterization of $\tau$-\textsc{Solvability}, $\tau$-\textsc{SSP} and $\tau$-\textsc{ESSP} for all introduced $b$-bounded types of nets and all $b\in \mathbb{N}^+$.
(Observe that the problems are trivial if $b=0$.)
Figure~\ref{fig:overview} provides an overview over our findings and shows, depending on $\tau$ and $b$, which of the problems are NP-complete (NPC) and which are solvable in polynomial time (P).

\begin{figure}[H]
\centering
\newcommand{\tEntry}[3]{
\fill[#3] (#1) +(-0.5,-0.25) rectangle +(0.5,0.25);
\node at (#1) {#2};
}
\begin{tikzpicture}[scale=1.3]
\tikzstyle{P} = [black!32]
\tikzstyle{NPC}=[black!08]
\tEntry{-3.25,1}{Bound }{white}
\tEntry{-1.625,1}{Problem }{white}
\tEntry{0,1}{$\tau_{PPT}^b$}{white}
\tEntry{1,1}{$\tau_{PT}^b$}{white}
\tEntry{2,1}{$\tau_{\mathbb{Z}PPT}^b$}{white}
\tEntry{3,1}{$\tau_{\mathbb{Z}PT}^b$}{white}
\tEntry{4,1}{$\tau_{R\mathbb{Z}PT}^b$}{white}
%
\foreach \j in {1} {\tEntry{-3.25,-1.5*\j+1.5}{$b=1$}{white}}
\foreach \j in {2} {\tEntry{-3.25,-1.5*\j+1.5}{$b\geq 2$}{white}}
\foreach \j in {1,...,2} {\tEntry{-1.625,-1.5*\j+2}{\textsc{SSP}}{white}}
\foreach \j in {1,...,2} {\tEntry{-1.625,-1.5*\j+1.5}{\textsc{ESSP}}{white}}
\foreach \j in {1,...,2} {\tEntry{-1.625,-1.5*\j+1}{\textsc{Solvability}}{white}}
%
\foreach \j in {-4,...,1} {\tEntry{0,0.5*\j}{NPC}{NPC}}
%
\foreach \j in {-4,...,1} {\tEntry{1,0.5*\j}{NPC}{NPC}}
\foreach \j in {-2,...,1} {\tEntry{2,0.5*\j}{P}{P}}
\foreach \j in {-4,...,-3} {\tEntry{2,0.5*\j}{NPC}{NPC}}
\foreach \j in {-2,...,1} {\tEntry{3,0.5*\j}{P}{P}}
\foreach \j in {-4,...,-3} {\tEntry{3,0.5*\j}{NPC}{NPC}}
\foreach \j in {-4,...,1} {\tEntry{4,0.5*\j}{P}{P}}
\tikzstyle{dottedline} = [black!64,dotted]
\tikzstyle{boldline} = [black,line width = 1pt]
\draw[-,boldline] (-3.75,0.75) -- ++(8.25,0);
\foreach \j in {0,-1.5} {
\draw[boldline] (-3.75,\j-0.75) -- ++(8.25,0);
\draw[dottedline] (-2.5,\j-0.25) -- ++(7,0);
\draw[dottedline] (-2.5,\j+0.25) -- ++(7,0);
}
\draw[-,boldline] (-2.75,1.25) -- ++(0,-3.5);
\foreach \i in {0,...,4} {\draw[boldline] (\i-0.5,1.25) -- ++(0,-3.5);}
\end{tikzpicture}
\caption{Overview of the computational complexity of \textsc{$\tau$-Solvability}, \textsc{$\tau$-SSP} and \textsc{$\tau$-ESSP} for all $\tau\in \{\tau_{PPT}^b,\tau_{PT}^b, \tau_{\mathbb{Z}PPT}^b, \tau_{\mathbb{Z}PT}^b,\tau_{R\mathbb{Z}PT}^b\}$ and all $b\in \mathbb{N}^+$.}
\label{fig:overview}
\end{figure}

In the following, if not explicitly stated otherwise, for all $\tau\in \{\tau_{PT}^b,\tau_{PPT}^b\}$, we let $b\in \mathbb{N}^+$ and, for all $\tau\in \{\tau_{\mathbb{Z}PT}^b,\tau_{\mathbb{Z}PPT}^b\}$, we let $2\leq b\in \mathbb{N}$ be arbitrary but fixed, since the case $b=1$ is already solved for the latter.
The observations of the next lemma are used to simplify our proofs:

\begin{lemma}\label{lem:observations} Let $\tau \in \{\tau_{PPT}^b, \tau_{PT}^b, \tau_{\mathbb{Z}PT}^b, \tau_{\mathbb{Z}PPT}^b\}$ and $A=(S,E,\delta,\iota)$ be a TS.
\begin{enumerate}
\item\label{lem:sig_summation_along_paths}
Two mappings $sup: S\longrightarrow S_\tau$ and $sig: E\longrightarrow E_\tau$ define a $\tau$-region of $A$ if and only if for every directed labeled path $q_0\edge{e_1}\dots\edge{e_m}q_m$ of $A$ holds $sup(q_{i})=sup(q_{i-1})-sig^-(e_i)+sig^+(e_i)+\vert sig(e_i)\vert$ for all $i\in \{1,\dots, \ell\}$, where this equation is to consider modulo $(b+1)$.
In particular, every region $(sup, sig)$ is implicitly completely defined by $sig$ and $sup(\iota)$.
\item\label{lem:absolute_value}
If $s_{0}, s_{1},\dots, s_{b}\in S$, $e\in E$ and $s_{0}\edge{e} \dots \edge{e} s_b$ then a $\tau$-region $(sup, sig)$ of $A$ satisfies $sig(e)= (m,n)$ with $m\not=n$ if and only if $(m,n) \in \{(1,0),(0,1)\}$.
If $sig(e)=(0,1)$ then $sup(s_{0})=0$ and $sup(s_b)=b$.
If $sig(e)=(1,0)$ then $sup(s_0)=b$ and $sup(s_b)=0$.
\end{enumerate}
\end{lemma}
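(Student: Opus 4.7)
The plan is to treat both parts as direct unpackings of the region definition, with Part~2 reducing to a range-argument in $\{0,\dots,b\}$.

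For Part~1, I would first observe that for each of the four types $\tau_{PT}^b, \tau_{PPT}^b, \tau_{\mathbb{Z}PT}^b, \tau_{\mathbb{Z}PPT}^b$, the edge relation $x\ledge{y}x'$ in $\tau$ is, by a case split on whether $y$ is a pair event or a singleton event, exactly $x' \equiv x - sig^-(y) + sig^+(y) + |sig(y)| \pmod{b+1}$: pair events $(m,n)$ contribute $-m+n$ in integers (with both sides staying inside $\{0,\dots,b\}$) while singleton events $e\in\{0,\dots,b\}$ contribute $+e$ modulo $b+1$. This makes the ``length one'' instance of the path equation equivalent to the defining requirement on a $\tau$-region. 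A straightforward induction on the path length then lifts the equation to arbitrary paths, and the converse direction is immediate because edges are just paths of length one. The ``in particular'' clause drops out since every state $q$ is reachable from $\iota$ and iterated substitution along any such path expresses $sup(q)$ entirely in terms of $sup(\iota)$ and the signature values along the path.

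For Part~2, I would apply Part~1 to the given path $s_0 \edge{e}\dots\edge{e} s_b$. Since $sig(e)=(m,n)$ is a pair, $|sig(e)|=0$, and the equation collapses to $sup(s_i) = sup(s_0) + i(n-m)$ in $\mathbb{Z}$: all intermediate states already lie in $\{0,\dots,b\}$, so no wrap-around occurs for any of the four types, and the integer and modular readings coincide.

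The main obstacle is to pin down $(m,n)$ exactly rather than only bound $|n-m|$, and I would handle it by a sign split. If $n>m$, the sequence $sup(s_0),\dots,sup(s_b)$ is strictly increasing with constant step $n-m$ and fits in an interval of length $b$ while making $b$ jumps; hence $n-m=1$, $sup(s_0)=0$, and $sup(s_b)=b$. The enabling condition $sup(s_0)\geq m$ (needed for the first firing of $e$) then forces $m=0$, giving $(m,n)=(0,1)$. The case $m>n$ is fully symmetric: the sequence strictly decreases by $m-n=1$ at each step, so $sup(s_0)=b$ and $sup(s_b)=0$, and applying the enabling condition $sup(s_{b-1})\geq m$ at the last firing (where $sup(s_{b-1})=1$) forces $m=1$ and $n=0$. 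The backward direction $(m,n)\in\{(1,0),(0,1)\}\Rightarrow m\neq n$ is trivial.
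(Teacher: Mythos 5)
Your proposal is correct and follows essentially the same route as the paper: Part~1 is a direct unpacking of the definitions plus reachability, and Part~2 uses the same two ingredients — the range constraint $sup(s_i)\in\{0,\dots,b\}$ over $b$ constant steps to force $\vert n-m\vert=1$ and pin the endpoints to $0$ and $b$, and the enabling condition $sup(\cdot)\geq m$ at an endpoint to determine $(m,n)$ exactly. The only cosmetic difference is that you split on the sign of $n-m$ first and derive the step size within each branch, whereas the paper bounds $\vert n-m\vert$ first; the substance is identical.
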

\begin{proof}
(\ref{lem:sig_summation_along_paths}):
The first claim follows directly from the definitions of $\tau$ and $\tau$-regions.
For the second claim, we observe that every state $s\in S$ is reachable by a directed labeled path $q_0\edge{e_1}\dots\edge{e_m}q_m$, where $q_0=\iota$ and $q_m=s$.
Thus, if $sup(\iota)$ and a valid signature $sig$ are given, then, by the first claim, we get $sup(s)$ by $sup(s)=sup(\iota)+ \sum_{i=1}^{m} (-sig^-(e_i)+sig^+(e_i)+\vert sig(e_i)\vert)$.

(\ref{lem:absolute_value}):
The \textit{If}-direction is trivial.
For the \textit{Only-if}-direction we show that the assumption $(m,n)\not\in \{(1,0),(0,1)\}$ yields a contradiction.

\medskip
By (\ref{lem:sig_summation_along_paths}), we have that $sup(s_b)=sup(s_0) + b\cdot(n-m)$.
If $\vert n-m\vert > 1$, then either $b\cdot(n-m) < - b$ or $b\cdot(n-m) > b$;
since $0\leq sup(s_0) \leq b$, the first case contradicts $sup(s_b)\geq 0$, and the latter case contradicts $\sup(s_b)\leq b$, respectively.
Hence, if $n\not=m$ then $\vert n-m\vert =1$.
For a start, we show that $ m > n$ implies $m=1$ and $n=0$.
By $n \leq m-1$ and $sup(s_0)\leq b$ we obtain the estimation
\[
sup(s_{b-1}) = sup(s_0) +(b-1)(n-m)\ \leq \ b + (b-1)(m-1-m) = 1
\]
By $n < m \leq sup(s_{b-1})\leq 1$ we have $(m,n)=(1,0)$.
Similarly, we obtain that $(m,n)=(0,1)$ if $m < n$.
Hence, if $sig(e)=(m,n)$ and $n\not=m$ then $sig(e) \in \{(1,0),(0,1)\}$.

The second statement follows directly from (\ref{lem:sig_summation_along_paths}).
\end{proof}


The following lemma shows that if $A$ is a linear TS and $\mathcal{R}$ is a witness of the $\tau$-ESSP of $A$, then $\mathcal{R}$ witnesses also the $\tau$-SSP of $A$.
In particular, this implies that a linear TS $A$ is $\tau$-solvable if and only if it has the $\tau$-ESSP.
Notice that Lemma~\ref{lem:essp_implies_ssp} provides a very general result, since its statement is independent from the actual choice of $\tau$.

\begin{lemma}[ESSP implies SSP for Linear TS]\label{lem:essp_implies_ssp}
Let $\tau$ be a type of nets and let $ A= s_0 \edge{e_1}  \dots  \edge{e_n} s_n$ be a a linear TS and let $\mathcal{R}$ be a set of $\tau$-regions of $A$.
If $\mathcal{R}$ is a witness of the $\tau$-ESSP of $A$, then $\mathcal{R}$ witnesses also the $\tau$-SSP of $A$.
\end{lemma}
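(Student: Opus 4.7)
\bigskip
\noindent\textbf{Proof plan.} Fix an SSA $(s_i,s_j)$ with $i<j$; the goal is to exhibit a region in $\mathcal{R}$ that separates $s_i$ from $s_j$. The idea is to locate, along the two ``parallel'' futures of $s_i$ and $s_j$, a point where the future of $s_j$ runs out of matching transitions, so that linearity produces a natural ESSA whose solving region in $\mathcal{R}$ will also separate $s_i$ from $s_j$ after a simple backward propagation.

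\smallskip
\noindent\textbf{Step 1: locate the ESSA.} The word $w=e_{i+1}e_{i+2}\cdots e_n$ has length $n-i$ and fires from $s_i$; since $s_j$ can fire at most $n-j<n-i$ events, $w$ cannot fire from $s_j$. Let $r^{*}\in\{0,1,\ldots,n-j\}$ be the smallest index such that $e_{i+1+r^{*}}$ is \emph{not} defined at $s_{j+r^{*}}$ (this $r^{*}$ exists because at $s_n$ no event fires). For every $r<r^{*}$ the event $e_{i+1+r}$ fires at $s_{j+r}$; since $A$ is $1$-grade and deterministic, $s_{j+r}$ has at most one outgoing label, so $e_{i+1+r}=e_{j+1+r}$ and the firing lands at $s_{j+1+r}$. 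Thus both paths $s_i\edge{e_{i+1}}\cdots\edge{e_{i+r^{*}}}s_{i+r^{*}}$ and $s_j\edge{e_{i+1}}\cdots\edge{e_{i+r^{*}}}s_{j+r^{*}}$ consist of the same event sequence, and the pair $(e_{i+1+r^{*}},\,s_{j+r^{*}})$ is an ESSA of $A$.

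\smallskip
\noindent\textbf{Step 2: separation at index $r^{*}$.} By hypothesis $\mathcal{R}$ witnesses the $\tau$-ESSP, so there is $R=(sup,sig)\in\mathcal{R}$ solving this ESSA, i.e., $\delta_\tau(sup(s_{j+r^{*}}),sig(e_{i+1+r^{*}}))$ is undefined. On the other hand, $s_{i+r^{*}}\edge{e_{i+1+r^{*}}}s_{i+r^{*}+1}$ is an edge of $A$, so $\delta_\tau(sup(s_{i+r^{*}}),sig(e_{i+1+r^{*}}))=sup(s_{i+r^{*}+1})$ is defined. Consequently $sup(s_{i+r^{*}})\neq sup(s_{j+r^{*}})$.

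\smallskip
\noindent\textbf{Step 3: backward propagation to indices $i,j$.} For each $r\in\{r^{*}-1,r^{*}-2,\ldots,0\}$, both edges $s_{i+r}\edge{e_{i+1+r}}s_{i+r+1}$ and $s_{j+r}\edge{e_{i+1+r}}s_{j+r+1}$ are present (Step 1). Since $\delta_\tau$ is a (partial) function, $sup(s_{i+r})=sup(s_{j+r})$ would force $sup(s_{i+r+1})=\delta_\tau(sup(s_{i+r}),sig(e_{i+1+r}))=\delta_\tau(sup(s_{j+r}),sig(e_{i+1+r}))=sup(s_{j+r+1})$. The contrapositive, applied iteratively starting from Step 2, yields $sup(s_i)\neq sup(s_j)$. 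Hence $R$ solves $(s_i,s_j)$, and since the SSA was arbitrary, $\mathcal{R}$ is also a witness of the $\tau$-SSP.

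\smallskip
\noindent\textbf{Expected main obstacle.} The argument is short once one realizes that linearity plus determinism forces the two futures to coincide event-by-event until the first failure. The only delicate point is Step 1, where one must verify that the $1$-grade determinism really identifies $e_{i+1+r}$ with $e_{j+1+r}$ even if events repeat along the linear chain; the rest (Steps 2 and 3) is a direct application of the definition of region together with functionality of $\delta_\tau$, and works uniformly for every type $\tau$ without any appeal to the specific $b$-bounded types of the paper.
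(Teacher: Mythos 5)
Your proof is correct and rests on exactly the same two observations as the paper's: a region solving the ESSA that arises where the suffix of $s_i$ stops firing from $s_j$ must assign those two states different supports, and the functionality of $\delta_\tau$ propagates equality (hence, contrapositively, inequality) of supports along the matched parallel edges of the linear TS. The only difference is presentational --- you argue directly by locating the divergence point and propagating the separation backward, whereas the paper assumes an unsolvable SSA with maximal second index and propagates equality of supports forward to contradict that maximality.
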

\begin{proof}
Let $\mathcal{R}$ be a witness of the $\tau$-ESSP of $A$.
Assume that there is an SSA that can not be solved by a region of $\mathcal{R}$.
Then there is an SSA $\alpha=(z_{i_j}, z_{i_k})$ of $A$, where $i_j,i_k\in \{0,\dots, n\}$, such that the state $z_{i_k}$ has the maximum index among all states of $A$ that participate at SSA of $A$ that can not be solved by a region of $\mathcal{R}$:
if $(z_{i_\ell}, z_{i_m})$ is an SSA of $A$ that can not be solved by regions of $\mathcal{R}$, then $i_\ell\leq i_k$ and $i_m\leq i_k$.
In particular, this implies $i_j < i_k$.
Since $i_j < i_k$, there is the edge $z_{i_j}\ledge{e_{i_j+1}}z_{i_j+1}$ in $A$.
Since $\alpha$ is not solvable by regions of $\mathcal{R}$, we have $sup(z_{i_j})=sup(z_{i_k})$ for all $(sup, sig)\in \mathcal{R}$.
This implies, that the event $e_{i_j+1}$ occurs at $z_{i_k}$, since the ESSA  $(e_{i_j+1}, z_{i_k})$ would not be solvable otherwise:
$sup(z_{i_j})\lledge{sig(e_{i_j+1})}$ and $\neg sup(z_{i_k})\lledge{sig(e_{i_j+1})}$ implies the contradiction $sup(z_{i_j})\not=sup(z_{i_k})$.
Hence, $z_{i_k}\ledge{e_{i_j+1}}z_{i_k+1}$ is an edge in $A$.
Since $sup(z_{i_j})=sup(z_{i_k})$ for all $(sup, sig)\in \mathcal{R}$ and since $\delta_\tau$ is a function, by $z_{i_j}\ledge{e_{i_j+1}}z_{i_j+1}$  and $z_{i_k}\ledge{e_{i_j+1}}z_{i_k+1}$, we get $sup(z_{i_j+1})=sup(z_{i_k+1})$ for all $(sup, sig)\in \mathcal{R}$.
In particular, the SSA $(z_{i_j+1}, z_{i_k+1})$ is not solvable by regions of $\mathcal{R}$.
Since $i_k < i_k+1$, this contradicts the choice of  $\alpha$.
Consequently, $\alpha$ does not exist and $\mathcal{R}$ witnesses the $\tau$-SSP of $A$.
\end{proof}

\section{The concept of unions}\label{sec:unions}%

For our reductions, we use the technique of \emph{component design} \cite{DBLP:books/fm/GareyJ79}.
Every implemented constituent is a TS (in the context of the reduction also referred to as gadget) that locally ensures the satisfaction of some constraints.
Commonly, all constituents are finally joined together in a target instance (TS) such that all required constraints are properly globally translated.
However, the concept of unions saves us the need to actually create the target instance:

\begin{definition}[Union]
If $A_0=(S_0,E_0,\delta_0,\iota_0), \dots, A_n=(S_n,E_n,\delta_n,\iota_n)$ are TS with pairwise disjoint states (but not necessarily disjoint events) then we call $U(A_0, \dots, A_n)$ their \emph{union} with set of states $S(U)=\bigcup_{i=0}^n S_i$ and set of events $E(U)=\bigcup_{i=0}^n E_i$.
\end{definition}

Let $\tau=(Z_\tau,E_\tau,\delta_\tau)$ be a type of nets and $U=U(A_0, \dots, A_n)$ a union, where $A_i=(S_i,E_i,\delta_i,\iota_i)$ for all $i\in\{0,\dots,n\}$.
The concepts of SSA, ESSA, $\tau$-regions, $\tau$-SSP, and $\tau$-ESSP as defined in the preliminaries are transferred to $U$ as follows:

\begin{definition}[Region of a Union]
A pair $(sup, sig)$ of mappings $sup: S(U)\rightarrow S_\tau$ and $sig:E(U)\rightarrow E_\tau$ is called a \emph{$\tau$-region} (of $U$), if $s\edge{e}s'\in A_i$ implies $sup(s)\ledge{sig(e)}sup(s')\in \tau$ for all $i\in \{0,\dots, n\}$.
\end{definition}

\begin{definition}[$\tau$-State Separation in Unions]
A pair $(s,s')$ of distinct states $s, s' \in S(U)$ of the \emph{same} TS $A_i$, where $i\in \{0,\dots, n\}$, defines an SSA of $U$.
A  $\tau$-region $(sup,sig)$ of $U$ solves $(s,s')$, if $sup(s) \not= sup(s')$.
$U$ has the $\tau$-SSP, if all of its SSA are $\tau$-solvable.
\end{definition}

\begin{definition}[$\tau$-Event State Separation in Unions]
A pair $(e,s)$ of event $e\in E(U)$ and state $s\in S(U)$ such that $\neg s\edge{e}$ defines an ESSA of $U$.
A $\tau$-region $(sup, sig)$ of $U$ solves it, if $\neg sup(s)\edge{e}$.
$U$ has the $\tau$-ESSP if all of its ESSA are $\tau$-solvable.
\end{definition}

In the same way, the notion of \emph{witness} and \emph{$\tau$-admissible set} and \emph{$\tau$-solvable} are transferred to unions.
From the perspective of $\tau$-SSP and $\tau$-ESSP, unions are intended to treat a lot of unjoined TS as if they were joined to a TS.
To be able to do so, in the following, we introduce the \emph{linear joining} $LJ(U)$ and the \emph{joining} $J(U)$ of a union $U$ and argue that $LJ(U)$ or $J(U)$ has the $\tau$-(E)SSP if and only if $U$ has the $\tau$-(E)SSP.

\begin{definition}[Linear Joining]
Let $U = U(A_0, \dots, A_n)$ be a union such that, for all $i\in \{0,\dots, n\}$, the TS $A_i=(S_i,E_i,\delta_i,\iota_i)$ is linear and its terminal state is $t_i$ and let $Q=\{q_1,\dots, q_n\}$ be a set of states, which is disjoint with $S(U)$, and $W=\{w_1,\dots, w_n\}$ and $Y=\{y_1,\dots, y_n\}$ be sets of events which are disjoint with $E(U)$.
The \emph{linear joining} of $U$ is the linear TS $LJ(U)=(S(U)\cup Q,E(U)\cup W\cup Y,\delta, \iota_0)$ with transition function $\delta$ that is, for all $e\in E(U)\cup W\cup Y$ and all $s\in S(U)\cup Q$, defined as follows:
\[\delta(s,e)=
\begin{cases}
\delta_i(s,e), &\text{if } s\in S_i \text{ and } e\in E_i \text{ and } i\in \{0,\dots, n\}\\
q_{i+1}, &\text{if } s=t_i \text{ and } e=w_{i+1} \text{ and } i\in \{0,\dots, n-1\}\\
\iota_i, &\text{if } s=q_i \text{ and } e=y_i \text{ and } i\in \{1,\dots, n\}\\
\text{undefined}, &\text{otherwise}
\end{cases}
\]
\end{definition}

\begin{remark}
The linear joining $LJ(U)$ of $U$ can be sketched as follows:
\begin{center}
\begin{tikzpicture}[new set = import nodes]
\begin{scope}[nodes={set=import nodes}]%
		\node (init) at (-1,0) {$LJ(U)=$};
		\foreach \i in {0,...,4} { \coordinate (\i) at (\i*1.4cm,0) ;}
		\node (0) at (0) {$A_0$};
		\node (1) at (1) {\nscale{$q_1$}};
		\node (2) at (2) {$A_1$};
		\node (3) at (3) {\nscale{$q_2$}};
		\node (4) at (4) {};
		\node (5) at (6.2,0) {$\dots$};
		\node (6) at (6.8,0) {};
		\node (7) at (8.2,0) {\nscale{$q_n$}};
		\node (8) at (9.6,0) {$A_n$};
\graph {
	(import nodes);
			0 ->["\escale{$w_1$}"]1->["\escale{$y_1$}"]2 ->["\escale{$w_2$}"]3 ->["\escale{$y_2$}"]4;
			6  ->["\escale{$w_n$}"]7->["\escale{$y_n$}"]8;
};
\end{scope}
\end{tikzpicture}
\end{center}
\end{remark}
%

\begin{definition}[Joining]
Let $U = U(A_0, \dots, A_n)$ be a union of TS $A_i=(S_i,E_i,\delta_i,\iota_i)$ for all $i\in \{0,\dots, n\}$, and let $Q=\{q_0,\dots, q_n\}$ be a set of states, which is disjoint with $S(U)$, and $W=\{w_1,\dots, w_n\}$ and $Y=\{y_0,\dots, y_n\}$ be sets of events, which are disjoint with $E(U)$.
The \emph{joining} of $U$ is the TS $J(U) = (S(U) \cup Q, E(U) \cup W \cup Y , \delta, q_0 )$ with transition function $\delta$ that is, for all $e\in E(U)\cup W\cup Y$ and all $s\in S(U)\cup Q$, defined as follows:
\[\delta(s,e) =
\begin{cases}
\delta_i(s,e), & \text{if } s \in S_i \text{ and } e \in E_i \text{ and }  i\in \{0,\dots, n\}\\
q_{i+1}, & \text{if } s = q_i \text{ and } e=w_{i+1} \text{ and }  i\in \{0,\dots, n-1\}\\
\iota_i, & \text{if } s = q_i \text{ and } e=y_i \text{ and } i\in \{0,\dots, n\}\\
\text{undefined}, &\text{otherwise}
\end{cases}
\]
\end{definition}
\begin{remark}
The joining $J(U)$ of $U$ can be sketched as follows:
\begin{center}
\begin{tikzpicture}
\node (t0) at (0,0) {\nscale{$q_0$}};
\node (t1) at (1.5,0) {\nscale{$q_1$}};
\coordinate (t2) at (3,0) ;
\node (dots_1) (d1) at (3.5,0) {$\dots$} ;
\coordinate (t_n_1) at (4,0) ;
\node (tn) at (5.5,0) {\nscale{$q_n$}};
\node (a0) at (0,-1.1) {$A_0$};
\node (a1) at (1.5,-1.1) {$A_1$};
\coordinate (a2) at (3.2,-1.1) ;
\coordinate (a_n_1) at (4.5,-1.1) ;
\node (an) at (5.5,-1.1) {$A_n$};

\path (t0) edge [->] node[pos=0.5,above] {\escale{$w_1$}} (t1);
\path (t1) edge [->] node[pos=0.5,above] {\escale{$w_2$}} (t2);
\path (t_n_1) edge [->] node[pos=0.5,above] {\escale{$w_n$}} (tn);
\path (t0) edge [->] node[pos=0.5,left] {\escale{$y_0$}} (a0);
\path (t1) edge [->] node[pos=0.5,left] {\escale{$y_1$}} (a1);
\path (tn) edge [->] node[pos=0.5,left] {\escale{$y_n$}} (an);
\end{tikzpicture}
\end{center}
\end{remark}

The following lemma proves the announced functionality of unions.
For technical reasons, we restrict ourselves to unions $U$ where for every event $e\in E(U)$ there is at least one ESSA $(e,s)$ to solve.
The unions of our reductions satisfy this property, which is used to ensure that if $U$ has the $\tau$-ESSP, then $LJ(U)$ and $J(U)$ have the $\tau$-ESSP, too.
Moreover, our reductions ensure that if $\tau\in\{\tau_{PT}^b,\tau_{PPT}^b\}$, then only the linear joining $LJ(U)$ is to consider, and if $\tau\in\{\tau_{\mathbb{Z}PT}^b,\tau_{\mathbb{Z}PPT}^b\}$, then only the joining $J(U)$ is used.
Thus, for the sake of simplicity, the lemma is formulated accordingly.
\begin{lemma}\label{lem:joining}\hfill\break

\vspace*{-7mm}
\begin{enumerate}
\item
Let $U = U(A_0, \dots, A_n)$ be a union of linear TS such that $t_i$ is the terminal state of $A_i=(S_i,E_i,\delta_i,\iota)$ for all $i\in \{0,\dots, n\}$, and  for every event $e\in E(U)$ there is a state $s\in S(U)$ with $\neg s\edge{e}$.
If $\tau\in \{\tau_{PT}^b, \tau_{PPT}^b\}$, then $U$ has the $\tau$-ESSP, respectively the $\tau$-SSP, if and only if $LJ(U)$ has the $\tau$-ESSP, respectively the $\tau$-SSP.

\item
Let $U = U(A_0, \dots, A_n)$ be a union such that $A_i=(S_i,E_i,\delta_i, \iota_i)$ for all $i\in \{0,\dots, n\}$, and for every event $e\in E(U)$ there is a state $s\in S(U)$ with $\neg s\edge{e}$.
If $\tau \in \{\tau_{\mathbb{Z}PT}^b, \tau_{\mathbb{Z}PPT}^b\}$, then $U$ has the $\tau$-ESSP, respectively the $\tau$-SSP, if and only if $J(U)$ has the $\tau$-ESSP, respectively the $\tau$-SSP.
\end{enumerate}
\end{lemma}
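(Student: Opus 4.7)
Both statements have the same structure, so the plan covers them in parallel. The ``if'' direction is straightforward. Suppose $LJ(U)$ (respectively $J(U)$) has the $\tau$-(E)SSP with witness $\mathcal{R}$. For each $R = (sup, sig) \in \mathcal{R}$, the pair obtained by restricting $sup$ to $S(U)$ and $sig$ to $E(U)$ is a $\tau$-region of $U$, because every edge of any $A_i$ is also an edge of the joining with the same label. Since every (E)SSA of $U$ is also an (E)SSA of the joining, the restricted collection witnesses the $\tau$-(E)SSP of $U$.

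For the ``only if'' direction, fix a witness $\mathcal{R}$ of the $\tau$-(E)SSP of $U$ and build a witness $\mathcal{R}'$ of the joining in two stages. In the first stage, every $R=(sup, sig) \in \mathcal{R}$ is extended to a region $\tilde{R}$ of the joining by choosing values $sup(q_i) \in S_\tau$ and signatures $sig(w_i), sig(y_i) \in E_\tau$ for the new states and events. The crucial point is that each fresh event $w_i$ and $y_i$ occurs on exactly one edge of the joining, so its signature can be freely chosen in $E_\tau$ to connect $sup(t_{i-1})$ (resp.\ $sup(q_{i-1})$) to any desired $sup(q_i)$, and then $sup(q_i)$ to $sup(\iota_i)$; for the pure types $\tau_{PPT}^b$ and $\tau_{\mathbb{Z}PPT}^b$ a signature of the form $(c,0)$ or $(0,c)$ suffices, and for the $\mathbb{Z}_{b+1}$-extended types one may alternatively use a single modular event. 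For the ESSP variant, when $R$ solves an ESSA $(e, s_0)$ of $U$, I set $sup(q_j) := sup(s_0)$ for every $j$, so that $\tilde R$ automatically solves the new ESSA $(e, q_j)$ for every $j$; this step is precisely where the hypothesis that every $e \in E(U)$ admits at least one ESSA in $U$ is used.

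In the second stage, auxiliary \emph{component-indicator} regions are added to cover the remaining new separation atoms, namely the cross-component SSA between states of different $A_i$'s (including those involving the $q_j$) and the ESSA of the form $(w_i, s)$ and $(y_i, s)$. For each index $i$, such a region assigns support $1$ to all states of $A_i, A_{i+1}, \dots, A_n$ together with the corresponding new states $q_j$, support $0$ everywhere else, $sig(e) := (0,0)$ for every $e \in E(U)$ (consistent, since the support is constant on each $A_j$), and fresh-event signatures realizing the single $0$-to-$1$ jump at $w_i$. Small variants (support $1$ only on $A_i$, or only on $\{q_i\}$) handle the atoms that single out a particular $q_i$ or a particular $A_i$.

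The principal obstacle is verifying, uniformly across all four types, that the extensions of the first stage and the indicator regions of the second stage are genuinely $\tau$-valid under the ``no impure'' restriction of $\tau_{PPT}^b$ and $\tau_{\mathbb{Z}PPT}^b$. This reduces to a small case analysis: any would-be impure signature $(m,n)$ can be replaced by the equivalent pure signature $(m-\min(m,n), n-\min(m,n))$ or, in the $\mathbb{Z}_{b+1}$-extended types, by a single modular event from $\{0, \dots, b\}$. Combining $\{\tilde R : R \in \mathcal{R}\}$ with the indicator regions then yields the desired witness of the $\tau$-(E)SSP of $LJ(U)$ or $J(U)$, completing the proof.
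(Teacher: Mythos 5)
There is a genuine gap in the ``only-if'' direction of part (1), at the ESSA of the form $(w_i,s)$ where $s$ lies in the component $A_{i-1}$ that immediately precedes the connector edge $t_{i-1}\edge{w_i}q_i$. In $LJ(U)$ the event $w_i$ occurs at $t_{i-1}$, so $(w_i,s)$ is an ESSA for every $s\in S_{i-1}\setminus\{t_{i-1}\}$. Your component-indicator regions are constant on each $A_j$, hence give $sup(s)=sup(t_{i-1})$; but for $\tau_{PT}^b$ and $\tau_{PPT}^b$ any event $(m,n)$ that is enabled at the value $sup(t_{i-1})$ is enabled at every state carrying that same value, so no such region can solve $(w_i,s)$. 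Your stage-1 extensions do not help either, since nothing in their construction forces $sup(s)\neq sup(t_{i-1})$. In other words, solving $(w_i,s)$ requires a region that already separates $s$ from $t_{i-1}$ \emph{inside} $A_{i-1}$ --- a state-separation fact --- whereas the hypothesis of the ESSP transfer only supplies the $\tau$-ESSP of $U$. (Note that this problem does not arise in part (2): in $J(U)$ every fresh event has a fresh state $q_i$ as its unique source.)

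This is exactly the point where the paper uses the linearity assumption of part (1), which your argument never invokes: by Lemma~\ref{lem:essp_implies_ssp}, restricting an ESSP witness of $U$ to the linear TS $A_{i-1}$ yields a witness of the $\tau$-SSP of $A_{i-1}$, so for each $s\in S_{i-1}\setminus\{t_{i-1}\}$ there is a region with $sup(s)\neq sup(t_{i-1})$; this region is then extended so that $sig(w_i)$ is enabled at $sup(t_{i-1})$ but not at $sup(s)$ (e.g.\ $sig(w_i)=(sup(t_{i-1}),0)$ when $sup(s)<sup(t_{i-1})$, and $sig(w_i)=(0,b-sup(t_{i-1}))$ when $sup(s)>sup(t_{i-1})$). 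To repair your proof you must add this step; the remainder of your two-stage construction (restriction for the ``if'' direction, setting $sup(q_j)=sup(s_0)$ to inherit the solution of $(e,q_j)$, and indicator regions for cross-component atoms) matches the paper's argument.
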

\begin{proof}
(1): The \emph{if}-direction is trivial.

\emph{Only-if}:
Let $R=(sup, sig)$ be a $\tau$-region of $U$, which solves an ESSA $(a,z)$ or an SSA $(z,z')$ of $U$.
We can extended $R$ to a $\tau$-region $R'=(sup', sig')$ of $LJ(U)$ that also solves these atoms, by defining $R'$ for all $s\in S(U)\cup Q$ and all $e\in E(U)\cup W\cup Y$ as follows:
\begin{align*}
sup'(s) &= \begin{cases}
sup(s), & \text{if } s \in S(U),\\
sup(z), & \text{if } s \in Q
\end{cases}\\
sig'(e) &= \begin{cases}
sig(e), & \text{if } e \in E(U),\\
(sup(t_i)-sup(z),0) & \text{if }  e = w_{i+1} \text{ and } sup(t_i)  > sup(z) \text{ and } i\in \{0,\dots, n-1\} \\
(0, sup(z)-sup(t_i)) & \text{if }  e = w_{i+1} \text{ and } sup(t_i)  \leq sup(z) \text{ and } i\in \{0,\dots, n-1\} \\
(0, sup(\iota_i)-sup(z)) & \text{if }  e = y_i \text{ and } sup(\iota_i)  > sup(z) \text{ and } i\in \{1,\dots, n\} \\
(sup(z)-sup(\iota_i),0) & \text{if }  e = y_i \text{ and } sup(\iota_i)  \leq sup(z) \text{ and } i\in \{1,\dots, n\} \\
\end{cases}
\end{align*}

Notice that this extension also $\tau$-solves $(e, q_i)$ for all $i\in \{1,\dots, n\}$, since $sup(q_i)=sup(z)$.
Since, for every event $e\in E(U)$ there is an atom $(e,s)$ to solve, this implies that all events of $U$ are $\tau$-solvable in $LJ(U)$.
Moreover, it is easy to see that the connector states $q_1,\dots, q_n$ and the connector events $y_1,\dots, y_n$ are $\tau$-solvable:
If $i\in \{1,\dots, n\}$ is arbitrary but fixed then the following region $R=(sup, sig)$ (by Lemma~\ref{lem:observations}, completely defined) $\tau$-solves $q_i$ and $y_i$:
$sup(\iota_0)=b$;
for all $e\in E(LJ(U))$, if $e=y_i$, then $sig(e)=(0,b)$;
if $e=w_i$, then $sig(e)=(b,0)$;
otherwise $sig(e)=(0,0)$.

So far, we have already proven that if $U$ has the $\tau$-SSP, then $LJ(U)$ has the $\tau$-SSP, too.
Thus, to prove that the $\tau$-ESSP of $U$ implies $\tau$-ESSP of $LJ(U)$, it remains to show that $w_1,\dots, w_n$ are solvable if $U$ has the $\tau$-ESSP.
Let $i\in \{1,\dots, n\}$ be arbitrary but fixed.
The following region $R=(sup, sig)$ solves $(w_i,s)$ for all $s\in S(LJ(U))\setminus S_{i-1}$:
if $i=1$, then $sup(\iota_0)=0$, otherwise $sup(\iota_0)=b$;
for all $e\in E(LJ(U))$, if $i\not=1$ and $e=y_{i-1}$, then $sig(e)=(b,0)$;
if $e=w_i$, then $sig(w_i)=(0,b)$;
otherwise, $sig(e)=(0,0)$.

It remains to argue that $(w_i,s)$ is $\tau$-solvable for all $s\in S_{i-1}$.
Since $U$ has the $\tau$-ESSP, there is a set $\mathcal{R}$ of regions that witnesses the $\tau$-ESSP.
In particular, for every ESSA $(e,s)$ of $A_{i-1}$ there is a region of $\mathcal{R}$ that solves it.
Restricting the corresponding regions to $A_{i-1}$ yields a set of regions that witnesses the $\tau$-ESSP of $A_{i-1}$.
Since $A_i$ is linear, by Lemma~\ref{lem:essp_implies_ssp}, these regions witness also the $\tau$-SSP of  $A_{i-1}$.
Consequently, for every state $s\in S_{i-1}\setminus\{t_{i-1}\}$, there is a region $(sup, sig)\in \mathcal{R}$ such that $sup(s)\not=sup(t_{i-1})$.
We extend this region to a region of $LJ(U)$ that solves $(w_i, s)$ as follows:
Besides of $w_i, q_i$ and $y_i$, the extension of $(sup, sig)$ is defined as $R'$ above;
if $sup(s)>sup(t_{i-1})$, then $sup(q_i)=b$, otherwise $sup(q_i)=0$;
if $sup(q_i)=b$, then $sig(w_i)=(0, b-sup(t_{i-1}))$; otherwise $sig(w_i)=(sup(t_{i-1}), 0)$;
finally, if $sup(q_i)=b$, then $sig(y_i)=(b-sup(\iota_i), 0)$; otherwise $sig(y_i)=(0,sup(\iota_i))$.

(2):
\emph{If}: Again, the \emph{if}-direction is trivial.

\emph{Only-if}:
Let $R=(sup, sig)$ be a $\tau$-region of $U$, which solves an ESSA $(a,z)$ or an SSA $(z,z')$ of $U$.
We can extended $R$ to a $\tau$-region $R'=(sup', sig')$ of $J(U)$ that also solves these atoms, by defining $R'$ for all $s\in S(U)\cup Q$ and all $e\in E(U)\cup W\cup Y$ as follows:
\begin{align*}
sup'(s) &= \begin{cases}
sup(s), & \text{if } s \in S(U),\\
sup(z), & \text{if } s \in Q
\end{cases}\\
sig'(e) &= \begin{cases}
sig(e), & \text{if } e \in E(U),\\
0, & \text{if } e \in W ,\\
(sup(z) - sup(\iota_i),0) & \text{if } e = y_i \text{ and } sup(\iota_i)  < sup(z) \text{ and } i\in \{0,\dots, n\} \\
( 0, sup(\iota_i ) - sup(z)) & \text{if } e = y_i \text{ and } sup(\iota_i)  \geq  sup(z) \text{ and } i\in \{0,\dots, n\}
\end{cases}
\end{align*}

Notice that $R'$ also solves $(a,q_i)$ for all $i\in \{0,\dots, n\}$ as $sup(q_i)=sup(z)$.
Consequently, since there is at least one state $s\in  S(U)$ for every event $e\in E(U)$ such that $(e,s)$ is an ESSA of $U$, the atom $(e,q_i)$ is solvable for every $e\in E(U)$ and every $i\in \{0,\dots, n\}$.
As a result, to prove the $\tau$-(E)SSP for $J(U)$ it remains to argue that the remaining SSA and ESSA at which the states of $Q$ and the events of $W\cup Y$ participate are solvable in $J(U)$.
If $i\in \{0,\dots, n\}$ and $s\in S(J(U))$ and $e\in E(J(U))$ then the following region $(sup, sig)$ simultaneously solves every valid atom $(y_i, \cdot)$, $(q_i,\cdot) $ and $(w_{i+1}, \cdot)$ in $J(U)$ (if the latter exists):
\[
sup(s) =\begin{cases}
	0, & \text{if } s=q_i\\
	b, & \text{otherwise }
	\end{cases}\\
\text{  \text{ } }
sig(e)=\begin{cases}
	(0,b), & \text{if } e = y_i \text{ or } ( i < n \text{ and } e=w_{i+1})\\
	(b,0), & \text{if } 1 \leq i  \text{ and } e=w_{i-1}\\
	0, & \text{ otherwise} \\
	\end{cases}
\]

\vspace*{-7mm}
\end{proof}

\section{NP-completeness results}\label{sec:hardness_results}%

The following theorem is the main contribution of this section:
\begin{theorem}\label{the:hardness_results}
\begin{enumerate}
\item\label{the:hardness_results_1}
If $\tau \in \{\tau_{PT}^b, \tau_{PPT}^b\}$, then $\tau$-\textsc{Solvability} and $\tau$-\textsc{ESSP} and $\tau$-\textsc{SSP} are NP-complete, even when restricted to linear TS.
\item\label{the:hardness_results_2}
Let $\tau \in \{\tau_{\mathbb{Z}PT}^b, \tau_{\mathbb{Z}PPT}^b\}$.
For any fixed $g\geq 2$, $\tau$-\textsc{Solvability} and $\tau$-\textsc{ESSP} are NP-complete, even when restricted to $g$-grade TS.
\end{enumerate}
\end{theorem}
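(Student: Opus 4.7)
I begin by noting that the three problems are all in NP: a certificate is a polynomially sized set $\mathcal{R}$ of $\tau$-regions (each region represented by its support and signature, with $|S_\tau|=b+1$ and $|E_\tau|\in O(b^2)$), and Definitions~\ref{def:state_separation} and~\ref{def:event_state_separation} check in polynomial time that $\mathcal{R}$ witnesses the requested property. The work therefore lies in proving NP-hardness, and for both parts I will reduce from monotone one-in-three 3-SAT.

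Given $\varphi$ with variables $V=\{X_1,\dots,X_n\}$ and clauses $C_1,\dots,C_m$ of three positive literals each, my plan is to build a union $U(\varphi)=U(A_0,\dots,A_{n+m})$ in which $A_0$ serves as a backbone, $A_1,\dots,A_n$ as variable gadgets, and $A_{n+1},\dots,A_{n+m}$ as clause gadgets. In part~(\ref{the:hardness_results_1}) every $A_i$ is chosen linear and the final instance is $LJ(U(\varphi))$; in part~(\ref{the:hardness_results_2}) the constituents are $g$-grade with $g=2$ and the final instance is $J(U(\varphi))$. Lemma~\ref{lem:joining} then reduces the required $\tau$-(E)SSP question on the joining to the same question on the union, so all correctness arguments can be carried out at the level of $U(\varphi)$. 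The core technical device is to embed, for each variable event $X_i$ (and in part~(\ref{the:hardness_results_2}) also for its $\mathbb{Z}_{b+1}$-event variant), a directed path of length $b$ whose edges all carry that event; by Lemma~\ref{lem:observations}(\ref{lem:absolute_value}), every $\tau$-region $(sup,sig)$ must then satisfy $sig(X_i)\in\{(0,0),(1,0),(0,1)\}$, collapsing an arbitrary $b$-bounded signature choice to a Boolean one that represents an assignment of $X_i$. Each clause gadget is designed so that a distinguished ESSA and a distinguished SSA can each be solved by a region only if that region's signatures on the three literals of the clause are exactly one copy of $(1,0)$ (or $(0,1)$) together with two copies of $(0,0)$, directly encoding one-in-three.

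The forward correctness direction takes a one-in-three satisfying assignment $\beta$ and, for every (E)SSA of $U(\varphi)$, exhibits an explicit $\tau$-region whose literal signatures are dictated by $\beta$ and whose supports are computed by the propagation rule of Lemma~\ref{lem:observations}(\ref{lem:sig_summation_along_paths}); the connector events introduced by $LJ$ or $J$ are then patched exactly as in the proof of Lemma~\ref{lem:joining}. For part~(\ref{the:hardness_results_1}), Lemma~\ref{lem:essp_implies_ssp} then yields an SSP witness (hence solvability) for free from any ESSP witness on the linear $LJ(U(\varphi))$, and the SSP case is covered by the dedicated critical SSA planted in the clause gadget. Conversely, from any witness $\mathcal{R}$ I read off, in the region that solves the critical clause atom, the unique literal whose signature is not $(0,0)$; the clause-gadget geometry makes this literal well-defined, producing an assignment $\beta$ that one-in-three satisfies $\varphi$.

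The principal obstacle is part~(\ref{the:hardness_results_2}): the new additive events $0,1,\dots,b$ of $\tau_{\mathbb{Z}PT}^b$ and $\tau_{\mathbb{Z}PPT}^b$ threaten to let a region solve a clause ESSA through a modular signature $\equiv k\pmod{b+1}$ with $2\le k\le b-1$, circumventing the Boolean collapse. To defeat this, each clause gadget will include auxiliary length-$b$ paths labeled by the relevant additive events so that, by an argument strictly analogous to Lemma~\ref{lem:observations}(\ref{lem:absolute_value}) but carried out modulo $b+1$, only the modular values $0$, $1$, and $b$ can occur on these events. A secondary subtlety is keeping the constituents $g$-grade with $g=2$ rather than linear: because a single state in a clause gadget must now branch to up to three literal events and also back to a synchronizing state, the gadget is designed as a small tree of depth $b$ rather than as a chain, which is precisely why $g\geq 2$ is required for the reduction but does not increase the grade further. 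Once the modular filters are in place, the one-in-three correspondence is preserved in the $\mathbb{Z}_{b+1}$-extended setting and NP-hardness of $\tau$-\textsc{Solvability} and $\tau$-\textsc{ESSP} follows in part~(\ref{the:hardness_results_2}).
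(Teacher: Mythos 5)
Your overall architecture matches the paper's: membership in NP via a polynomial-size witness set of regions, a reduction from monotone one-in-three 3-SAT built as a union of gadgets lifted through $LJ(\cdot)$ or $J(\cdot)$ by Lemma~\ref{lem:joining}, the use of length-$b$ paths of a repeated event together with Lemma~\ref{lem:observations}(\ref{lem:absolute_value}) to collapse signatures to $\{(0,0),(1,0),(0,1)\}$, explicit regions for the converse direction, Lemma~\ref{lem:essp_implies_ssp} to get the SSP for free on linear instances, and a separate planted SSA for the \textsc{$\tau$-SSP} hardness. For part~(\ref{the:hardness_results_1}) this is essentially the paper's proof and I see no gap.

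For part~(\ref{the:hardness_results_2}), however, your mechanism for taming the additive events does not work. You propose to include length-$b$ paths labeled by the relevant events and to run ``an argument strictly analogous to Lemma~\ref{lem:observations}(\ref{lem:absolute_value}) but carried out modulo $b+1$'' to conclude that only the modular values $0$, $1$ and $b$ can occur. No such analogue exists: Lemma~\ref{lem:observations}(\ref{lem:absolute_value}) works because an event $(m,n)$ with $|n-m|\geq 2$ eventually drives the support outside $\{0,\dots,b\}$ along a length-$b$ path, but an additive event $e\in\{0,\dots,b\}$ is defined at \emph{every} state of $S_\tau$ (the support simply wraps around modulo $b+1$), so arbitrary repetition imposes no constraint whatsoever on $e$. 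Concretely, a region could assign the additive signature $2$ to all three literals of a clause and still satisfy the path equation of the clause gadget for suitable $b$ (e.g.\ $3\cdot 2\equiv b \bmod (b+1)$ when $b=6$), so your reduction would accept unsatisfiable instances. The paper closes this hole differently: the diamond-shaped gadgets $F_j$ (two branches from $f_{j,0,0}$ to $f_{j,0,b}$ carrying $b$ versus $b-1$ occurrences of $k$ followed by one $X_j$) pin $sup(f_{j,0,b})$ and $sup(f_{j,1,b-1})$ via $sig(k)\in\{(1,0),(0,1)\}$, and the resulting single cycle equation forces $|sig(X_j)|\in\{0,b\}$ or $\{0,1\}$, with $G_j$ excluding the remaining P/T-signatures; the ``exactly one literal per clause'' conclusion then needs the explicit counting argument modulo $b+1$ (ruling out two or three marked literals), which is precisely where $b\geq 2$ enters and which your sketch leaves implicit. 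Without a replacement for the additive filter and this counting step, part~(\ref{the:hardness_results_2}) is not established.
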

For the proof of Theorem~\ref{the:hardness_results}, on the one hand, we have to argue that $\tau$-\textsc{Solvability}, $\tau$-\textsc{ESSP} and $\tau$-\textsc{SSP} are in NP.
This can be seen as follows.
By Definition~\ref{def:state_separation} and Definition~\ref{def:event_state_separation}, a TS $A=(S,E,\delta,\iota)$ has at most $\vert S\vert^2$ SSA and at most $\vert S\vert\cdot \vert E\vert $ ESSA, respectively.
This implies that if a TS $A$ is $\tau$-solvable or has the $\tau$-SSP or the $\tau$-ESSP, then there is a set of $\tau$-regions $\mathcal{R}$ of $A$ of size at most $\vert S\vert^2 + \vert S\vert \cdot \vert E\vert $ that witnesses the corresponding property of $A$.
Consequently, there is a non-deterministic Turing machine that guesses $\mathcal{R}$ in a non-deterministic computation and verifies the validity of $\mathcal{R}$ in (deterministic) polynomial time.

On the other hand, we have to argue that the decision problems are NP-hard for accordingly restricted input TS.
The NP-hardness proofs base on polynomial-time reductions of the following decision problem, which has been shown to be NP-complete in~\cite{DBLP:journals/dcg/MooreR01}:

\smallskip\noindent
\fbox{\begin{minipage}[t][1.9\height][c]{0.97\textwidth}
\begin{decisionproblem}
  \problemtitle{\textsc{Cubic Monotone One-In-Three 3-SAT} (\textsc{CM1in33Sat})}
  \probleminput{A boolean expression $\varphi=\{\zeta_0,\dots, \zeta_{m-1}\}$ of 3-clauses such that, for all $i\in \{0,\dots, m-1\}$, the clause $\zeta_i=\{X_{i_0}, X_{i_1}, X_{i_2}\}$ contains $3$ distinct non-negated variables, where $i_0 < i_1 < i_2$;
every variable $X\in V(\varphi)$ occurs in exactly three distinct clauses, where $V(\varphi)=\bigcup_{i=0}^{m-1}\zeta_i$ denotes the set of all variables of $\varphi$.
}
  \problemquestion{Does there exist a one-in-three model of $\varphi$, that is, a subset $M\subseteq V(\varphi)$ such that $\vert M\cap\zeta_i\vert =1 $ for all $i\in \{0,\dots, m-1\}$?}
\end{decisionproblem}
\end{minipage}}
\bigskip

Notice that the characterization of the input $\varphi$ implies $\vert V(\varphi)\vert=m$.
The following example provides --up to renaming-- the smallest instance of \textsc{CM1in33Sat} that allows a positive decision:

\begin{example}\label{ex:varphi}
The boolean expression $\varphi=\{\zeta_0,\dots, \zeta_{5}\}$ with clauses $\zeta_0=\{X_0,X_1,X_2\},\ \zeta_1= \{X_0,X_2,X_3\},\ \zeta_2= \{X_0,X_1,X_3\},\ \zeta_3= \{X_2,X_4,X_5\},\ \zeta_4=\{X_1,X_4,X_5\},\ \zeta_5= \{X_3,X_4,X_5\}$ is a well-defined input of  \textsc{CM1in33Sat} and has the one-in-three model $M=\{X_0,X_4\}$.
\end{example}

\textbf{General reduction approach.}
Let $\tau\in \{\tau_{PT}^b,\tau_{PPT}^b, \tau_{\mathbb{Z}PT}^b,\tau_{\mathbb{Z}PPT}^b\}$.
For the proof of the NP-hardness of \textsc{$\tau$-Solvability} and \textsc{$\tau$-ESSP} we reduce $\varphi$ to a union $U_\tau$ of gadget TS.
The index $\tau$ emphasizes that the actual peculiarity of the union depends on $\tau$.
In particular, if $\tau\in \{\tau_{PT}^b,\tau_{PPT}^b\}$, then all these TS are linear, and $JL(U_\tau)$ is a well defined linear TS.
Otherwise, if $\tau\in \{\tau_{\mathbb{Z}PT}^b,\tau_{\mathbb{Z}PPT}^b\}$, then these gadgets are 2-grade TS where no initial state has an incoming edge, which implies that $J(U_\tau)$ is a
2-grade TS.

\medskip
In $U_\tau$, the variables of $\varphi$ are represented by events and the clauses of $\varphi$ are represented by paths on which the variables of the clauses occur as events.
More exactly, for every $i\in \{0,\dots, m-1\}$ and clause $\zeta_i=\{X_{i_0}, X_{i_1}, X_{i_2}\}$, the union $U_\tau$ contains (a gadget with) a directed labeled path $P_i=\dots\Edge{X_{i_0}}\dots\Edge{X_{i_1}}\dots\Edge{X_{i_2}}\dots $ on which the variables $X_{i_0}, X_{i_1}$ and $X_{i_2}$ of $\zeta_i$ occur as events.
Moreover, by construction, the union $U_\tau$ provides an ESSA $\alpha$ whose $\tau$-solvability is connected with the existence of a one-in-three model of $\varphi$.
In particular, we build the union $U_\tau$ in a way such that there is a subset $\mathfrak{E}\subseteq E_\tau$ of events of $\tau$ so that the following properties are satisfied:
If $R=(sup, sig)$ is a $\tau$-region of $U_\tau$ that solves $\alpha$, then the variable events whose signature belongs to $\mathfrak{E}$ define a one-in-three model of $\varphi$, that is, the set $M=\{X\in V(\varphi)\mid sig(X)\in \mathfrak{E}\}$ satisfies $\vert M\cap \zeta_i\vert =1$ for all $i\in \{0,\dots, m-1\}$.
Hence, if $U_\tau$ has the $\tau$-ESSP, then $\alpha$ is $\tau$-solvable and $\varphi$ allows a positive decision.
Moreover, the construction of $U_\tau$ ensures that if $\varphi$ has a one-in-three model, then $\alpha$ as well as all the other ESSA and SSA of $U_\tau$ are $\tau$-solvable.
Thus, $U_\tau$ has the $\tau$-ESSP if and only if $\varphi$ is one-in-three satisfiable if and only if $U_\tau$ has both the $\tau$-ESSP and the $\tau$-SSP.
Since Lemma~\ref{lem:joining} lifts these implications to the linear joining $LJ(U_\tau)$, if $\tau\in \{\tau_{PT}^b,\tau_{PPT}^b\}$, and to the joining $J(U_\tau)$, if  $\tau\in \{\tau_{\mathbb{Z}PT}^b,\tau_{\mathbb{Z}PPT}^b\}$, this proves the NP-hardness of the \textsc{$\tau$-ESSP} and \textsc{$\tau$-Solvability} for accordingly restricted TS.

\eject
Let $\tau\in \{\tau_{PT}^b,\tau_{PPT}^b\}$.
For the proof of the NP-hardness of \textsc{$\tau$-SSP} we reduce $\varphi$ to a union $U$ of linear TS.
Since this union is the same for both $\tau_{PT}^b$ and $\tau_{PPT}^b$, $U$ needs no index.
Using essentially the same approach as just sketched, the union $U$ provides an SSA $\alpha$ that is $\tau$-solvable if and only if $\varphi$ has a one-in-three model.
Moreover, if $\alpha $ is $\tau$-solvable, then $U$ has the $\tau$-SSP.
Consequently, again by Lemma~\ref{lem:joining}, this implies that $LJ(U)$ has the $\tau$-SSP if and only if $\varphi$ has a one-in-three model.
This proves the NP-hardness of \textsc{$\tau$-SSP} for linear inputs.

\subsection{NP-hardness of \textsc{$\tau_{PPT}^b$-solvability} and \textsc{$\tau_{PPT}^b$-ESSP}}\label{sec:tau_ppt_solvability}%

\begin{figure}[t!]
\begin{center}
\begin{tikzpicture}
\begin{scope}
\foreach \i in {0,...,11} {\coordinate (\i) at (\i*1.325cm,0);}
\foreach \i in {0,...,11} {\node (h\i) at (\i) {\nscale{$h_{1,\i}$}};}
\foreach \i in {0,...,11} {\coordinate (s\i) at (\i*1.325cm,-0.5);}
\foreach \i in {0,4,9} {\node[opacity=0.7] (s\i) at (s\i) {\nscale{$[0]$}};}
\foreach \i in {1,5,10} {\node[opacity=0.7] (s\i) at (s\i) {\nscale{$[1]$}};}
\foreach \i in {2,3,6,7,8,11} {\node[opacity=0.7] (s\i) at (s\i) {\nscale{$[2]$}};}
\foreach \i in {0,...,10} {\coordinate (g\i) at (\i*1.325cm+0.6125cm ,-0.5cm);}
\foreach \i in {0,1,4,5,9,10} {\node[opacity=0.7](g\i) at (g\i) {\nscale{$(0,1)$}};}
\foreach \i in {2,6,7} {\node[opacity=0.7](g\i) at (g\i) {\nscale{$(0,0)$}};}
\foreach \i in {3,8} {\node[opacity=0.7](g\i) at (g\i) {\nscale{$(2,0)$}};}
\graph { (h0) ->["\escale{$k$}"] (h1) ->["\escale{$k$}"] (h2) ->["\escale{$y_0$}"] (h3) ->["\escale{$o_0$}"] (h4) ->["\escale{$k$}"] (h5)->["\escale{$k$}"](h6)->["\escale{$y_1$}"](h7)->["\escale{$y_0$}"](h8)->["\escale{$o_1$}"](h9)->["\escale{$k$}"](h10)->["\escale{$k$}"](h11);
};
\end{scope}
\begin{scope}[yshift=-2cm]
\begin{scope}%
\foreach \i in {0,...,3} {\coordinate (\i) at (\i*1.4cm,0);}
\foreach \i in {0,...,3} {\node (h\i) at (\i) {\nscale{$d_{j,\i}$}};}
\foreach \i in {0,...,3} {\coordinate (s\i) at (\i*1.4cm,-0.5);}
\foreach \i in {0,2} {\node[opacity=0.7] (s\i) at (s\i) {\nscale{$[2]$}};}
\foreach \i in {1,3} {\node[opacity=0.7] (s\i) at (s\i) {\nscale{$[0]$}};}
\foreach \i in {0,...,2} {\coordinate (g\i) at (\i*1.4cm+0.7cm ,-0.5cm);}
\foreach \i in {0,2} {\node[opacity=0.7](g\i) at (g\i) {\nscale{$(2,0)$}};}
\foreach \i in {1} {\node[opacity=0.7](g\i) at (g\i) {\nscale{$(0,2)$}};}
\graph { (h0) ->["\escale{$o_0$}"] (h1) ->["\escale{$k_j$}"] (h2) ->["\escale{$o_1$}"] (h3);
};
\end{scope}
\begin{scope}[xshift=5.75cm ]%
\foreach \i in {0,...,2} {\coordinate (\i) at (\i*1.4cm,0);}
\foreach \i in {0,...,2} {\node (h\i) at (\i) {\nscale{$f_{\ell,\i}$}};}
\foreach \i in {0,...,2} {\coordinate (s\i) at (\i*1.4cm,-0.5);}
\foreach \i in {1,2} {\node[opacity=0.7] (s\i) at (s\i) {\nscale{$[2]$}};}
\foreach \i in {0} {\node[opacity=0.7] (s\i) at (s\i) {\nscale{$[0]$}};}
\foreach \i in {0,...,2} {\coordinate (g\i) at (\i*1.4cm+0.7cm ,-0.5cm);}
\foreach \i in {0} {\node[opacity=0.7](g\i) at (g\i) {\nscale{$(0,2)$}};}
\foreach \i in {1} {\node[opacity=0.7](g\i) at (g\i) {\nscale{$(0,0)$}};}
\graph { (h0) ->["\escale{$k_0$}"] (h1) ->["\escale{$z_\ell$}"] (h2);
};
\end{scope}
\begin{scope}[xshift=10cm ]%
\foreach \i in {0,...,2} {\coordinate (\i) at (\i*1.4cm,0);}
\foreach \i in {0,...,2} {\node (h\i) at (\i) {\nscale{$g_{\ell,\i}$}};}
\foreach \i in {0,...,2} {\coordinate (s\i) at (\i*1.4cm,-0.5);}
\foreach \i in {0,1} {\node[opacity=0.7] (s\i) at (s\i) {\nscale{$[2]$}};}
\foreach \i in {2} {\node[opacity=0.7] (s\i) at (s\i) {\nscale{$[0]$}};}
\foreach \i in {0,...,2} {\coordinate (g\i) at (\i*1.4cm+0.7cm ,-0.5cm);}
\foreach \i in {1} {\node[opacity=0.7](g\i) at (g\i) {\nscale{$(2,0)$}};}
\foreach \i in {0} {\node[opacity=0.7](g\i) at (g\i) {\nscale{$(0,0)$}};}
\graph { (h0) ->["\escale{$z_\ell$}"] (h1) ->["\escale{$o_0$}"] (h2);
};
\end{scope}
\end{scope}

\begin{scope}[yshift=-4cm]
\begin{scope}
\foreach \i in {0,...,3} {\coordinate (\i) at (\i*1.4cm,0);}
\foreach \i in {0,...,3} {\node (h\i) at (\i) {\nscale{$m_{0,\i}$}};}
\foreach \i in {0,...,3} {\coordinate (s\i) at (\i*1.4cm,-0.5);}
\foreach \i in {0,3} {\node[opacity=0.7] (s\i) at (s\i) {\nscale{$[0]$}};}
\foreach \i in {2} {\node[opacity=0.7] (s\i) at (s\i) {\nscale{$[1]$}};}
\foreach \i in {1} {\node[opacity=0.7] (s\i) at (s\i) {\nscale{$[2]$}};}
\foreach \i in {0,...,2} {\coordinate (g\i) at (\i*1.4cm+0.7cm ,-0.5cm);}
\foreach \i in {0} {\node[opacity=0.7](g\i) at (g\i) {\nscale{$(0,2)$}};}
\foreach \i in {1,2} {\node[opacity=0.7](g\i) at (g\i) {\nscale{$(1,0)$}};}
\graph { (h0) ->["\escale{$k_1$}"] (h1) ->["\escale{$X_0$}"] (h2) ->["\escale{$X_0$}"] (h3);
};
\end{scope}
\begin{scope}[xshift=5.25cm]
\foreach \i in {0,...,3} {\coordinate (\i) at (\i*1.4cm,0);}
\foreach \i in {0,...,3} {\node (h\i) at (\i) {\nscale{$m_{1,\i}$}};}
\foreach \i in {0,...,3} {\coordinate (s\i) at (\i*1.4cm,-0.5);}
\foreach \i in {0} {\node[opacity=0.7] (s\i) at (s\i) {\nscale{$[0]$}};}
\foreach \i in {1,2,3} {\node[opacity=0.7] (s\i) at (s\i) {\nscale{$[2]$}};}
\foreach \i in {0,...,2} {\coordinate (g\i) at (\i*1.4cm+0.7cm ,-0.5cm);}
\foreach \i in {0} {\node[opacity=0.7](g\i) at (g\i) {\nscale{$(0,2)$}};}
\foreach \i in {1,2} {\node[opacity=0.7](g\i) at (g\i) {\nscale{$(0,0)$}};}
\graph { (h0) ->["\escale{$k_1$}"] (h1) ->["\escale{$X_1$}"] (h2) ->["\escale{$X_1$}"] (h3);
};
\end{scope}
\begin{scope}[xshift=10.5cm]
\foreach \i in {0,...,3} {\coordinate (\i) at (\i*1.4cm,0);}
\foreach \i in {0,...,3} {\node (h\i) at (\i) {\nscale{$m_{2,\i}$}};}
\foreach \i in {0,...,3} {\coordinate (s\i) at (\i*1.4cm,-0.5);}
\foreach \i in {0} {\node[opacity=0.7] (s\i) at (s\i) {\nscale{$[0]$}};}
\foreach \i in {1,2,3} {\node[opacity=0.7] (s\i) at (s\i) {\nscale{$[2]$}};}
\foreach \i in {0,...,2} {\coordinate (g\i) at (\i*1.4cm+0.7cm ,-0.5cm);}
\foreach \i in {0} {\node[opacity=0.7](g\i) at (g\i) {\nscale{$(0,2)$}};}
\foreach \i in {1,2} {\node[opacity=0.7](g\i) at (g\i) {\nscale{$(0,0)$}};}
\graph { (h0) ->["\escale{$k_1$}"] (h1) ->["\escale{$X_2$}"] (h2) ->["\escale{$X_2$}"] (h3);
};
\end{scope}
\begin{scope}[yshift=-1.5cm]
\begin{scope}
\foreach \i in {0,...,3} {\coordinate (\i) at (\i*1.4cm,0);}
\foreach \i in {0,...,3} {\node (h\i) at (\i) {\nscale{$m_{3,\i}$}};}
\foreach \i in {0,...,3} {\coordinate (s\i) at (\i*1.4cm,-0.5);}
\foreach \i in {0} {\node[opacity=0.7] (s\i) at (s\i) {\nscale{$[0]$}};}
\foreach \i in {1,2,3} {\node[opacity=0.7] (s\i) at (s\i) {\nscale{$[2]$}};}
\foreach \i in {0,...,2} {\coordinate (g\i) at (\i*1.4cm+0.7cm ,-0.5cm);}
\foreach \i in {0} {\node[opacity=0.7](g\i) at (g\i) {\nscale{$(0,2)$}};}
\foreach \i in {1,2} {\node[opacity=0.7](g\i) at (g\i) {\nscale{$(0,0)$}};}
\graph { (h0) ->["\escale{$k_1$}"] (h1) ->["\escale{$X_3$}"] (h2) ->["\escale{$X_3$}"] (h3);
};
\end{scope}
\begin{scope}[xshift=5.25cm]
\foreach \i in {0,...,3} {\coordinate (\i) at (\i*1.4cm,0);}
\foreach \i in {0,...,3} {\node (h\i) at (\i) {\nscale{$m_{4,\i}$}};}
\foreach \i in {0,...,3} {\coordinate (s\i) at (\i*1.4cm,-0.5);}
\foreach \i in {0,3} {\node[opacity=0.7] (s\i) at (s\i) {\nscale{$[0]$}};}
\foreach \i in {1} {\node[opacity=0.7] (s\i) at (s\i) {\nscale{$[2]$}};}
\foreach \i in {2} {\node[opacity=0.7] (s\i) at (s\i) {\nscale{$[1]$}};}
\foreach \i in {0,...,2} {\coordinate (g\i) at (\i*1.4cm+0.7cm ,-0.5cm);}
\foreach \i in {0} {\node[opacity=0.7](g\i) at (g\i) {\nscale{$(0,2)$}};}
\foreach \i in {1,2} {\node[opacity=0.7](g\i) at (g\i) {\nscale{$(1,0)$}};}
\graph { (h0) ->["\escale{$k_1$}"] (h1) ->["\escale{$X_4$}"] (h2) ->["\escale{$X_4$}"] (h3);
};
\end{scope}
\begin{scope}[xshift=10.5cm]
\foreach \i in {0,...,3} {\coordinate (\i) at (\i*1.4cm,0);}
\foreach \i in {0,...,3} {\node (h\i) at (\i) {\nscale{$m_{5,\i}$}};}
\foreach \i in {0,...,3} {\coordinate (s\i) at (\i*1.4cm,-0.5);}
\foreach \i in {0} {\node[opacity=0.7] (s\i) at (s\i) {\nscale{$[0]$}};}
\foreach \i in {1,2,3} {\node[opacity=0.7] (s\i) at (s\i) {\nscale{$[2]$}};}
\foreach \i in {0,...,2} {\coordinate (g\i) at (\i*1.4cm+0.7cm ,-0.5cm);}
\foreach \i in {0} {\node[opacity=0.7](g\i) at (g\i) {\nscale{$(0,2)$}};}
\foreach \i in {1,2} {\node[opacity=0.7](g\i) at (g\i) {\nscale{$(0,0)$}};}
\graph { (h0) ->["\escale{$k_1$}"] (h1) ->["\escale{$X_5$}"] (h2) ->["\escale{$X_5$}"] (h3);
};
\end{scope}
\end{scope}
\end{scope}
\begin{scope}[yshift=-7.5cm]
\begin{scope}
\foreach \i in {0,...,10} {\coordinate (\i) at (\i*1.4cm,0);}
\foreach \i in {0,...,10} {\node (h\i) at (\i) {\nscale{$t_{0,\i}$}};}
\foreach \i in {0,...,10} {\coordinate (s\i) at (\i*1.4cm,-0.5);}
\foreach \i in {0,3,4,5,6,7,8,9} {\node[opacity=0.7] (s\i) at (s\i) {\nscale{$[0]$}};}
\foreach \i in {2} {\node[opacity=0.7] (s\i) at (s\i) {\nscale{$[1]$}};}
\foreach \i in {1,10} {\node[opacity=0.7] (s\i) at (s\i) {\nscale{$[2]$}};}
\foreach \i in {0,...,10} {\coordinate (g\i) at (\i*1.4cm+0.7cm ,-0.5cm);}
\foreach \i in {3,4,5,6,7,8} {\node[opacity=0.7](g\i) at (g\i) {\nscale{$(0,0)$}};}
\foreach \i in {1,2} {\node[opacity=0.7](g\i) at (g\i) {\nscale{$(1,0)$}};}
\foreach \i in {0,9} {\node[opacity=0.7](g\i) at (g\i) {\nscale{$(0,2)$}};}
\graph { (h0) ->["\escale{$k_2$}"] (h1) ->["\escale{$X_0$}"] (h2) ->["\escale{$X_0$}"] (h3) ->["\escale{$z_0$}"] (h4) ->["\escale{$X_1$}"] (h5)->["\escale{$X_1$}"](h6)->["\escale{$z_1$}"](h7)->["\escale{$X_2$}"](h8)->["\escale{$X_2$}"](h9)->["\escale{$k_3$}"](h10);
};
\end{scope}
\begin{scope}[yshift=-1.5cm]
\foreach \i in {0,...,10} {\coordinate (\i) at (\i*1.4cm,0);}
\foreach \i in {0,...,10} {\node (h\i) at (\i) {\nscale{$t_{1,\i}$}};}
\foreach \i in {0,...,10} {\coordinate (s\i) at (\i*1.4cm,-0.5);}
\foreach \i in {0,3,4,5,6,7,8,9} {\node[opacity=0.7] (s\i) at (s\i) {\nscale{$[0]$}};}
\foreach \i in {2} {\node[opacity=0.7] (s\i) at (s\i) {\nscale{$[1]$}};}
\foreach \i in {1,10} {\node[opacity=0.7] (s\i) at (s\i) {\nscale{$[2]$}};}
\foreach \i in {0,...,10} {\coordinate (g\i) at (\i*1.4cm+0.7cm ,-0.5cm);}
\foreach \i in {3,4,5,6,7,8} {\node[opacity=0.7](g\i) at (g\i) {\nscale{$(0,0)$}};}
\foreach \i in {1,2} {\node[opacity=0.7](g\i) at (g\i) {\nscale{$(1,0)$}};}
\foreach \i in {0,9} {\node[opacity=0.7](g\i) at (g\i) {\nscale{$(0,2)$}};}
\graph { (h0) ->["\escale{$k_2$}"] (h1) ->["\escale{$X_0$}"] (h2) ->["\escale{$X_0$}"] (h3) ->["\escale{$z_2$}"] (h4) ->["\escale{$X_2$}"] (h5)->["\escale{$X_2$}"](h6)->["\escale{$z_3$}"](h7)->["\escale{$X_3$}"](h8)->["\escale{$X_3$}"](h9)->["\escale{$k_3$}"](h10);
};
\end{scope}
\begin{scope}[yshift=-3cm]
\foreach \i in {0,...,10} {\coordinate (\i) at (\i*1.4cm,0);}
\foreach \i in {0,...,10} {\node (h\i) at (\i) {\nscale{$t_{2,\i}$}};}
\foreach \i in {0,...,10} {\coordinate (s\i) at (\i*1.4cm,-0.5);}
\foreach \i in {0,3,4,5,6,7,8,9} {\node[opacity=0.7] (s\i) at (s\i) {\nscale{$[0]$}};}
\foreach \i in {2} {\node[opacity=0.7] (s\i) at (s\i) {\nscale{$[1]$}};}
\foreach \i in {1,10} {\node[opacity=0.7] (s\i) at (s\i) {\nscale{$[2]$}};}
\foreach \i in {0,...,10} {\coordinate (g\i) at (\i*1.4cm+0.7cm ,-0.5cm);}
\foreach \i in {3,4,5,6,7,8} {\node[opacity=0.7](g\i) at (g\i) {\nscale{$(0,0)$}};}
\foreach \i in {1,2} {\node[opacity=0.7](g\i) at (g\i) {\nscale{$(1,0)$}};}
\foreach \i in {0,9} {\node[opacity=0.7](g\i) at (g\i) {\nscale{$(0,2)$}};}
\graph { (h0) ->["\escale{$k_2$}"] (h1) ->["\escale{$X_0$}"] (h2) ->["\escale{$X_0$}"] (h3) ->["\escale{$z_4$}"] (h4) ->["\escale{$X_1$}"] (h5)->["\escale{$X_1$}"](h6)->["\escale{$z_5$}"](h7)->["\escale{$X_3$}"](h8)->["\escale{$X_3$}"](h9)->["\escale{$k_3$}"](h10);
};
\end{scope}
\begin{scope}[yshift=-4.5cm]
\foreach \i in {0,...,10} {\coordinate (\i) at (\i*1.4cm,0);}
\foreach \i in {0,...,10} {\node (h\i) at (\i) {\nscale{$t_{3,\i}$}};}
\foreach \i in {0,...,10} {\coordinate (s\i) at (\i*1.4cm,-0.5);}
\foreach \i in {0,6,7,8,9} {\node[opacity=0.7] (s\i) at (s\i) {\nscale{$[0]$}};}
\foreach \i in {5} {\node[opacity=0.7] (s\i) at (s\i) {\nscale{$[1]$}};}
\foreach \i in {1,2,3,4,10} {\node[opacity=0.7] (s\i) at (s\i) {\nscale{$[2]$}};}
\foreach \i in {0,...,10} {\coordinate (g\i) at (\i*1.4cm+0.7cm ,-0.5cm);}
\foreach \i in {1,2,3,6,7,8} {\node[opacity=0.7](g\i) at (g\i) {\nscale{$(0,0)$}};}
\foreach \i in {4,5} {\node[opacity=0.7](g\i) at (g\i) {\nscale{$(1,0)$}};}
\foreach \i in {0,9} {\node[opacity=0.7](g\i) at (g\i) {\nscale{$(0,2)$}};}
\graph { (h0) ->["\escale{$k_2$}"] (h1) ->["\escale{$X_2$}"] (h2) ->["\escale{$X_2$}"] (h3) ->["\escale{$z_6$}"] (h4) ->["\escale{$X_4$}"] (h5)->["\escale{$X_4$}"](h6)->["\escale{$z_7$}"](h7)->["\escale{$X_5$}"](h8)->["\escale{$X_5$}"](h9)->["\escale{$k_3$}"](h10);
};
\end{scope}
\begin{scope}[yshift=-6cm]
\foreach \i in {0,...,10} {\coordinate (\i) at (\i*1.4cm,0);}
\foreach \i in {0,...,10} {\node (h\i) at (\i) {\nscale{$t_{4,\i}$}};}
\foreach \i in {0,...,10} {\coordinate (s\i) at (\i*1.4cm,-0.5);}
\foreach \i in {0,6,7,8,9} {\node[opacity=0.7] (s\i) at (s\i) {\nscale{$[0]$}};}
\foreach \i in {5} {\node[opacity=0.7] (s\i) at (s\i) {\nscale{$[1]$}};}
\foreach \i in {1,2,3,4,10} {\node[opacity=0.7] (s\i) at (s\i) {\nscale{$[2]$}};}
\foreach \i in {0,...,10} {\coordinate (g\i) at (\i*1.4cm+0.7cm ,-0.5cm);}
\foreach \i in {1,2,3,6,7,8} {\node[opacity=0.7](g\i) at (g\i) {\nscale{$(0,0)$}};}
\foreach \i in {4,5} {\node[opacity=0.7](g\i) at (g\i) {\nscale{$(1,0)$}};}
\foreach \i in {0,9} {\node[opacity=0.7](g\i) at (g\i) {\nscale{$(0,2)$}};}
\graph { (h0) ->["\escale{$k_2$}"] (h1) ->["\escale{$X_1$}"] (h2) ->["\escale{$X_1$}"] (h3) ->["\escale{$z_8$}"] (h4) ->["\escale{$X_4$}"] (h5)->["\escale{$X_4$}"](h6)->["\escale{$z_9$}"](h7)->["\escale{$X_5$}"](h8)->["\escale{$X_5$}"](h9)->["\escale{$k_3$}"](h10);
};
\end{scope}
\begin{scope}[yshift=-7.5cm]
\foreach \i in {0,...,10} {\coordinate (\i) at (\i*1.4cm,0);}
\foreach \i in {0,...,10} {\node (h\i) at (\i) {\nscale{$t_{5,\i}$}};}
\foreach \i in {0,...,10} {\coordinate (s\i) at (\i*1.4cm,-0.5);}
\foreach \i in {0,6,7,8,9} {\node[opacity=0.7] (s\i) at (s\i) {\nscale{$[0]$}};}
\foreach \i in {5} {\node[opacity=0.7] (s\i) at (s\i) {\nscale{$[1]$}};}
\foreach \i in {1,2,3,4,10} {\node[opacity=0.7] (s\i) at (s\i) {\nscale{$[2]$}};}
\foreach \i in {0,...,10} {\coordinate (g\i) at (\i*1.4cm+0.7cm ,-0.5cm);}
\foreach \i in {1,2,3,6,7,8} {\node[opacity=0.7](g\i) at (g\i) {\nscale{$(0,0)$}};}
\foreach \i in {4,5} {\node[opacity=0.7](g\i) at (g\i) {\nscale{$(1,0)$}};}
\foreach \i in {0,9} {\node[opacity=0.7](g\i) at (g\i) {\nscale{$(0,2)$}};}
\graph { (h0) ->["\escale{$k_2$}"] (h1) ->["\escale{$X_3$}"] (h2) ->["\escale{$X_3$}"] (h3) ->["\escale{$z_{10}$}"] (h4) ->["\escale{$X_4$}"] (h5)->["\escale{$X_4$}"](h6)->["\escale{$z_{11}$}"](h7)->["\escale{$X_5$}"](h8)->["\escale{$X_5$}"](h9)->["\escale{$k_3$}"](h10);
};
\end{scope}
\end{scope}
\end{tikzpicture}
\end{center}\vspace*{-4mm}
\caption{The gadgets of the union $U_{\tau_{PPT}^2}$ that originates from the input $\varphi$ of Example~\ref{ex:varphi}; we assume $j\in \{0,1,2,3\}$ and $\ell\in \{0,\dots, 11\}$.
A number $[i]$ and and a pair $(k,\ell)$ below a state $s$ and event $e$ define the support $sup(s)=i$ and the signature $sig(e)=(k,\ell)$ in correspondence to the region $R=(sup, sig)$, which is defined to prove the $\tau_{PPT}^2$-solvability of $\alpha=(k,h_{1,8})$.}\label{fig:example_for_pure_essp}
\end{figure}
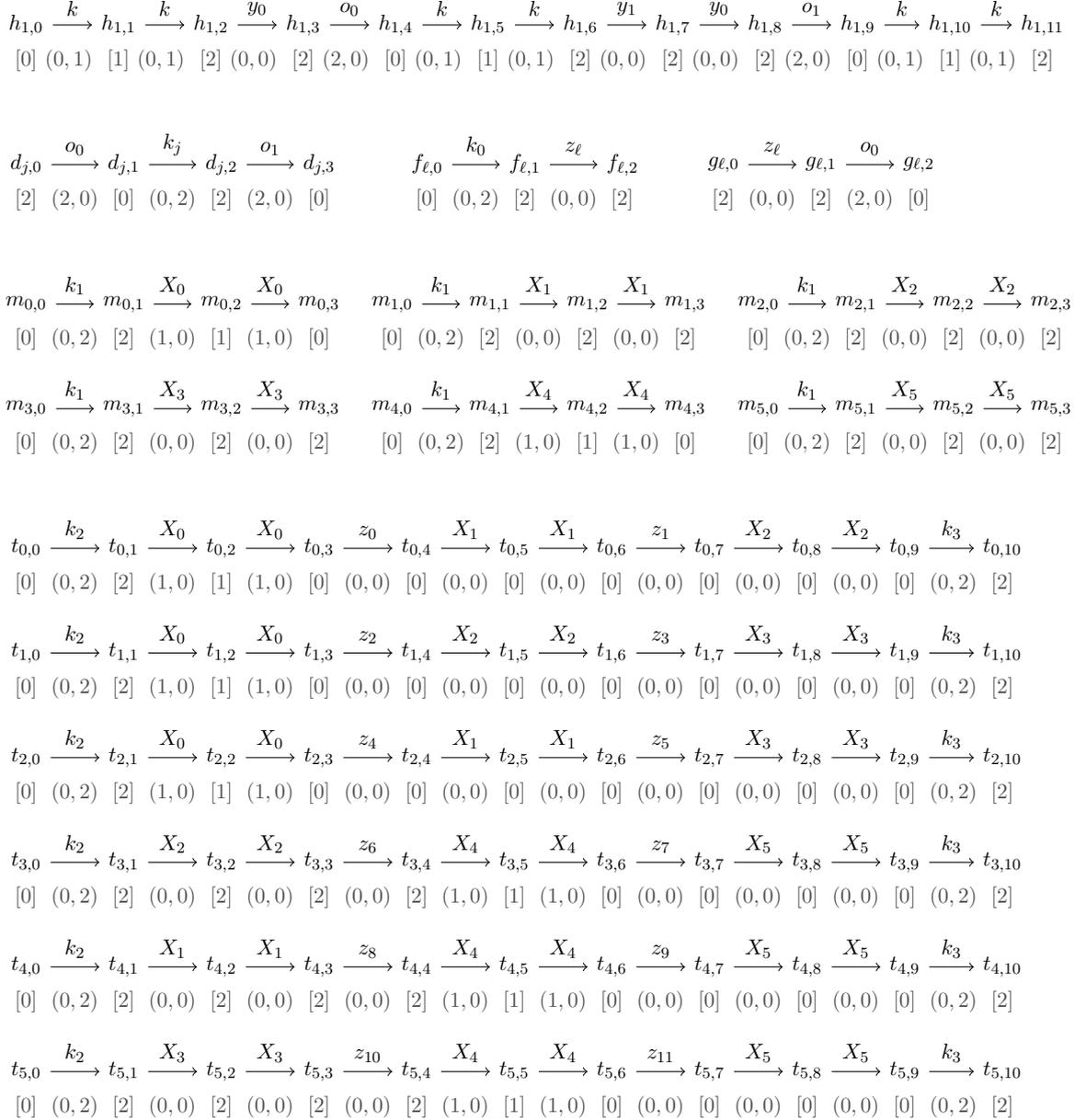\vspace*{-2mm}

In the remainder of this section, unless explicitly stated otherwise, let $\tau=\tau_{PPT}^b$.
In the following, we first introduce the gadgets (TS) of the union $U_\tau$ and the atom $\alpha$.
Figure~\ref{fig:example_for_pure_essp} presents a concrete example of $U_{\tau_{PPT}^2}$, where $\varphi$ corresponds to Example~\ref{ex:varphi}.
Secondly, we argue that these gadgets collaborate in a way such that if $\alpha$ is $\tau$-solvable, then $\varphi$ has a one-in-three model.
Finally, we show that if $\varphi$ is one-in-three satisfiable, then $U^\tau_\varphi$ is $\tau$-solvable.

\medskip
The union $U_\tau$ has the following gadget $H_1$ that provides the announced ESSA $\alpha=(k, h_{1,2b+4})$:
\begin{center}
\begin{tikzpicture}
\node (h0) at (0,0) {\nscale{$h_{1,0}$}};
\node (h1) at (1.4,0) {\nscale{}};
\node (h_2_dots) at (1.6,0) {\nscale{$\dots$}};
\node (h_k_1) at (1.8,0) {};
\node (h_k) at (3.2,0) {\nscale{$h_{1,b}$}};
\node (h_k1) at (4.9,0) {\nscale{$h_{1,b+1}$}};
\node (h_k2) at (6.8,0) {\nscale{$h_{1,b+2}$}};
\node (h_k3) at (8.4,0) {\nscale{}};
\node (h_k4_dots) at (8.75,0) {\nscale{$\dots$}};
\node (h_2k1) at (8.9,0) {\nscale{}};
\node (h_2k2) at (10.5,0) {\nscale{$h_{1,2b+2}$}};
\node (h_2k3) at (12.5,0) {\nscale{$h_{1,2b+3}$}};
\node (h_2k4) at (14.5,0) {\nscale{$h_{1,2b+4}$}};
\node (h_2k5) at (14.5,-1.2) {\nscale{$h_{1,2b+5}$}};
\node (h_2k6) at (12.9,-1.2) {};
\node (h_k5_dots) at (12.7,-1.2) {\nscale{$\dots$}};
\node (h_3k4) at (12.4,-1.2) {};
\node (h_3k5) at (10.8,-1.2) {\nscale{$h_{1,3b+5}$}};
\graph { (h0) ->["\escale{$k$}"] (h1) ;
(h_k_1)->["\escale{$k$}"] (h_k) ->["\escale{$y_0$}"] (h_k1)->["\escale{$o_0$}"] (h_k2)->["\escale{$k$}"] (h_k3);
(h_2k1)->["\escale{$k$}"] (h_2k2)->["\escale{$y_1$}"] (h_2k3)->["\escale{$y_0$}"] (h_2k4)->["\escale{$o_1$}"] (h_2k5)->[swap, "\escale{$k$}"] (h_2k6);
(h_3k4)->[swap, "\escale{$k$}"] (h_3k5);
};
\end{tikzpicture}
\end{center}
%
%
For all $j\in \{0,1,2,3\}$, the union $U_\tau$ has the following gadget $D_j$ that provides the event $k_j$:
\begin{center}
\begin{tikzpicture}[baseline=-2pt]
\node at (-1,0) {$D_{j,1}=$};
\foreach \i in {0,...,3} {\coordinate (\i) at (\i*1.5,0);}
\foreach \i in {0,...,3} {\node (p\i) at (\i) {\nscale{$d_{j,\i}$}};}
\graph { (p0) ->["\escale{$o_0$}"] (p1) ->["\escale{$k_j$}"] (p2) ->["\escale{$o_1$}"] (p3);};
\end{tikzpicture}
\end{center}
For all $j\in \{0,\dots, 2m-1\}$, the union $U_\tau$ has the following gadgets $F_j$ and $G_j$ that provide the event~$z_j$:\vspace*{-2mm}
\begin{center}
\begin{tikzpicture}
\begin{scope}
\node at (-1,0) {$F_j=$};
\foreach \i in {0,...,2} {\coordinate (\i) at (\i*1.5,0);}
\foreach \i in {0,...,2} {\node (p\i) at (\i) {\nscale{$f_{j,\i}$}};}
\graph { (p0) ->["\escale{$k_0$}"] (p1) ->["\escale{$z_j$}"] (p2);};
\end{scope}
\begin{scope}[xshift=6cm]
\node at (-1,0) {$G_j=$};
\foreach \i in {0,...,2} {\coordinate (\i) at (\i*1.5,0);}
\foreach \i in {0,...,2} {\node (p\i) at (\i) {\nscale{$g_{j,\i}$}};}
\graph { (p0) ->["\escale{$z_j$}"] (p1) ->["\escale{$o_0$}"] (p2);};
\end{scope}
\end{tikzpicture}
\end{center}
For all $i\in \{0,\dots, m-1\}$, the union $U_\tau$ has the following gadget $M_i$, that uses the variable $X_i$ as event:\vspace*{-2mm}
\begin{center}
\begin{tikzpicture}[scale=0.9]
\begin{scope}[yshift=3cm]
\node at (-1,0) {$M_i=$};
\node (t0) at (0,0) {\nscale{$m_{i,0}$}};
\node (t1) at (2,0) {\nscale{$m_{i,1}$}};
\node (t2) at (3.75,0) {};
\node (h_2_dots) at (4,0) {\nscale{$\dots$}};
\node (tb) at (4.25,0) {};
\node (tb+1) at (6,0) {\nscale{$m_{i,b+1}$}};
\graph {
(t0) ->["\escale{$k_1$}"] (t1) ->["\escale{$X_i$}"] (t2) ;
(tb) ->["\escale{$X_i$}"] (tb+1);};
\end{scope}
\end{tikzpicture}
\end{center}
For all $i\in \{0,\dots, m-1\}$, the union $U_\tau$ has the following gadget $T_i$ that uses the elements of $\zeta_i=\{X_{i_0}, X_{i_1}, X_{i_2}\}$ as events:
\begin{center}
\begin{tikzpicture}[scale=0.9]
\begin{scope}[yshift=3cm]
\node at (-1,0) {$T_{i}=$};
\node (t0) at (0,0) {\nscale{$t_{i,0}$}};
\node (t1) at (1.75,0) {\nscale{$t_{i,1}$}};
\node (t2) at (3.25,0) {};
\node (h_2_dots) at (3.75,0) {\nscale{$\dots$}};
\node (tb) at (4.1,0) {};
\node (tb+1) at (5.75,0) {\nscale{$t_{i,b+1}$}};
\node (tb+2) at (7.75,0) {\nscale{$t_{i,b+2}$}};
\node (tb+3) at (9.5,0) {};
\node (h_k+4_dots) at (9.75,0) {\nscale{$\dots$}};
\node (t2b+1) at (10,0) {};
\node (t2b+2) at (12,0) {\nscale{$t_{i,2b+2}$}};
\node (t2b+3) at (12,-1.2) {\nscale{$t_{i,2b+3}$}};
\node (t2b+4) at (10.1,-1.2) {};
\node (h_k+5_dots) at (9.75,-1.2) {\nscale{$\dots$}};
\node (t3b+2) at (9.5,-1.2) {};
\node (t3b+3) at (7.75,-1.2) {\nscale{$t_{i,3b+3}$}};
\node (t3b+4) at (5.75,-1.2) {\nscale{$t_{i,3b+4}$}};
\graph {
(t0) ->["\escale{$k_2$}"] (t1) ->["\escale{$X_{i_0}$}"] (t2) ;
(tb) ->["\escale{$X_{i_0}$}"] (tb+1) ->["\escale{$z_{2i}$}"] (tb+2)->["\escale{$X_{i_1}$}"] (tb+3);
(t2b+1)->["\escale{$X_{i_1}$}"] (t2b+2)->["\escale{$z_{2i+1}$}"] (t2b+3)->[swap, "\escale{$X_{i_2}$}"] (t2b+4);
(t3b+2)->[swap, "\escale{$X_{i_2}$}"] (t3b+3)->[swap, "\escale{$k_3$}"] (t3b+4);
;};
\end{scope}
\end{tikzpicture}
\end{center}
Altogether, 
\[
U_\tau=U(H_1,D_0,\dots, D_3, F_0,\dots, F_{2m-1}, G_0,\dots, G_{2m-1},M_0,\dots, M_{m-1}, T_0,\dots, T_{m-1}).
\]

\begin{lemma}\label{lem:tau_ppt_essp_implies_model}
If $U_\tau$ has the $\tau$-ESSP, then $\varphi$ has a one-in-three model.
\end{lemma}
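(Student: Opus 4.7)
The plan is to show that if $U_\tau$ has the $\tau$-ESSP, then any $\tau$-region $R=(sup,sig)$ that solves the distinguished atom $\alpha=(k,h_{1,2b+4})$ is forced, gadget by gadget, into signatures from which a one-in-three model of $\varphi$ can be read off. The candidate set will be $M=\{X\in V(\varphi)\mid sig(X)=(0,1)\}$, with $\mathfrak{E}=\{(0,1)\}$.

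First I would exploit the gadget $H_1$. Since $H_1$ starts with $b$ consecutive $k$-edges, Lemma~\ref{lem:observations}(\ref{lem:absolute_value}) forces $sig(k)\in\{(0,0),(1,0),(0,1)\}$, and $sig(k)=(0,0)$ would make $k$ fire everywhere and leave $\alpha$ unsolved. By the natural symmetry between $(1,0)$ and $(0,1)$ (interchanging $0$ and $b$ throughout $H_1$), I may assume $sig(k)=(1,0)$, so $sup(h_{1,0})=b$ and $sup(h_{1,b})=0$ and, because $k$ must not fire at $h_{1,2b+4}$, also $sup(h_{1,2b+4})=0$. Propagating $sup$ along the path of $H_1$ and using the boundary conditions $sup\in\{0,\dots,b\}$ together with the additional $b$ consecutive $k$-edges between $h_{1,b+2}$ and $h_{1,2b+2}$ and after $h_{1,2b+5}$ forces $sup(h_{1,2b+2})=0$ and $sup(h_{1,2b+5})=b$; since the net is pure, these in turn pin down $sig(y_0)=sig(y_1)=(0,0)$ and $sig(o_0)=sig(o_1)=(0,b)$.

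With these signatures fixed, each gadget $D_j$ ($j\in\{0,1,2,3\}$) has the shape $\cdot\edge{o_0}\cdot\edge{k_j}\cdot\edge{o_1}\cdot$; the two $+b$-effects of $o_0$ and $o_1$ together with the state bound $b$ force the effect of $k_j$ to be $-b$, so $sig(k_j)=(b,0)$. Applying this to the gadget $F_j$ (which begins with $k_0$) gives $sup(f_{j,1})=0$, hence $sig^+(z_j)\geq sig^-(z_j)$ and purity gives $sig(z_j)=(0,n_j)$; the gadget $G_j$ then combines this step with $o_0$ (effect $+b$) and the bound $b$ to conclude $n_j=0$, i.e. $sig(z_j)=(0,0)$ for every $j\in\{0,\dots,2m-1\}$.

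In each gadget $M_i$, after the $k_1$-edge the state is $0$, and the $b$ consecutive $X_i$-edges together with Lemma~\ref{lem:observations}(\ref{lem:absolute_value}) and purity leave only two possibilities $sig(X_i)\in\{(0,0),(0,1)\}$. Finally, in each gadget $T_i$, the $k_2$- and $k_3$-edges and the already-established $sig(z_{2i})=sig(z_{2i+1})=(0,0)$ imply $sup(t_{i,1})=0$ and $sup(t_{i,3b+3})=b$, while each of the three blocks of $b$ consecutive $X_{i_j}$-edges either leaves the support unchanged (if $sig(X_{i_j})=(0,0)$) or lifts it from $0$ to $b$ (if $sig(X_{i_j})=(0,1)$); the combination of the bound $b$ in the middle of the path and the required jump from $0$ to $b$ over the whole path forces exactly one of $sig(X_{i_0}),sig(X_{i_1}),sig(X_{i_2})$ to equal $(0,1)$. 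Hence $|M\cap\zeta_i|=1$ for every $i\in\{0,\dots,m-1\}$, so $M$ is a one-in-three model of $\varphi$.

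The main obstacle is the last step: the case analysis inside $T_i$ must rule out two-true and three-true assignments without using any information beyond the two anchor values $sup(t_{i,1})=0$ and $sup(t_{i,3b+3})=b$, which is precisely what the interleaved $z_{2i}$- and $z_{2i+1}$-edges are designed to enable. All earlier steps are routine consequences of Lemma~\ref{lem:observations}, purity, and the state bound.
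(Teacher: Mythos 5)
Your proposal is correct and follows essentially the same gadget-by-gadget forcing chain as the paper ($H_1$ fixes $sig(k)$, $sig(y_0)$, $sig(o_0)$, $sig(o_1)$; then $D_j$ fixes $sig(k_j)$; then $F_j,G_j$ neutralize the $z_j$; then $M_i$ restricts the variable signatures; then $T_i$ enforces the one-in-three condition), the only difference being that you work out the branch $sig(k)=(1,0)$ with model $\{X\mid sig(X)=(0,1)\}$ while the paper spells out the symmetric branch $sig(k)=(0,1)$ with model $\{X\mid sig(X)=(1,0)\}$. Both treatments dispatch the other branch by the same $s\mapsto b-s$ symmetry, so the arguments coincide.
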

\begin{proof}
Since $U_\tau$ has the $\tau$-ESSP, there is a $\tau$-region that solves $\alpha$.
Let $R=(sup, sig)$ be such a region.
In the following we argue, that the set $\{X\in V(\varphi)\vert sig(X)=(0,1)\}$ or the set $\{X\in V(\varphi)\vert sig(X)=(1,0)\}$ is a one-in-three model of $\varphi$.
Since $R$ solves $\alpha$, we have that $sig(k)$ does not occur at $sup(h_{1,2b+4})$.
This implies $sig(k)\not=(0,0)$.
By Lemma~\ref{lem:observations}, we get $sig(k)\in \{(1,0),(0,1)\}$.
In what follows, we let $sig(k)=(0,1)$ and show that $M=\{X\in V(\varphi)\vert sig(X)=(1,0)\}$ defines a one-in-three model of $\varphi$.
The arguments for the case $sig(k)=(1,0)$ are quite similar and lead to the fact that $\{X\in V(\varphi)\vert sig(X)=(0,1)\}$ defines a searched model.

Let $sig(k)=(0,1)$ and $\neg sup(h_{1,2b+4}) \ledge{sig(k)}$.
We argue that this implies $sig(o_0)=sig(o_1)=(b,0)$:
For all $s\in \{0,\dots, b-1\}$, the event $(0,1)$ occurs at $s$ in $\tau$.
Since $sig(k)$ does not occur at $sup(h_{1,2b+4})$, this implies $sup(h_{1,2b+4})=b$.
Moreover, by $sig(k)=(0,1)$ and Lemma~\ref{lem:observations}, we get $sup(h_{1,b})=b$ and $sup(h_{1,b+2})=sup(h_{1,2b+5})=0$.
By $sup(h_{1,2b+4})=b$ and $sup(h_{1,2b+5})=0$, we obtain $sig(o_1)=(b,0)$.
Moreover, $sup(h_{1,b})=b$ and $h_{1,b}\edge{y_0}$ imply $sig^+(y_0)=0$, and by $sup(h_{1,2b+4})=b$ and $\edge{y_0}h_{1,2b+4}$ imply $sig^-(y_0)=0$.
(Recall that $R$ is pure.)
Hence, $sig(y_0)=(0,0)$, which implies $sup(h_{1,b+1})=b$.
Thus, by $sup(h_{1,b+1})=b$ and $sup(h_{1,b+2})=0$, we obtain $sig(o_0)=(b,0)$.

\medskip
The gadgets $D_0,\dots, D_3$ use the signatures of $o_0$ and $o_1$ to determine the signatures of $k_0,\dots, k_3$.
More exactly, $sig(o_0)=sig(o_1)=(b,0)$ implies $sup(d_{j,1})=0$ and $sup(d_{j,2})=b$ for all $j\in \{0,1,2,3\}$.
Consequently, this implies $sig(k_0)=\dots=sig(k_3)=(0,b)$.

Let $j\in \{0,\dots, 2m-1\}$ be arbitrary but fixed.
The gadgets $F_j$ and $G_j$ ensure that $sig(z_j)=(0,0)$:
By $sig(o_0)=(b,0)$ and $sig(k_0)=(0,b)$, we get $sup(f_{j,1})=b$ and $sup(g_{j,1})=b$.
Since $R$ is pure, that is $sig^+(z_j)=0$ or $sig^-(z_j)=0$, by $f_{j,2}\Edge{z_j}$, we get $sig^-(z_j)\geq sig^+(z_j)$.
Similarly, by $\Edge{z_j}g_{j,1}$, we get $sig^+(z_j)\geq sig^-(z_j)$.
Consequently, $sig^-(z_j)=sig^+(z_j)$, which implies $sig(z)=(0,0)$, since $R$ is pure.

\medskip
Let $i\in \{0,\dots, m-1\}$ be arbitrary but fixed.
The gadget $M_i$ ensures for $X_i$ that $sig(X_i)\in \{(1,0), (0,0)\}$:
By $sig(k_1)=(0,b)$, we have $sup(m_{i,1})=b$, which implies $sig^-(X_i)\geq sig^+(X_i)$.
Since $X_i$ occurs b times in a row at $m_{i,1}$, by Lemma~\ref{lem:observations}, this implies $sig(X_i)\in \{(1,0),(0,0)\}$.

The gadget $T_i$ ensures that there is exactly one event $X\in \{X_{i_0}, X_{i_1}, X_{i_2}\}$ such that $sig(X)=(1,0)$:
By $sig(k_2)=sig(k_3)=(0,b)$, we have that $sup(t_{i,1})=b$ and $sup(t_{i,3b+3})=0$.
Consequently, the image of the sub-path $t_{i,1}\edge{X_{i_1}}\dots\edge{X_{i_2}}t_{i,3b+3}$ under $(sup, sig)$ is a path of $\tau$ that starts at $b$ and terminates at $0$.
Hence, there is an event $e$ on this path that satisfies $sig^-(e) > sig^+(e)$.
Since $sig(z_{2i})= sig(z_{2i+1})=(0,0)$, we obtain that $e\in \{X_{i_0}, X_{i_1}, X_{i_2}\}$.
Moreover, since each of $X_{i_0}, X_{i_1}$ and $X_{i_2}$ occurs b times in a row, if $sig^-(e) > sig^+(e)$, then $sig(e)=(1,0)$.
In the following, we argue that if $e\in \{X_{i_0}, X_{i_1}, X_{i_2}\}$ such that $sig(e)=(1,0)$, then $sig(e')\not=(1,0)$ for all $e'\in \{X_{i_0}, X_{i_1}, X_{i_2}\}\setminus\{e\}$.

If $sig(X_{i_0})=(1,0)$, then we get $sup(t_{i,b+1})=0$, by Lemma~\ref{lem:observations}.
By $sig(z_{2i}) =(0,0)$, this implies $sup(t_{i,b+2})=0$ and $sig^-(X_{i_1})=0$.
Thus, by $sig(X_{i_1})\in \{(1,0),(0,0)\}$, we conclude $sig(X_{i_1})=(0,0)$.
By $sup(t_{i,b+2})=0$, $sig(X_{i_1})=(0,0)$ and $sig(z_{2i+1})=(0,0)$, we have that $sup(t_{i,2b+3})=0$.
This implies $sig^-(X_{i_2})=0$ and, thus, $sig(X_{i_2})=(0,0)$.
In particular, we have $sig(X_{i_1})\not=(1,0)$ and $sig(X_{i_2})\not=(1,0)$.

If $sig(X_{i_1})=(1,0)$, then we get $sup(t_{i,b+1})=b$ and $sup(t_{i,2b+3})=0$, by Lemma~\ref{lem:observations} and $sig(z_{2i}) =sig(z_{2i+1})=(0,0)$.
By $sup(t_{i,b+1})=b$, we get $sig(X_{i_0})\not=(1,0)$.
Moreover, just like before, by $sup(t_{i,2b+3})=0$, we have $sig(X_{i_2})\not=(1,0)$.

If $sig(X_{i_2})=(1,0)$, then we get $sig(X_{i_0})\not=(1,0)$ and $sig(X_{i_1})\not=(1,0)$, since $sig(X_{i_0})=(1,0)$ or $sig(X_{i_1})=(1,0)$ imply $sig(X_{i_2})\not=(1,0)$, as just discussed.

Altogether, we have shown that if $R=(sup, sig)$ is a $\tau$-region that solves $\alpha$ such that $sig(k)=(0,1)$, then, for all $i\in \{0,\dots, m-1\}$, there is exactly one event $e\in \{X_{i_0}, X_{i_1}, X_{i_2}\}$ that satisfies $sig(e)=(1,0)$.
As a result, the set $\{X\in V(\varphi)\mid sig(X)=(1,0)\}$ defines a one-in-three model of $\varphi$.
It is noteworthy that we use the pureness of $\tau$ only for the functionality of $H_1$ and (by the signature of $o_1$, implicitly) for $D_0,\dots, D_3$.
That is, once we have that $sig(k_0)=\dots=sig(k_3)=(0,b)$ and $sig(o_0)=(b,0)$, the arguments for the functionality of the remaining gadgets essentially work also for the (impure) $b$-bounded type $\tau_{PT}^b$.
The only difference then is that we can not conclude that $sig(z_j)=(0,0)$, since $sig(z_j)=(m,m)$ would also be possible for $\tau_{PT}^b$.
The same is true for $e'\in \{X_{i_0}, X_{i_1}, X_{i_2}\}\setminus\{e\}$ if $e\in \{X_{i_0}, X_{i_1}, X_{i_2}\}$ such that $sig(e)=(1,0)$.
However, if $sig(e)=(m,m)$, then $s\edge{e}s'$ implies also $sup(s)=sup(s')$, and that is what actually matters in our arguments.
Thus, we will reuse the corresponding gadgets for the type $\tau_{PT}^b$.

If $sig(k)=(1,0)$ and $sup(h_{1, 2b+4})=0$, then one argues similarly that the set $\{X\in V(\varphi)\mid sig(X)=(0,1)\}$ defines a one-in-three model of $\varphi$.
Altogether, this shows that if $U_\tau$ has the $\tau$-ESSP, which implies the $\tau$-solvability of $\alpha$, then $\varphi$ has a one-in-three model.
\end{proof}

For the opposite direction, we have to prove the following lemma:
\begin{lemma}\label{lem:tau_ppt_model_implies_solvability}
If $\varphi$ has a one-in-three model, then $U_\tau$ has the $\tau$-ESSP and the $\tau$-SSP.
\end{lemma}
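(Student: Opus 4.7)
The plan is to construct a witness $\mathcal{R}$ of the $\tau$-ESSP of $U_\tau$ and then observe that $\mathcal{R}$ automatically also witnesses the $\tau$-SSP. Indeed, every gadget of $U_\tau$ is linear, and an SSA of $U_\tau$ is a pair of distinct states belonging to the same gadget, so Lemma~\ref{lem:essp_implies_ssp} applied to each gadget separately converts the ESSP-regions (restricted to a gadget) into SSP-witnesses of that gadget, hence of the union. Thus only the $\tau$-ESSP needs explicit construction.

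The core of $\mathcal{R}$ is the region $R=(sup,sig)$ suggested by Figure~\ref{fig:example_for_pure_essp}, derived from a fixed one-in-three model $M\subseteq V(\varphi)$ of $\varphi$. Concretely, define $sig(k)=(0,1)$, $sig(o_0)=sig(o_1)=(b,0)$, $sig(y_0)=sig(y_1)=(0,0)$, $sig(k_j)=(0,b)$ for $j\in\{0,1,2,3\}$, $sig(z_j)=(0,0)$ for $j\in\{0,\dots,2m-1\}$, and $sig(X)=(1,0)$ if $X\in M$, $sig(X)=(0,0)$ otherwise; together with appropriate supports at the initial states of each gadget, Lemma~\ref{lem:observations}(1) then determines $sup$ uniquely. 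The essential consistency check is the gadget $T_i$: after $k_2$ the support is $b$, and after the three variable-blocks and the two $z$-events the support must reach $0$ (before $k_3$, whose signature is $(0,b)$). Since $|M\cap\zeta_i|=1$, exactly one of $X_{i_0},X_{i_1},X_{i_2}$ has signature $(1,0)$, and that variable occurs in a $b$-fold row in $T_i$, so its block drops the support from $b$ to $0$, while the other two variable-blocks and the $(0,0)$-labelled $z$-events preserve it. Validity on the remaining gadgets $H_1$, $D_j$, $F_j$, $G_j$, $M_i$ is immediate from the chosen signatures. Moreover, this $R$ solves $\alpha=(k,h_{1,2b+4})$ because $sup(h_{1,2b+4})=b$ and $sig(k)=(0,1)$ is undefined at $b$ in $\tau$.

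For the remaining ESSA, I would adjoin one region per event class. For $k$, the region $R$ itself handles every state $s$ with $\neg s\edge{k}$, because inside the gadgets other than $H_1$ the support values reachable from the initial state under $sig$ are either $0$ or $b$, and the only state where $(0,1)$ is undefined that might coincide with a state at which $k$ does not occur is in $H_1$ (all other gadgets satisfy $sup(s)<b$ at the states where $k$ should be blocked). For the events $o_0,o_1,y_0,y_1,k_0,\dots,k_3,z_0,\dots,z_{2m-1}$ and the variable events $X_i$, I would design, for each event $e$, a small dedicated region that sets $sig(e)$ to a value undefined at exactly the necessary supports (typically $sig(e)\in\{(1,0),(0,1)\}$) and assigns signature $(0,0)$ to every other event except a minimal set of neighbouring events needed to realise the required support along the gadgets. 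For every such auxiliary region, the check that it is a well-defined $\tau$-region reduces to a short computation along each linear gadget, exploiting that the involved events occur rarely and that $(0,0)$-signatures preserve the support.

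The main obstacle is the bookkeeping: verifying case-by-case that for every event $e\in E(U_\tau)$ and every state $s\in S(U_\tau)$ with $\neg s\edge{e}$, some region of $\mathcal{R}$ blocks $e$ at $s$. The non-routine part within this bookkeeping is the region for the variable events $X_i$: one has to argue that the region $R$ already solves $(X_i,s)$ for most states, using that $X_i$ with $sig(X_i)\in\{(1,0),(0,0)\}$ is only blocked at support $0$, and to cover the states outside the $X_i$-carrying gadgets $M_i$ and $T_i$ one adds a further region analogous to $R$ but with the model bit for $X_i$ flipped, or a simple two-event region using $sig(k_1)=(0,b)$ and $sig(X_i)=(1,0)$ alone. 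Once all cases are handled, $\mathcal{R}$ is an ESSP-witness, and by the opening observation it is simultaneously an SSP-witness, proving the lemma.
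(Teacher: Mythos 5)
Your overall strategy coincides with the paper's: build the model-dependent region $R$ of Figure~\ref{fig:example_for_pure_essp} to solve $\alpha=(k,h_{1,2b+4})$, handle the remaining ESSA with further regions, and then get the $\tau$-SSP for free from Lemma~\ref{lem:essp_implies_ssp} because all gadgets are linear. Your verification of $R$ itself (in particular the role of $\vert M\cap\zeta_i\vert=1$ in $T_i$) is correct. However, your treatment of the event $k$ contains a genuine error: the event $(0,1)$ of $\tau_{PPT}^b$ is defined at every state $0,\dots,b-1$ and undefined \emph{only} at $b$, so a region with $sig(k)=(0,1)$ solves $(k,s)$ exactly when $sup(s)=b$. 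Since $k$ occurs only in $H_1$, \emph{every} state of every other gadget yields an ESSA with $k$, and under your $R$ many of these states have support $0$ (for instance $d_{j,1}$, $d_{j,3}$, $f_{j,0}$, $g_{j,2}$, $m_{i,0}$, $t_{i,0}$, and $m_{i,b+1}$ when $X_i\in M$). Your parenthetical ``all other gadgets satisfy $sup(s)<b$ at the states where $k$ should be blocked'' actually shows that $R$ \emph{fails} there, not that it succeeds; the inhibition logic is inverted. A second region is genuinely needed (the paper uses one with $sup(\iota)=b$ for all gadgets other than $H_1$, $sig(k)=(0,1)$, $sig(y_0)=(b,0)$ and $(0,0)$ elsewhere, which pushes all those states to support $b$).

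Two further points. First, your proposed auxiliary region for $X_i$ obtained by ``flipping the model bit'' is not a valid $\tau$-region: changing $sig(X_i)$ from $(1,0)$ to $(0,0)$ (or vice versa) destroys the support balance of every $T_j$ whose clause contains $X_i$, since the path from $sup(t_{j,1})=b$ to $sup(t_{j,3b+3})=0$ then no longer closes; your alternative two-event region also leaves atoms such as $(X_i,m_{j,1})$ for $j\neq i$ unsolved because $k_1$ raises the support to $b$ in every $M_j$. Second, the remaining bookkeeping that you defer is the larger part of the argument; the paper discharges it not by ad hoc per-event regions but systematically via Lemma~\ref{lem:easy_solvability}, observing that every event of $E(U_\tau)\setminus\{k,y_0,y_1\}$ is thinly distributed and satisfies Condition~1 or~2 of that lemma, so that only $k$, $y_0$, $y_1$ and the occurrences of $o_0,o_1$ inside $H_1$ need explicit regions. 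As written, your proof establishes the solvability of $\alpha$ but not the full $\tau$-ESSP of $U_\tau$.
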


For the proof of Lemma~\ref{lem:tau_ppt_model_implies_solvability} it is sufficient to show that if $\varphi$ has a one-in-three model $M$, then $U_\tau$ has the $\tau$-ESSP.
Since all introduced gadgets are linear TS, by Lemma~\ref{lem:essp_implies_ssp}, this implies that $U_\tau$ has the $\tau$-SSP, too.
The brut-force approach of this proof would be to explicitly present for every ESSA of $U_\tau$ a $\tau$-region that solves it.
In fact, for some atoms of $U_\tau$, we need to explicitly present regions that solve them.
In particular, this applies to $(k, h_{1, 2b+4})$.
On the other hand, the gadgets and the events of $U_\tau$ meet some regularities that allow us to solve many events homogeneously.
In the following, for the purpose to discover these regularities, we first introduce the notions of consistent and thinly distributed events.
After that we present a lemma that uses these notions and exploits a certain structure of $U_\tau$ to solve most events uniformly.

\begin{definition}[c-consistent]\label{def:c_consistent}
Let $U=U(A_0,\dots, A_n)$ be a union, where $A_i=(S_i,E_i,\delta_i,\iota_i)$ is a linear TS for all $i\in \{0,\dots, n\}$, and let $c\in \mathbb{N}$.
We say an event $e\in E(U)$ is \emph{c-consistent} (in $U$), if the following condition is satisfied for all $i\in \{0,\dots, n\}$:
if $s\edge{e}s'\in A_i$, then $e$ occurs always exactly $c$ times in a row in $A_i$, that is, there are states $s,s'\in \{s_0,\dots,s_c\}\subseteq S_i $ such that $s_0\edge{e}\dots\edge{e} s_c$ and $\neg \edge{e}s_0$ and $\neg s_c\edge{e}$.
\end{definition}

\begin{definition}[thinly distributed]\label{def:thinly_distributed}
Let $U=U(A_0,\dots, A_n)$ be a union, where $A_i=(S_i,E_i,\delta_i,\iota_i)$ is a linear TS for all $i\in \{0,\dots, n\}$, and let $e\in E(U)$ such that $e$ is $c$-consistent for some $c\in\{1,b\}$.
We say $e$ is \emph{thinly distributed} (in $U$) if the following condition is satisfied for all $i\in \{0,\dots, n\}$:
if $e\in E_i$, then there is exactly one path (with pairwise distinct states) $s_0\edge{e}\dots\edge{e}s_c$ in $A_i$.
\end{definition}

\begin{example}
Every event of $U_\tau$ is either $b$-consistent as, for example, $k$ and $X_0,\dots, X_{m-1}$, or $1$-consistent as, for example, $o_0$ and $o_1$.
Moreover, the event $o_1$ occurs once at the edge $h_{1,2b+4}\edge{o_1}h_{1,2b+5}$ and,
for all $j\in \{0,1,2,3\}$, the it occurs once at the edge $d_{j,2}\edge{o_1}d_{j,3}$.
No other gadget of $U_\tau$ applies $o_1$.
Thus, $o_1$ is thinly distributed in $U_\tau$.
Moreover, for all $i\in \{0,\dots, m-1\}$, if $X_i$ occurs in a gadget of $U_\tau$, then it occurs exactly once $b$-times in a row in this gadget.
Hence, $X_i$ is thinly distributed.
\end{example}

\begin{lemma}\label{lem:easy_solvability}
Let $\tau\in \{\tau_{PT}^b,\tau_{PPT}^b\}$.
Let $U=U(A_0,\dots, A_n)$ be a union, where $A_i=(S_i,E_i,\delta_i,\iota_i)$ is a linear TS for all $i\in \{0,\dots, n\}$, such that every event $e\in E(U)$ is $1$-consistent or $b$-consistent, and let $a\in E(U)$ be a thinly distributed event and $q\in S_i$ a state such that $\neg \edge{a}$, where $i\in \{0,\dots, n\}$ is arbitrary but fixed.
If one of the following conditions is satisfied, then there is a $\tau$-region of $U$ that solves $(a,q)$:
\begin{enumerate}

\item\label{lem:easy_solvability_not_or_initial}
$a\not\in E_i$ or $e\in E_i$ and $q$ occurs after $a$;
\item\label{lem:easy_solvability_preceded}
$a\in E_i$ and $a$ occurs after $q$ and there is an event $x\in E_i\setminus \{a\}$ such that $\edge{x}z\edge{a}$ in $A_i$ and
\begin{enumerate}
\item
$x$ is thinly distributed and
\item
for all $j\in \{0,\dots, n\}$, if $a,x\in E_j$, then $x$ does not occur after $a$ in $A_j$ and
\item
if $a$ is $b$-consistent, then $x$ is $1$-consistent.
\end{enumerate}
\end{enumerate}
\end{lemma}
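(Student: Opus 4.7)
The plan is to prove the lemma by exhibiting, in each of the stated cases, an explicit $\tau$-region $(sup,sig)$ of $U$ that solves the ESSA $(a,q)$. By Lemma~\ref{lem:observations}(1), it suffices to specify $sig$ together with one value $sup(\iota_j)$ per gadget, and to check that the resulting support stays in $\{0,\dots,b\}$ along the unique directed path of each linear $A_j$. All events distinct from $a$ (and, in case~2, from the helper $x$) will receive signature $(0,0)$, forcing the support to be constant along their edges; since all signatures used lie in $\{(0,0),(1,0),(b,0),(0,1),(0,b)\}$, the region is automatically pure, hence valid for both $\tau_{PT}^b$ and $\tau_{PPT}^b$.

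For case~1, I would set $sig(a)=(1,0)$ when $a$ is $b$-consistent (this is forced by Lemma~\ref{lem:observations}(2), since $a$ fires $b$ times in a row and we want $sig(a)\neq(0,0)$) and $sig(a)=(b,0)$ when $a$ is $1$-consistent; in either case $sig(a)$ is undefined at support $0$. In every gadget $A_j$ that contains $a$, the unique $a$-run then requires the support to drop from $b$ to $0$, so one chooses $sup(\iota_j)=b$; in every gadget without $a$, one chooses $sup(\iota_j)=0$. Consequently, when $a\notin E_i$ the support is identically $0$ on $A_i$, and when $q$ occurs after the $a$-run in $A_i$ the support has already been reset to $0$ and remains so. In either subcase $sup(q)=0$ and the atom is solved.

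For case~2, $q$ precedes the $a$-run in $A_i$, so the idea is to use $x$ as a loader that raises the support to $b$ only between the prefix containing $q$ and the source $z$ of $a$. I keep $sig(a)$ as in case~1, set $sig(x)=(0,b)$ if $x$ is $1$-consistent and $sig(x)=(0,1)$ if $x$ is $b$-consistent, and $sig(e)=(0,0)$ for every other $e$. On $A_i$ take $sup(\iota_i)=0$: the support stays $0$ until the start of the $x$-run, jumps to $b$ at $z$, and is then consumed back to $0$ by the $a$-run. If $q$ lies strictly before the $x$-run, then $sup(q)=0$; if $q$ lies strictly inside the $x$-run, then by condition~(c) the event $a$ must be $1$-consistent (otherwise $x$ would be $1$-consistent and the $x$-run would have no interior state), so $sig(a)=(b,0)$ and $sup(q)<b$ still yields the separation. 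In every other gadget $A_j$ I pick the initial support as $b$ if $a\in E_j$ and $x\notin E_j$, as $0$ if $x\in E_j$ and $a\notin E_j$; when both are in $E_j$, condition~(b) places $x$ before $a$, so the same $0\to b\to 0$ pattern as in $A_i$ is valid.

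The main obstacle is verifying global compatibility of the signature across all gadgets of $U$ in case~2, and this is precisely what hypotheses (a)--(c) secure. Thin distribution of $x$ via (a) yields exactly one well-defined $x$-run per gadget, so $sig(x)$ is never required to fit two incompatible occurrences; (c) prevents the pathological situation in which $q$ would sit at an interior position of an $x$-run while $sig(a)=(1,0)$ is only undefined at $0$; and (b) makes the $0\to b\to 0$ pattern uniform in every gadget that contains both $a$ and $x$. Once these choices are fixed, the remaining consistency checks reduce to the routine path-wise verification granted by Lemma~\ref{lem:observations}(1).
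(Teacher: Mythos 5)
Your proof is correct and follows essentially the same route as the paper: in each case you exhibit an explicit region in which all events other than $a$ (and $x$) get signature $(0,0)$, with the gadget-wise initial supports chosen so that $sup(q)$ lands where $sig(a)$ is undefined, and with hypotheses (a)--(c) used exactly as the paper uses them to make the choices globally consistent. The only difference is cosmetic — your regions are the mirror images (under $s\mapsto b-s$) of the paper's, using $(1,0)/(b,0)$ for $a$ where the paper uses $(0,1)/(0,b)$ — and the unaddressed gadgets containing neither $a$ nor $x$ in case~2 are handled by any constant support, so nothing is missing.
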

\begin{proof}
(1):
The following $\tau$-region $R=(sup, sig)$ solves $(a,q)$:
For all $j\in \{0,\dots, m-1\}$, if $a\in E_j$, then $sup(\iota_j)=0$, otherwise $sup(\iota_j)=b$;
for all $e\in E(U)$, if $e=a$ and $a$ is $b$-consistent, then $sig(e)=(0,1)$;
if $e=a$ and $e$ is $1$-consistent, then $sig(e)=(0,b)$;
otherwise $sig(e)=(0,0)$.

(2):
The following $\tau$-region $R=(sup, sig)$ solves $(a,q)$:
for all $j\in \{0,\dots, n\}$, if $a\in E_j$ and $x\not\in E_j$, then $sup(\iota_j)=0$, otherwise $sup(\iota_j)=b$;
for all $e\in E(A)$, if $e=a$, then $sig(e)=(0,1)$ if $a$ is $b$-consistent, else $sig(e)=(0,b)$;
if $e=x$, then $sig(e)=(1,0)$ if $x$ is $b$-consistent, else $sig(e)=(b,0)$;
otherwise $sig(e)=(0,0)$.
\end{proof}

Armed with these results, we are now able to provide the proof of Lemma~\ref{lem:tau_ppt_model_implies_solvability}:
\begin{proof}[Lemma~\ref{lem:tau_ppt_model_implies_solvability}]
Let $M$ be a one-in-three model of $\varphi$.
We proceed as follows.
First, we apply Lemma~\ref{lem:easy_solvability} to solve most of $U_\tau$'s ESSA.
After that, we explicitly present $\tau$-regions that solve the remaining atoms and, in particular, solve $\alpha$.
This proves that $U_\tau$ has the $\tau$-ESSP and, by Lemma~\ref{lem:essp_implies_ssp}, the \textsc{$\tau$-Solvability}, too.

Let $e\in E(U_\tau)\setminus\{k, y_0, y_1\}$, let $G$ be a gadget of $U_\tau$ and let $s\in S(G)$ such that $\neg s\edge{e}$, where
all of $e, G$ and $s$ are arbitrary but fixed.
For a start, we notice that $e$ is thinly distributed.
Moreover, recall that, for all $i\in \{0,\dots, m-1\}$, the clause $\zeta_i=\{X_{i_0}, X_{i_1}, X_{i_2}\}$ satisfies $i_0 < i_1 < i_2$.
Consequently, if $e\not\in \{o_0,o_1\}$ or if $e\in \{o_0,o_1\}$ and $G\not=H_1$, then $e$ satisfies Condition~\ref{lem:easy_solvability_not_or_initial} or Condition~\ref{lem:easy_solvability_preceded} of Lemma~\ref{lem:easy_solvability}.
Thus, by Lemma~\ref{lem:easy_solvability}, the atom $(e,s)$ is $\tau$-solvable.

It remains to argue that the remaining atoms are also $\tau$-solvable.
For convenience, we let $I=\{\iota_G\mid \text{$G$ is a gadget of $U$}\}$ be the set of the initial states of the gadgets of $U$.

For a start, we argue for the solvability of $k$.
The following region $R=(sup, sig)$ solves $(k,s)$ for all relevant $s\in S(H_1)$;
in particular, it solves $(k ,h_{1,2b+4})$.
Figure~\ref{fig:example_for_pure_essp} presents a concrete example of $R$ for the union $U_\tau$ that originates from $\varphi$ of Example~\ref{ex:varphi}.
Let's start with the support of the initial states:
if $s\in \{h_{1,0}, f_{0,0}, \dots, f_{2m-1,0}, m_{0,0},\dots, m_{m-1,0}, t_{0,0}, \dots, t_{m-1,0}\}$, then $sup(s)=0$;
if $s\in \{d_{0,0}, \dots, d_{3,0}, g_{0,0}, \dots, g_{2m-1,0}\}$, then $sup(s)=b$.
The signature is defined as follows:
for all $e\in E(U_\tau)$, if $e=k$, then $sig(e)=(0,1)$;
if $e\in \{o_0,o_1\}$, then $sig(e)=(b,0)$;
if $e\in \{k_0,\dots, k_3\}$, then $sig(e)=(0,b)$;
if $e\in M$, then $sig(e)=(1,0)$;
otherwise $sig(e)=(0,0)$.

The following region $R=(sup, sig)$ solves $(k,s)$ for all other relevant states of $U_\tau$:
$sup(h_{1,0})=0$;
for all $s\in I\setminus\{h_{1,0}\}$, $sup(s)=b$;
for all $e\in E(U_\tau)$, if $e=k$, then $sig(e)=(0,1)$;
if $e=y_0$, then $sig(e)=(b,0)$;
otherwise $sig(e)=(0,0)$.
This proves the solvability of $k$.

In the following, we argue that $(o_0,q)$ is solvable for all relevant $q\in S(H_1)$:
The following region $R=(sup, sig)$ solves $(o_0,s)$ for all $s\in \{h_{1, b+2}, \dots, h_{1, 3b+5}\}$:
for all gadgets $G\in U_\tau$, we define $sup(\iota_G)=0$ for $G$'s the initial state $\iota_G$;
for all $e\in E(U_\tau)$, if $e=o_0$, then $sig(e)=(0,b)$;
otherwise $sig(e)=(0,0)$.

The following region $R=(sup, sig)$ solves $(o_0,s)$ for all $s\in \{h_{1,0}, \dots, h_{1, b}\}$:
$sup(h_{1,0})=b$;
for all $s\in I\setminus\{h_{1,0}\}$, we define $sup(s)=0$;
for all $e\in E(U_\tau)$, if $e=o_0$, then $sig(e)=(0,b)$;
if $e=y_0$, then $sig(e)=(b,0)$;
otherwise $sig(e)=(0,0)$.

Similarly, one argues that $(o_1,q)$ is solvable for all relevant $q\in S(H_1)$.
So far, we have proven the solvability of all $e\in E(U_\tau)\setminus\{k,y_0,y_1\}$.
It remains to argue for the solvability of $y_0$ and $y_1$.

The following region $R=(sup, sig)$ solves $(y_0,s)$ for all $s\in \{h_{1,0}, \dots, h_{1, b-1}\}$:
for all $s\in I$, we define $sup(\iota_G)=0$;
for all $e\in E(U_\tau)$, if $e=y_0$, then $sig(e)=(b,0)$;
if $e=k$, then $sig(e)=(0,1)$;
otherwise $sig(e)=(0,0)$.

The following region $R=(sup, sig)$ solves $(y_0,s)$ for all $s\in S(H_1) \setminus \{h_{1,0}, \dots, h_{1, b-1}\}$:
$sup(h_{i,0})=b$ and for all $s\in I\setminus\{h_{1,0}\}$, we define $sup(s)=0$;
for all $e\in E(U_\tau)$, if $e=y_0$, then $sig(e)=(b,0)$;
if $e=y_1$, then $sig(e)=(0,b)$;
otherwise $sig(e)=(0,0)$.

It is easy to see, that $y_1$ is solvable.
\end{proof}

Altogether, since the construction of $U_\tau$ and thus $A_\tau$ is obviously polynomial, by Lemma~\ref{lem:joining}, Lemma~\ref{lem:tau_ppt_essp_implies_model} and Lemma~\ref{lem:tau_ppt_model_implies_solvability} and the NP-completeness of \textsc{CM1in33Sat}, we have finally proven that \textsc{$\tau_{PPT}^b$-ESSP} and \textsc{$\tau_{PPT}^b$-Solvability} are NP-complete for all $b\in \mathbb{N}^+$.

\subsection{NP-hardness of \textsc{$\tau_{PT}^b$-solvability} and \textsc{$\tau_{PT}^b$-ESSP}}\label{sec:tau_pt_solvability}%
%
In the remainder of section, unless stated explicitly otherwise, we assume that $\tau=\tau_{PT}^b$.
The union $U_\tau$ has the following TS $H_0$ that provides the ESSA $\alpha=(k, h_{0, 4b+1})$:
\begin{center}
\begin{tikzpicture}
\node (init) at (-0.75,0) {$H_{0}=$};
\node (h0) at (0,0) {\nscale{$h_{0,0}$}};
\node (h1) at (1.5,0) {};
\node (dots1) at (1.75,0) {\nscale{$\dots$}};
\node (h_b_1) at (2,0) {};
\node (h_b) at (3.5,0) {\nscale{$h_{0,b}$}};
\node (h_b+1) at (5,0) {};
\node (dots_2) at (5.25,0) {\nscale{$\dots$}};
\node (h_2b_1) at (5.5,0) {};
\node (h_2b) at (7,0) {\nscale{$h_{0,2b}$}};
\node (h_2b+1) at (9,0) {\nscale{$h_{0,2b+1}$}};
\node (h_2b+2) at (10.5,0) {};
\node (dots_3) at (10.75,0) {\nscale{$\dots$}};
\node (h_3b) at (11,0) {};
\node (h_3b+1) at (12.5,0) { \nscale{$h_{0,3b+1}$} };
\node (h_3b+2) at (12.5,-1.2) {};
\node (dots_4) at (12.25,-1.2) {\nscale{$\dots$}};
\node (h_4b) at (12,-1.2) { };
\node (h_4b+1) at (10.5,-1.2) { \nscale{$h_{0,4b+1}$} };
\node (h_4b+2) at (8.75,-1.2) {};
\node (dots_5) at (8.5,-1.2) {\nscale{$\dots$}};
\node (h_5b) at (8.25,-1.2) { };
\node (h_5b+1) at (6.5,-1.2) { \nscale{$h_{0,5b+1}$} };
\node (h_5b+2) at (4.8,-1.2) { };
\node (dots_5) at (4.5,-1.2) {\nscale{$\dots$}};
\node (h_6b) at (4.25,-1.2) { };
\node (h_6b+1) at (2.5,-1.2) { \nscale{$h_{0,6b+1}$} };
\graph {
(h0) ->["\escale{$k$}"] (h1);
(h_b_1)->["\escale{$k$}"] (h_b) ->["\escale{$z$}"] (h_b+1);
(h_2b_1)->["\escale{$z$}"] (h_2b)->["\escale{$o_0$}"] (h_2b+1)->["\escale{$k$}"] (h_2b+2);
(h_3b)->["\escale{$k$}"] (h_3b+1)->["\escale{$z$}"] (h_3b+2);
(h_4b)->[swap, "\escale{$z$}"] (h_4b+1)->[swap, "\escale{$o_1$}"] (h_4b+2);
(h_5b)->[swap, "\escale{$o_1$}"] (h_5b+1)->[swap, "\escale{$k$}"] (h_5b+2);
(h_6b)->[swap, "\escale{$k$}"] (h_6b+1);
};
\end{tikzpicture}
\end{center}
For every $j\in \{0,1,2,3\}$, the union $U_\tau$ has the following gadget $C_j$ that provides $k_{j}$:
\begin{center}
\begin{tikzpicture}[yshift=-5cm]
\node (init) at (-1,0) {$C_j=$};
\foreach \i in {0,...,2} {\coordinate (\i) at (\i*1.55,0);}
\foreach \i in {0,...,2} {\node (p\i) at (\i) {\nscale{$c_{j,\i}$}};}
\node (p3) at (4.5,0) {};
\node (hdots_3) at (4.9,0) {\nscale{$\dots$}};
\node (db+2) at (5.25,0) {};
\node (db+3) at (6.75,0) { \nscale{$c_{j,b+2}$} };
\graph { (p0) ->["\escale{$o_{0}$}"] (p1) ->["\escale{$k_j$}"] (p2) ->["\escale{$o_{1}$}"] (p3);
(db+2) ->["\escale{$o_{1}$}"] (db+3);
};
\end{tikzpicture}
\end{center}
Finally, for all $j\in \{0,\dots, 2m-1\}$ and for all $i\in \{0,\dots, m-1\}$, the union $U_\tau$ has the gadgets $F_j, G_j, M_i$ and $T_i$ as defined in Section~\ref{sec:tau_ppt_solvability}.
Altogether, 
\[
U_\tau=U(H_0,C_0,\dots, C_3,F_0,\dots, F_{2m-1}, G_0,\dots, G_{2m-1},M_0,\dots, M_{m-1},T_0,\dots, T_{m-1}).
\]

\begin{lemma}\label{lem:tau_pt_essp_implies_model}
If $U_\tau$ has the $\tau$-ESSP, then $\varphi$ has a one-in-three model.
\end{lemma}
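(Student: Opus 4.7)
The plan is to mimic the proof of Lemma~\ref{lem:tau_ppt_essp_implies_model} while handling the extra signature cases admitted by impurity. Assume $U_\tau$ has the $\tau$-ESSP and let $R=(sup,sig)$ be a $\tau$-region that solves $\alpha=(k,h_{0,4b+1})$. The goal is to extract from $sig$ a one-in-three model of $\varphi$, which will turn out to be either $\{X\in V(\varphi)\mid sig(X)=(1,0)\}$ or $\{X\in V(\varphi)\mid sig(X)=(0,1)\}$. Since $k$ occurs $b$ times consecutively at $h_{0,0},\dots,h_{0,b}$, Lemma~\ref{lem:observations}(\ref{lem:absolute_value}) restricts $sig(k)$ to $\{(m,m)\mid 0\leq m\leq b\}\cup\{(0,1),(1,0)\}$, and $sig(k)=(0,0)$ is impossible since it would fire everywhere. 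I would then split into two principal cases.

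\emph{Case A: $sig(k)=(0,1)$.} Lemma~\ref{lem:observations}(\ref{lem:absolute_value}) applied to the three $b$-fold blocks of $k$ gives $sup(h_{0,0})=0$, $sup(h_{0,b})=b$, $sup(h_{0,2b+1})=0$, $sup(h_{0,3b+1})=b$, and $sup(h_{0,5b+1})=0$; unsolvability of $\alpha$ additionally forces $sup(h_{0,4b+1})=b$. Examining the two $b$-fold blocks of $z$ under the trichotomy from Lemma~\ref{lem:observations}(\ref{lem:absolute_value}) rules out $sig(z)\in\{(0,1),(1,0)\}$ and yields $sig(z)=(n,n)$ for some $n$; hence $sup(h_{0,2b})=b$ and consequently $sig(o_0)=(b,0)$. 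Applying the same trichotomy to the $b$-fold block of $o_1$ with $sup(h_{0,4b+1})=b$ and $sup(h_{0,5b+1})=0$ gives $sig(o_1)=(1,0)$. Propagating through each $C_j$ then yields $sup(c_{j,0})=b$, $sup(c_{j,1})=0$, $sup(c_{j,2})=b$, whence $sig(k_j)=(0,b)$ for all $j\in\{0,1,2,3\}$. This is precisely the signature constellation used in Lemma~\ref{lem:tau_ppt_essp_implies_model}, so the remaining analysis of $F_j,G_j,M_i,T_i$ transfers verbatim (as already observed in that proof) to the impure setting, and the set $M=\{X\in V(\varphi)\mid sig(X)=(1,0)\}$ is a one-in-three model of $\varphi$.

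\emph{Case B: $sig(k)\in\{(1,0)\}\cup\{(m,m)\mid m\geq 1\}$.} In both subcases the trailing $k$-block at $h_{0,5b+1}$ forces $sup(h_{0,5b+1})\geq 1$, and unsolvability of $\alpha$ forces $sup(h_{0,4b+1})=0$ (directly if $sig(k)=(1,0)$, and from $sup(h_{0,4b+1})<m$ combined with the forthcoming $o_1$-analysis if $sig(k)=(m,m)$). The $b$-fold $o_1$ block then leaves $sig(o_1)=(0,1)$ as the only option. Working backwards through the second $b$-fold $z$-block, Lemma~\ref{lem:observations}(\ref{lem:absolute_value}) permits $sig(z)=(0,0)$ in the $(1,0)$ subcase and $sig(z)=(1,0)$ in the $(m,m)$ subcase, and in either case the supports $sup(h_{0,2b+1})$, $sup(h_{0,3b+1})$, and $sup(h_{0,2b})$ are determined so as to yield $sig(o_0)=(0,b)$. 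Propagating through $C_j$ gives $sig(k_j)=(b,0)$ for all $j$. The remainder of the argument is dual to Case A and produces the one-in-three model $M=\{X\in V(\varphi)\mid sig(X)=(0,1)\}$.

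The main anticipated obstacle is the extra subcase $sig(k)=(m,m)$ with $m\geq 1$, which cannot arise in the pure type $\tau_{PPT}^b$. Its resolution relies on the additional $b$-fold repetitions of $z$ and $o_1$ in $H_0$ (absent from $H_1$): via Lemma~\ref{lem:observations}(\ref{lem:absolute_value}) these rigidify the supports sufficiently so that the $(m,m)$ subcase collapses to the same signature pattern for $o_0,o_1,k_0,\dots,k_3$ as the $(1,0)$ subcase, so the subsequent reasoning proceeds uniformly.
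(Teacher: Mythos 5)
Your proposal is correct and follows essentially the same route as the paper's proof: restrict $sig(k)$ via Lemma~\ref{lem:observations}, use the $b$-fold blocks of $z$ and $o_1$ in $H_0$ to pin down $sig(o_0)$ and $sig(o_1)$ to one of the two constellations $\{(b,0),(1,0)\}$ or $\{(0,b),(0,1)\}$, propagate through $C_j$ to get $sig(k_j)\in\{(0,b),(b,0)\}$, and then invoke the remark at the end of the proof of Lemma~\ref{lem:tau_ppt_essp_implies_model} that the analysis of $F_j,G_j,M_i,T_i$ carries over to the impure type. Your merging of the $(1,0)$ and $(m,m)$ subcases is only a cosmetic reorganization of the paper's case split ($sig(k)\in E_0$ versus $sig(k)\in\{(0,1),(1,0)\}$) and all the individual support computations agree with the paper's.
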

\begin{proof}
Since $U_\tau$ has the $\tau$-ESSP, there is a $\tau$-region of $U_\tau$ that solves $\alpha$.
Let $R=(sup, sig)$ be such a region.
In the following, we argue that either $sig(k_0)=\dots=sig(k_3)=(0,b)$ or $sig(k_0)=\dots=sig(k_3)=(b,0)$.
As already argued at the end of the proof of Lemma~\ref{lem:tau_ppt_essp_implies_model}, by the functionality of the remaining gadgets, this implies that $\{X\in V(\varphi)\mid sig(X)=(1,0)\}$ or $\{X\in V(\varphi)\mid sig(X)=(0,1)\}$ is a one-in-three model of $\varphi$.

Let $ E_{0}=\{ (m,m) \mid 0\le m \leq b\}$.
By definition, if $sig(k)=(m,m) \in E_0$ then $sup(h_{0,3b+1})\geq m$ and $sup(h_{0,5b+1})\geq m$.
Event $(m,m)$ occurs at every state $s\in S_{\tau_{PT}^b}$ that satisfies $s\geq m$.
Hence, by $\neg h_{0,4b+1}\ledge{(m,m)}$, we get $sup(h_{0,4b+1}) < m$.
Since $sup(h_{0,3b+1}) \geq m$ and $sup(h_{0,4b+1}) < m$, we have $sig^-(z) > sig^+(z)$.
Observe, that $z$ is $b$-consistent.
Thus, by Lemma~\ref{lem:observations}, we have $sup(z)=(1,0)$.
Similarly, we get $sig(o_1)=(0,1)$.
This immediately implies $sup(h_{0,2b})=0$ and $sup(h_{0,3b+1})=b$.
Moreover, by $sig(k)=(m,m) $ and $sup(h_{0,3b+1})=b$ we get $sup(h_{0,2b+1})=b$.
By $sup(h_{0,2b})=0$, this implies $sig(o_0)=(0,b)$.
Thus, we have $sig(o_0)=(0,b)$ and $sig(o_1)=(0,1)$.

Otherwise, if $sig(k)\not\in E_0$, then Lemma~\ref{lem:observations} ensures $sig(k)\in \{(1,0), (0,1)\}$.
If $sig(k)=(0,1)$ then we have $sup(h_{0, 4b+1})=b$, since $s\edge{(0,1)}$ for every state $s\in \{0,\dots, b-1\}$ of $\tau_{PT}^b$.
Moreover, again by $sig(k)=(0,1)$ we have $sup(h_{0,b})=sup(h_{0,3b+1})=b$ and $sup(h_{0,2b+1})=sup(h_{0, 5b+1})=0$.
By $sup(h_{0, 3b+1})=sup(h_{0,4b+1})=b$ we have $sig(z)\in E_0$, which together with $sup(h_{0,b})=b$ implies $sup(h_{0, 2b})=b$.
Thus, by $sup(h_{0,2b})=b$ and $sup(h_{0, 2b+1})=0$, it is $sig(o_0)=(b,0)$.
Moreover, by $sup(h_{0, 4b+1})=b$ and $sup(h_{0, 5b+1 })=0$, we conclude $sig(o_1)=(1,0)$.
Hence, we have $sig(o_0)=(b,0)$ and $sig(o_1)=(1,0)$.
Similar arguments show that $sig(k)=(1,0)$ implies $sig(o_0)=(0,b)$ and $sig(o_1)=(0,1)$.

So far we have argued that if $(sup, sig)$ is a $\tau_{PT}^b$-region of $U_\tau$ that solves $\alpha$, then either $sig(o_0)=(0,b)$ and $sig(o_1)=(0,1)$ or $sig(o_0)=(b,0)$ and $sig(o_1)=(1,0)$.
One easily finds out that if $sig(o_0)=(0,b)$ and $sig(o_1)=(0,1)$, then $sup(c_{j,1})=b$ and $sup(c_{j,2})=0$ and thus $sig(k_j)=(b,0)$ for all $j\in \{0,\dots, 3\}$.
Similarly, if $sig(o_0)=(b,0)$ and $sig(o_1)=(1,0)$, then $sup(c_{j,1})=0$, $sup(c_{j,2})=b$ and $sig(k_j)=(0,b)$ for all $j\in \{0,\dots,3\}$.
By the functionality of $F_j,G_j$ this implies $z_j\in E_0$.
Moreover, by the functionality of $M_i$, this implies if $sig(k_1)=(0,b)$, then $sig(X_i)\in \{(1,0)\}\cup E_0$ and if $sig(k_1)=(b,0)$, then $sig(X_i)\in \{(0,1), (0,0)\}$ for all $i\in \{0,\dots, m-1\}$.
Similar to the arguments for $\tau_{PPT}^b$, one argues that the gadgets $T_0,\dots, T_{m-1}$ then ensure that $\{e\in V(\varphi)\mid sig(e)=(0,1)\}$ or $\{e\in V(\varphi)\mid sig(e)=(1,0)\}$ defines a sought model of $\varphi$.
Thus, if $U_\tau$ has the $\tau$-ESSP or is $\tau$-solvable, which implies that $\alpha$ is $\tau$-solvable, then $\varphi$ has a one-in-three model.
\end{proof}

The following lemma is dedicated to the opposite direction:
\begin{lemma}\label{lem:tau_pt_model_implies_solvability}
If $\varphi$ has a one-in-three model, then $U_\tau$ has the $\tau$-ESSP and the $\tau$-SSP.
\end{lemma}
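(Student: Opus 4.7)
The plan is to mirror the structure of the proof of Lemma~\ref{lem:tau_ppt_model_implies_solvability}, adapted to the gadget $H_0$ (which replaces $H_1$ and introduces the additional event $z$) and to the impure type $\tau_{PT}^b$. Since every gadget of $U_\tau$ is linear, Lemma~\ref{lem:essp_implies_ssp} reduces the task to establishing the $\tau$-ESSP alone. So I fix a one-in-three model $M$ of $\varphi$ and show that every ESSA of $U_\tau$ is $\tau$-solvable.

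First I would sort the events of $U_\tau$. Each of $k$ and $z$ occurs in several disjoint maximal $b$-segments inside $H_0$ and is therefore \emph{not} thinly distributed, whereas $o_0, o_1, k_0,\dots,k_3, z_0,\dots,z_{2m-1}$ and $X_0,\dots,X_{m-1}$ are thinly distributed (each occurring as a single $c$-segment per gadget, with $c\in\{1,b\}$). For every thinly distributed event $e$ and every ESSA $(e,q)$ whose state $q$ sits in a gadget not containing $e$, or in a gadget where $e$ occurs before $q$, Condition~(\ref{lem:easy_solvability_not_or_initial}) of Lemma~\ref{lem:easy_solvability} applies directly. For the remaining thinly distributed atoms one selects a preceding thinly distributed event $x$ by inspection of each gadget (for instance $k_2$ for $X_{i_0}$ in $T_i$, $X_{i_0}$ for $X_{i_1}$ and $X_{i_1}$ for $X_{i_2}$ in $T_i$, $k_0$ for $z_j$ in $F_j$, $z_j$ for $o_0$ in $G_j$, and $k_j$ for $o_1$ in $C_j$) and applies Condition~(\ref{lem:easy_solvability_preceded}). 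This dispatches every ESSA except those involving $k$, $z$, $o_0$ or $o_1$ located at states of $H_0$.

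For those remaining atoms I would construct explicit $\tau_{PT}^b$-regions. The pivotal one, solving $\alpha=(k,h_{0,4b+1})$ and encoding $M$, is obtained by setting $sig(k)=(0,1)$, $sig(z)=(0,0)$, $sig(o_0)=(b,0)$, $sig(o_1)=(1,0)$, $sig(k_0)=\dots=sig(k_3)=(0,b)$, $sig(z_j)=(0,0)$ for every $j$, and $sig(X_i)=(1,0)$ if $X_i\in M$ and $sig(X_i)=(0,0)$ otherwise; the initial supports are then forced, giving $sup(\iota_G)=b$ for $G\in\{C_0,\dots,C_3,G_0,\dots,G_{2m-1}\}$ and $sup(\iota_G)=0$ for the other gadgets (including $H_0$). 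A gadget-by-gadget walk confirms that every support value stays in $\{0,\dots,b\}$ and that $sup(h_{0,4b+1})=b$, so $(0,1)$ is undefined there and $\alpha$ is solved. The remaining ESSA for $k$, $z$, $o_0$ and $o_1$ at states of $H_0$ are then handled by small local variants — shifting a single initial support between $0$ and $b$ and altering one neighbouring signature to $(b,0)$ or $(0,b)$ — in the same spirit as the auxiliary regions exhibited in the proof of Lemma~\ref{lem:tau_ppt_model_implies_solvability}.

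The main obstacle is to verify that the pivotal region extends consistently across all gadgets; this is where the one-in-three condition on $M$ is used crucially. In each $T_i$ the support has to traverse from $sup(t_{i,1})=b$ to $sup(t_{i,3b+3})=0$ so that $k_3$ with signature $(0,b)$ can fire, and this happens precisely because exactly one of $X_{i_0},X_{i_1},X_{i_2}$ is assigned signature $(1,0)$: two or more would drive the support below $0$ during the $(1,0)$-firings, and zero would leave the support at $b$ at $t_{i,3b+3}$. Once the pivotal region is in place, the rest of the proof reduces to the routine verification sketched above.
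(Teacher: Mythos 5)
Your overall strategy is the right one and matches the paper's: reduce to the ESSP via Lemma~\ref{lem:essp_implies_ssp}, dispatch the thinly distributed events with Lemma~\ref{lem:easy_solvability}, and handle $k$, $z$, $o_0$, $o_1$ at states of $H_0$ by explicit regions; your pivotal region is indeed a valid $\tau_{PT}^b$-region that solves $\alpha=(k,h_{0,4b+1})$. The genuine gap is in the final step. For $b\geq 2$, the atoms $(k,s)$ with $s$ an interior state of the $o_1$-block, that is $s\in\{h_{0,4b+2},\dots,h_{0,5b}\}$, cannot be solved by any region of the kind you describe. Since $k$, $z$ and $o_1$ each occur $b$ times in a row, Lemma~\ref{lem:observations}.\ref{lem:absolute_value} forces $sig(k)\in\{(0,1),(1,0)\}\cup\{(m,m)\mid 0\leq m\leq b\}$. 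If $sig(k)=(0,1)$, then $sup(h_{0,5b+1})=0$ and solving the atom would need $sup(h_{0,4b+2})=b$; but no admissible signature of the $b$-consistent event $o_1$ carries support $b$ to support $0$ along the remaining $b-1$ edges of the $o_1$-run ($(1,0)$ ends at $1$, the diagonal signatures keep it at $b$, and $(0,1)$ increases). The case $sig(k)=(1,0)$ fails symmetrically. Hence these atoms are solvable only with an impure signature $sig(k)=(m,m)$, $m\geq 2$, which in turn forces $sig(o_1)=(0,1)$, $sup(h_{0,4b+1})=0$, and a globally different support profile that once more has to be threaded through every $C_j$ and $T_i$ using the model $M$. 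Your ``small local variants --- shifting a single initial support and altering one neighbouring signature to $(b,0)$ or $(0,b)$'' stay within pure-style signatures and therefore provably miss exactly the atoms for which the impureness of $\tau_{PT}^b$ is indispensable; this is precisely why the paper's auxiliary regions assign impure signatures such as $(b,b)$ to $z$ and $o_1$.

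A second, smaller slip: when applying Condition~(\ref{lem:easy_solvability_preceded}) of Lemma~\ref{lem:easy_solvability} you choose $x=X_{i_0}$ for $a=X_{i_1}$ (and $x=X_{i_1}$ for $a=X_{i_2}$) in $T_i$. This violates the hypotheses twice: the edge entering the source of the $X_{i_1}$-run is labelled $z_{2i}$, not $X_{i_0}$, so the required pattern $\edge{x}z\edge{a}$ fails, and condition (c) demands that $x$ be $1$-consistent when $a$ is $b$-consistent, whereas $X_{i_0}$ is $b$-consistent. The intended witnesses are the $1$-consistent events $z_{2i}$ and $z_{2i+1}$. This is easily repaired; the treatment of the interior states of $H_0$ is the substantive gap.
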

\begin{proof}
In the following, we argue that if $M$ is a one-in-three model of $\varphi$, then $U_\tau$ has the $\tau$-ESSP and thus has also the $\tau$-SSP, since all gadgets are linear.
Notice that if $e$ is an event and $G$ is a gadget of $U_\tau$ such that $e$ does not occur in $G$, then $(e,s)$ is $\tau$-solvable for all $s\in S(G)$.
A solving region $R=(sup, sig)$ is defined as follows:
for all gadgets $G'$ of $U_\tau$, if $e\in E(G')$, then $sup(\iota_{G'})=b$, otherwise $sup(\iota_{G'})=0$;
for all events $e'\in E(U_\tau)$, $if e'=e$, then $sig(e')=(b,b)$;
otherwise $sig(e')=(0,0)$.
Thus, in the following, we only argue for valid atoms $(e,s)$ where $e$ and $s$ occur in the same gadget.

\medskip
Let $e\in E(U_\tau)\setminus\{k, z\}$, let $G$ be a gadget of $U_\tau$ and let $s\in S(G)$ such that $\neg s\edge{e}$, where
all of $e, G$ and $s$ are arbitrary but fixed.
The event $e$ is thinly distributed.
Moreover, if $e\not\in \{o_0,o_1\}$ or if $e\in \{o_0,o_1\}$ and $G\not=H_0$, then $e$ satisfies Condition~\ref{lem:easy_solvability_not_or_initial} or Condition~\ref{lem:easy_solvability_preceded} of Lemma~\ref{lem:easy_solvability}.
Thus, by Lemma~\ref{lem:easy_solvability}, in these cases, the atom $(e,s)$ is $\tau$-solvable.

For convenience, let $I=\{\iota_G\mid \text{$G$ is a gadget of $U_\tau$}\}$ be the set of the initial states of the gadgets of $U_\tau$.

\medskip
To complete the proof for the solvability of $o_0$ and $o_1$, it remains to argue that $(o_0, s)$ and $(o_1,s')$ are solvable for all relevant $s,s'\in S(H_0)$:
By Lemma~\ref{lem:easy_solvability}.\ref{lem:easy_solvability_not_or_initial}, the atoms $(o_0,s)$ and $(o_1,s')$ are solvable for all $s\in \{h_{0, 2b+1},\dots, h_{0,6b+1}\}$ and for all $s'\in \{h_{0, 5b+1},\dots, h_{0,6b+1}\}$.
The following region $R=(sup, sig)$ solves $(o_0,s)$ for all $s\in \{h_{0,0},\dots, h_{0,2b-1}\}$:
$sup(h_{0,0})=b$;
for all $s\in I\setminus\{h_{0,0}\}$, we define $sup(s)=0$;
for all $e\in E(U_\tau)$, if $e=o_0$, then $sig(e)=(0,b)$;
if $e=z$, then $sig(e)=(1,0)$;
otherwise $sig(e)=(0,0)$.

The following region $R=(sup, sig)$ solves $(o_1,s)$ for all $s\in \{h_{0,0},\dots, h_{0,4b}\}\setminus\{h_{0,2b}\}$ and uses the model $M$ of $\varphi$:
We start with the support of the initial states:
$sup(h_{0,0})=0$;
if $s\in \{f_{0,0}, \dots, f_{2m-1,0}, m_{0,0},\dots, m_{m-1,0}, t_{0,0}, \dots, t_{m-1,0}\}$, then $sup(s)=0$;
if $s\in \{c_{0,0}, \dots, c_{3,0}\}\cup\{g_{0,0}, \dots, g_{2m-1,0}\}$, then $sup(s)=b$.
The signature is defined as follows:
for all $e\in E(U_\tau)$, if $e=o_1$, then $sig(e)=(b,b)$;
if $e=z$, then $sig(e)=(0,1)$;
if $e=o_0$, then $sig(e)=(b,0)$;
if $e\in \{k_0,\dots, k_3\}$, then $sig(e)=(0,b)$;
if $e\in M$, then $sig(e)=(1,0)$;
otherwise $sig(e)=(0,0)$.

The following region $R=(sup, sig)$ solves $(o_1,h_{0,2b})$:
$sup(s)=0$ for all $s\in I$;
for all $e\in E(U_\tau)$, if $e=o_1$, then $sig(o_1)=(b,b)$;
if $e=o_0$, then $sig(e)=(0,b)$;
otherwise $sig(e)=(0,0)$.
This proves the solvability of $o_1$.

Since $z$ occurs only in $H_0$, for the solvability of $z$, it remains to argue that $(z,s)$ is $\tau$-solvable for all relevant $s\in S(H_0)$.
The following region $R=(sup, sig)$ does this for all $s\in S(H_0)\setminus\{h_{0,6b+1}\}$ and uses the model $M$ of $\varphi$.
Moreover, this region also solves $(k,s)$ for all $s\in S(H_0)$ and, thus, proves the solvability of $k$:
if $s\in \{h_{0,0}, f_{0,0}, \dots, f_{2m-1,0}, m_{0,0},\dots, m_{m-1,0}, t_{0,0}, \dots, t_{m-1,0}\}$, then $sup(s)=0$;
if $s\in \{c_{0,0}, \dots, c_{3,0}, g_{0,0}, \dots, g_{2m-1,0}\}$, then $sup(s)=b$;
for all $e\in E(U_\tau)$, if $e=z$, then $sig(z)=(b,b)$;
if $e=k$, then $sig(e)=(0,1)$;
if $e\in \{o_0,o_1\}$, then $sig(e)=(b,0)$;
if $e\in \{k_0,\dots, k_3\}$, then $sig(e)=(0,b)$;
if $e\in M$, then $sig(e)=(1,0)$;
otherwise $sig(e)=(0,0)$.

\medskip
One easily finds that $(z, h_{0,6b+1})$ is $\tau$-solvable.
Altogether, this proves that if $\varphi$ is one-in-three satisfiable, then $U_\tau$ has the $\tau$-ESSP.
Since all gadgets are linear, this completes the proof.
\end{proof}

\subsection{NP-hardness of \textsc{$\tau_{PPT}^b$-SSP} and \textsc{$\tau_{PT}^b$-SSP}}\label{sec:tau_ppt_tau_pt_ssp}%
%
In the remainder of this section, unless stated explicitly otherwise, let $\tau\in \{\tau_{PPT}^b, \tau_{PT}^b\}$ be arbitrary but fixed.
The union $U_\tau$ has the following gadget $H_2$ that provides the atom $\alpha = (h_{2,0}, h_{2,b})$:
\begin{center}
\begin{tikzpicture}
\node at (-1,0) {$H_2=$};
\node (h0) at (-0.25,0) {\nscale{$h_{2,0}$}};
\node (h1) at (1,0) {\nscale{}};
\node (hdots_1) at (1.25,0) {\nscale{$\dots$}};
\node (hb_1) at (1.5,0) {};
\node (hb) at (2.75,0) {\nscale{$h_{2,b}$}};
\node (hb+1) at (4.5,0) {\nscale{$h_{2,b+1}$}};
\node (hb+2) at (6,0) {};
\node (hdots_2) at (6.25,0) {\nscale{$\dots$}};
\node (h2b) at (6.5,0) {};
\node (h2b+1) at (8,0) {\nscale{$h_{2,2b+1}$}};
\node (h2b+2) at (10,0) {\nscale{$h_{2,2b+2}$}};
\node (h2b+3) at (11.5,0) {};
\node (hdots_3) at (11.75,0) {\nscale{$\dots$}};
\node (h3b+1) at (12,0) {};
\node (h3b+2) at (13.5,0) {\nscale{$h_{2,3b+2}$}};
\graph {
(h0) ->["\escale{$k$}"] (h1);
(hb_1)->["\escale{$k$}"] (hb) ->["\escale{$o_0$}"] (hb+1)->["\escale{$k$}"] (hb+2);
(h2b) ->["\escale{$k$}"] (h2b+1)->["\escale{$o_2$}"] (h2b+2)->["\escale{$k$}"] (h2b+3);
(h3b+1) ->["\escale{$k$}"] (h3b+2);
};
\end{tikzpicture}
\end{center}
Moreover, the union $U_\tau$ has every gadget that has been defined for $U_{\tau_{PPT}^b}$ in Section~\ref{sec:tau_ppt_solvability} except for $H_1$.
Altogether, $U_\tau$ is defined as follows:
\[
U_\tau=U(H_2,D_0,\dots, D_3, F_0,\dots, F_{2m-1}, G_0,\dots, G_{2m-1},M_0,\dots, M_{m-1}, T_0,\dots, T_{m-1}).
\]

\begin{lemma}\label{lem:tau_ppt_tau_pt_ssp_implies_model}
If $U_\tau$ has the $\tau$-SSP, then $\varphi$ has a one-in-three model.
\end{lemma}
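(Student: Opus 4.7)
The plan is to mirror the argument of Lemma~\ref{lem:tau_ppt_essp_implies_model}, but starting from a $\tau$-region that solves the SSA $\alpha=(h_{2,0},h_{2,b})$ instead of an ESSA. Let $R=(sup,sig)$ be such a region, whose existence is guaranteed by the assumed $\tau$-SSP of $U_\tau$. The goal is to show that either the set $\{X\in V(\varphi)\mid sig(X)=(1,0)\}$ or the set $\{X\in V(\varphi)\mid sig(X)=(0,1)\}$ is a one-in-three model of $\varphi$.

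First, I would extract tight signature constraints from $H_2$. Because $h_{2,0}\edge{k}\cdots\edge{k}h_{2,b}$ consists of $b$ consecutive $k$-edges and $R$ solves $\alpha$, we have $sup(h_{2,0})\neq sup(h_{2,b})$. Lemma~\ref{lem:observations}.\ref{lem:absolute_value} then forces $sig(k)\in\{(1,0),(0,1)\}$. Treating the case $sig(k)=(0,1)$ (the other case is symmetric), the same lemma pins $sup(h_{2,0})=0$ and $sup(h_{2,b})=b$. Validity of the two further $k^b$-segments of $H_2$ forces $sup(h_{2,b+1})=sup(h_{2,2b+2})=0$ and $sup(h_{2,2b+1})=sup(h_{2,3b+2})=b$, and hence $sig(o_0)=sig(o_2)=(b,0)$, as this is the unique signature that transitions $b\mapsto 0$ both in $\tau_{PT}^b$ and $\tau_{PPT}^b$.

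From $sig(o_0)=(b,0)$ and the shape of $D_j$, I then obtain $sup(d_{j,0})=b$ and $sup(d_{j,1})=0$ for every $j\in\{0,1,2,3\}$. Combined with the (separately forced) signature of $o_1$, this pins $sup(d_{j,2})=b$ and hence $sig(k_j)=(0,b)$ for every $j$. Once $sig(k_0)=\cdots=sig(k_3)=(0,b)$ is in place, the remainder of the proof of Lemma~\ref{lem:tau_ppt_essp_implies_model} transfers verbatim: the $F_j,G_j$ pair forces $sig^-(z_j)=sig^+(z_j)$; each $M_i$ restricts $sig(X_i)$ to $\{(1,0),(0,0)\}$ (respectively their impure variants for $\tau_{PT}^b$); and each $T_i$ enforces that exactly one of $sig(X_{i_0}), sig(X_{i_1}), sig(X_{i_2})$ equals $(1,0)$, i.e.\ that $\{X\in V(\varphi)\mid sig(X)=(1,0)\}$ meets each clause in exactly one variable.

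The main obstacle I anticipate is the derivation of $sig(o_1)=(b,0)$. In the ESSP proof, this was read off directly from $H_1$, whereas $H_2$ contains $o_2$ in place of $o_1$, so $sig(o_1)$ is not constrained by $R$ on $H_2$ alone. I would overcome this in one of two ways: \emph{either} argue that the signature of $o_2$ on the third $k^b$-segment of $H_2$ plays the same bookkeeping role that $o_1$ played in $H_1$ for the $D_j$-analysis, so that the required information about $sig(k_j)$ can be recovered using only $sig(o_0)$ and purity (this uses the consistency of $sig(o_1)$ across $D_0,\dots,D_3$ together with Lemma~\ref{lem:observations}.\ref{lem:sig_summation_along_paths}); \emph{or} supplement $R$ with an additional region obtained from the $\tau$-SSP applied to a suitable auxiliary SSA inside a $D_j$, which pins $sig(o_1)$ without disturbing the constraints already derived from $H_2$. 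Either route reduces the remainder of the argument to the analysis of $T_i$ already carried out in Lemma~\ref{lem:tau_ppt_essp_implies_model}, from which the one-in-three model is read off.
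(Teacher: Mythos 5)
Your overall route is the same as the paper's: a region $R=(sup,sig)$ solving the SSA $\alpha=(h_{2,0},h_{2,b})$ must have $sig(k)\in\{(1,0),(0,1)\}$ by Lemma~\ref{lem:observations}, the three $k^b$-segments of $H_2$ then pin the supports of their endpoints and force $sig(o_0)=sig(o_2)\in\{(b,0),(0,b)\}$, and from there the paper re-runs the gadget analysis of Lemma~\ref{lem:tau_ppt_essp_implies_model}. Up to and including the determination of $sig(o_0)$ and $sig(o_2)$ your argument is correct, and you have put your finger on exactly the delicate step: in this union $o_1$ occurs only in $D_0,\dots,D_3$, so $R$'s behaviour on $H_2$ does not constrain $sig(o_1)$, and without that the conclusion $sig(k_j)=(0,b)$ does not follow. (The paper's own text passes over this point, asserting that $sig(o_0)=sig(o_2)=(b,0)$ already yields $sig(k_j)=(0,b)$.)

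The problem is that neither of your two proposed repairs closes this gap. Repair (a) cannot work: $o_2$ does not occur in any $D_j$, so its signature carries no information into the $D_j$-analysis, and $sig(o_0)=(b,0)$ together with purity only yields $sup(d_{j,1})=0$, hence $sig(k_j)=(0,n_j)$ for \emph{some} $n_j\in\{0,\dots,b\}$. Indeed, the assignment $sig(k_j)=(0,0)$ for all $j$, $sig(o_1)=(0,0)$, and $sig(e)=(0,0)$ for every event of $F_j,G_j,M_i,T_i$ extends to a valid $\tau$-region of the whole union that still solves $\alpha$; so nothing forces $sup(d_{j,2})=b$, and the downstream analysis of $F_j,G_j,M_i,T_i$ collapses (e.g.\ $sup(f_{j,1})$ is then unconstrained, so $sig(z_j)$ need not be neutral). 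Repair (b) is unsound in principle: regions solving different separation atoms are independent objects, and the one-in-three model must be read off a \emph{single} region; knowing that some other region $R'$ solving an auxiliary SSA inside $D_j$ has $sig'(o_1)\neq(0,0)$ tells you nothing about $sig(o_1)$ in $R$. What the argument actually needs is that the second $o$-labelled edge of $H_2$ carries the \emph{same} event $o_1$ that occurs in the $D_j$ (playing the role it plays in $H_1$ for the ESSP reduction), so that $sig(o_1)=(b,0)$ is forced within $R$ itself and $sup(d_{j,2})=b$, hence $sig(k_j)=(0,b)$, follows; with $o_2$ treated as a fresh event, the chain of implications breaks exactly where you flagged it, and your proposal does not repair it.
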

\begin{proof}
Since $U_\tau$ has the $\tau$-SSP, there is a $\tau$-region that solves $\alpha$.
Let $R=(sup, sig)$ be such a region.
We argue that the signature of the variable events define a sought model of $\varphi$:
The event $k$ occurs $b$ times in a row at $h_{2,0}$.
Thus, by Lemma~\ref{lem:observations}, a region $(sup, sig)$ solving $(h_{2,0}, h_{2,b})$ satisfies $sig(k)\in \{(1,0), (0,1)\}$.
If $sig(k)=(1,0)$, then $sup(h_{2,b})=sup(h_{2,2b+1})=b$ and $sup(h_{2,b+1})=sup(h_{2,2b+2})=0$.
This implies $sig(o_0)=sig(o_2)=(b,0)$ and, thus, $sig(k_j)=(0,b)$ for all $j\in \{0,\dots, 3\}$.
Otherwise, if $sig(k)=(0,1)$ then $sup(h_{2,b})=sup(h_{2,2b+1})=0$ and $sup(h_{2,b+1})=sup(h_{2,2b+2})=b$.
This implies $sig(o_0)=sig(o_2)=(0,b)$ and $sig(k_j)=(b,0)$ for all $j\in \{0,\dots, 3\}$.
Just like before, this proves the one-in-three satisfiability of $\varphi$.
\end{proof}

The following lemma addresses the opposite direction:
\begin{lemma}\label{lem:tau_ppt_tau_pt_model_implies_ssp}
If $\varphi$ has a one-in-three model, then $U_\tau$ has the $\tau$-SSP.
\end{lemma}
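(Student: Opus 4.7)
The plan is to reduce to $\tau$-ESSP via Lemma~\ref{lem:essp_implies_ssp}. Every gadget of $U_\tau$ is a linear TS and, by the definition of SSA in a union, every SSA of $U_\tau$ consists of two distinct states of the \emph{same} gadget. It therefore suffices to exhibit, for each gadget $G$, a set of $\tau$-regions of $U_\tau$ whose restrictions to $G$ witness the $\tau$-ESSP of $G$; Lemma~\ref{lem:essp_implies_ssp} then upgrades each such set to a witness of the $\tau$-SSP of $G$, and their union is a $\tau$-SSP-witness of $U_\tau$.

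First, I would solve the distinguished atom $\alpha = (h_{2,0}, h_{2,b})$---and, along the way, many other atoms in $H_2$---by an explicit region $R = (sup, sig)$ modelled on the one built in the proof of Lemma~\ref{lem:tau_ppt_model_implies_solvability}. Using the one-in-three model $M$ of $\varphi$, I would set $sig(k) = (0,1)$, $sig(o_0) = sig(o_1) = sig(o_2) = (b,0)$, $sig(k_0) = \dots = sig(k_3) = (0,b)$, $sig(z_j) = (0,0)$ for all $j$, $sig(X) = (1,0)$ for $X \in M$, and $sig(e) = (0,0)$ for every remaining event; the initial support is $0$ on the initial states of $H_2, F_j, M_i, T_i$ and $b$ on the initial states of $D_j, G_j$. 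A path-wise check using Lemma~\ref{lem:observations}.\ref{lem:sig_summation_along_paths} confirms that this is a valid $\tau$-region: along each $T_i$, exactly one of $X_{i_0}, X_{i_1}, X_{i_2}$ carries signature $(1,0)$ (by the one-in-three property), which is what keeps the support trajectory within $\{0, \dots, b\}$. All signatures are pure, so $R$ is simultaneously a $\tau_{PPT}^b$- and a $\tau_{PT}^b$-region. Since $sup(h_{2,0}) = 0 \neq b = sup(h_{2,b})$, $R$ solves $\alpha$, and by construction it additionally separates the support at many other state pairs inside $H_2$.

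For the remaining SSA, the arguments carry over almost verbatim from Section~\ref{sec:tau_ppt_solvability} (and, for $\tau_{PT}^b$, Section~\ref{sec:tau_pt_solvability}). Every auxiliary gadget $D_j, F_j, G_j, M_i, T_i$ is identical to the one used there, and the events $k, o_0, o_2$ of $H_2$ are either $b$- or $1$-consistent and thinly distributed. Hence, the generic Lemma~\ref{lem:easy_solvability} solves the bulk of the ESSA inside every gadget, and the few remaining atoms in $H_2$---in particular those at states preceding the first occurrence of $o_0$ or $o_2$---can be handled by regions of the same shape as the auxiliary regions constructed for $(o_0, q)$ and $(o_1, q)$ in the proof of Lemma~\ref{lem:tau_ppt_model_implies_solvability}, with the constants adjusted to the geometry of $H_2$. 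Together these regions witness the $\tau$-ESSP of every gadget, and Lemma~\ref{lem:essp_implies_ssp} then delivers $\tau$-SSP.

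The main difficulty, I expect, is bookkeeping rather than ideas: one must verify for every gadget that the prescribed initial support together with the signatures never leaves $\{0, \dots, b\}$. The one-in-three property is tight for the $T_i$-check (a model with two true literals in some clause would force the support to exceed $b$ or fall below $0$), precisely mirroring the necessity argument in Lemma~\ref{lem:tau_ppt_tau_pt_ssp_implies_model}.
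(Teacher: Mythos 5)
Your proposal is correct and follows essentially the same route as the paper: dispatch the bulk of the atoms via Lemma~\ref{lem:easy_solvability} and Lemma~\ref{lem:essp_implies_ssp} (every gadget is linear and every SSA lives inside one gadget), and solve $\alpha=(h_{2,0},h_{2,b})$ and the other pairs in $H_2$ by adapting the region from Section~\ref{sec:tau_ppt_solvability} that solves $(k,h_{1,2b+4})$. Your explicit stipulation $sig(o_2)=(b,0)$ is a needed detail that the paper leaves implicit in "defined in accordance to", so the check is sound.
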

\begin{proof}
Let $M$ be a one-in-three model of $\varphi$.
We briefly argue, that $U_\tau$ has the $\tau$-SSP.
For start, let $e\in E(U_\tau)\setminus\{k\}$ be arbitrary but fixed.
The event $e$ is thinly distributed.
Moreover, if $s\in S(U_\tau)\setminus S(H_2)$ and $\neg s\edge{e}$, then, by Lemma~\ref{lem:easy_solvability}, $(e,s)$ is $\tau$-solvable.
By Lemma~\ref{lem:essp_implies_ssp}, this implies that if $(s,s')$ is an SSA of $U_\tau$ such that $s,s'\not\in S(H_2)$, then $(s,s')$ is $\tau$-solvable.
Thus, it remains to show that any SSA $(s,s')$ of $U_\tau$ where $s,s'\in S(H_2)$ is $\tau$-solvable, too.
The corresponding regions can be defined similar to those from Section~\ref{sec:tau_ppt_solvability} and Section~\ref{sec:tau_pt_solvability}.
In particular, the atom $(h_{2,0}, h_{2,b})$ can be solved by a region that is defined in accordance to the region $R=(sup, sig)$ of Section~\ref{sec:tau_ppt_solvability} that solves $(k ,h_{1,2b+4})$; one simply has to replace $sup(h_{1,0})=0$ by $sup(h_{2,0})=0$ and to ignore the events $y_0$ and $y_1$.
The resulting region also solves $(s,s')$ if $s\not=s'\in \{h_{2,0},\dots, h_{2,b}\}$ or $s\not=s'\in \{h_{2,b+1},\dots, h_{2,2b+1}\}$ or $s\not=s'\in \{h_{2,2b+2},\dots, h_{2,3b+2}\}$.
Finally, it is easy to see that all states of $\{h_{2,0},\dots, h_{2,b}\}$ are separable from all states of $\{h_{2,b+1},\dots, h_{2,2b+1}\}\cup \{h_{2,2b+2},\dots, h_{2,3b+2}\}$, and that all states of $\{h_{2,b+1},\dots, h_{2,2b+1}\}$ are separable from all states of $\{h_{2,2b+2},\dots, h_{2,3b+2}\}$.
Altogether, this proves that if $M$ has a one-in-three model, then $U_\tau$ has the $\tau$-SSP.
\end{proof}

\subsection{NP-hardness of \textsc{$\tau$-solvability} and \textsc{$\tau$-ESSP} for $\tau=\tau_{\mathbb{Z}PPT}$ and $\tau=\tau_{\mathbb{Z}PT}$}\label{sec:tau_zppt_tau_zpt_solvability}%
%
In the remainder of this section, unless stated explicitly otherwise, let $\tau\in \{\tau_{\mathbb{Z}PT}^b,\tau_{\mathbb{Z}PPT}^b\}$ and let $E_0=\{(m,m) \vert 1 \leq m\leq b\}\cup \{  0 \}$.
The union $U_\tau$ has the following TS $H_3$ that provides the atom $\alpha=(k, h_{3,1,b-1})$:
\begin{center}
\begin{tikzpicture}
\node (init) at (-0.9,0) {$H_3=$};
\node (h0) at (0,0) {\nscale{$h_{3,0,0}$}};
\node (h1) at (1.5,0) {\nscale{}};
\node (h_2_dots) at (1.75,0) {\nscale{$\dots$}};
\node (h_b_2) at (2,0) {};
\node (h_b_1) at (3.5,0) {\nscale{$h_{3,0,b-1}$}};
\node (h_b) at (5.5,0) {\nscale{$h_{3,0,b}$}};
\node (h_b+1) at (0,-1) {\nscale{$h_{3,1,0}$}};
\node (h_b+2) at (1.5,-1) {};
\node (h_b+2_dots) at (1.75,-1) {\nscale{$\dots$}};
\node (h_2b_3) at (2,-1) {};
\node (h_2b_2) at (3.5,-1) {\nscale{$h_{3,1,b-1}$}};
\graph{
(h0) ->["\escale{$k$}"] (h1);
(h0) ->["\escale{$u$}", swap](h_b+1)->["\escale{$k$}"](h_b+2);
(h_b_2)->["\escale{$k$}"] (h_b_1)->["\escale{$k$}"] (h_b);
(h_2b_3)->["\escale{$k$}"] (h_2b_2);
(h_2b_2)->[swap, "\escale{$z$}"] (h_b);
};
\end{tikzpicture}
\end{center}
Moreover, for all $j\in \{0,\dots, m-1\}$, the union $U_\tau$ has the following gadgets $F_j$ and $G_j$ that use the variable $X_j$ as event:
\begin{center}
\begin{tikzpicture}
\begin{scope}
\node at (-0.9,0) {$F_j=$};
\node (f0) at (0,0) {\nscale{$f_{j,0,0}$}};
\node (f1) at (1.5,0) {\nscale{}};
\node (f_2_dots) at (1.75,0) {\nscale{$\dots$}};
\node (f_b_1) at (2,0) {};
\node (f_b) at (3.5,0) {\nscale{$f_{j,0,b-1}$}};
\node (f_b') at (5.25,0) {\nscale{$f_{j,0,b}$}};
\node (f_b+1) at (0,-1) {\nscale{$f_{j,1,0}$}};
\node (f_b+2) at (1.5,-1) {};
\node (f_b+2_dots) at (1.75,-1) {\nscale{$\dots$}};
\node (f_2b_1) at (2,-1) {};
\node (f_2b) at (3.5,-1) {\nscale{$f_{j,1,b-1}$}};
\graph{
(f0) ->["\escale{$k$}"] (f1);
(f0) ->["\escale{$v_j$}", swap](f_b+1)->["\escale{$k$}"](f_b+2);
(f_b_1)->["\escale{$k$}"] (f_b)->["\escale{$k$}"] (f_b');
(f_2b_1)->["\escale{$k$}"] (f_2b);
(f_2b)->[swap, "\escale{$X_{j}$}"] (f_b');
};
\end{scope}
\begin{scope}[xshift= 8cm]
\node at (-0.9,0) {$G_j=$};
\node (f0) at (0,0) {\nscale{$g_{j,0}$}};
\node (f1) at (1.5,0) {\nscale{}};
\node (f_2_dots) at (1.75,0) {\nscale{$\dots$}};
\node (f_b_1) at (2,0) {};
\node (f_b) at (3.5,0) {\nscale{$g_{j,b}$}};
\node (f_b+1) at (5,0) {\nscale{$g_{j,b+1}$}};
\graph{
(f0) ->["\escale{$k$}"] (f1);
(f_b_1)->["\escale{$k$}"] (f_b)->["\escale{$X_j$}"](f_b+1);
};
\end{scope}
\end{tikzpicture}
\end{center}
Finally, for all $i\in \{0,\dots, m-1\}$, the union $U_\tau$ has the following gadget $T_i$ that uses the variables of the clause $\zeta_i=\{X_{i_0}, X_{i_1}, X_{i_2}\}$ as events:
\begin{center}
\begin{tikzpicture}
\node at (-1.4,0) {$T_i=$};
\node (t0) at (-0.6,0) {\nscale{$t_{i,0}$}};
\node (t1) at (0.7,0) {\nscale{}};
\node (t_2_dots) at (0.95,0) {\nscale{$\dots$}};
\node (t_b_1) at (1.2,0) {};
\node (t_b) at (2.5,0) {\nscale{$t_{i,b}$}};
\node (t_b+1) at (4.2,0) {\nscale{$t_{i,b+1}$}};
\node (t_b+2) at (6,0) {\nscale{$t_{i,b+2}$}};
\node (t_b+3) at (7.75,0) {\nscale{$t_{i,b+3}$}};
\node (t_b+4) at (9.5,0) {\nscale{$t_{i,b+4}$}};
\node (t_b+5) at (11,0) {};
\node (t_b+5_dots) at (11.25,0) {\nscale{$\dots$}};
\node (t_2b+3) at (11.5,0) {\nscale{}};
\node (t_2b+4) at (13,0) {\nscale{$t_{i, 2b+4}$}};
\graph{
(t0) ->["\escale{$k$}"] (t1);
(t_b_1)->["\escale{$k$}"] (t_b) ->["\escale{$X_{i_0}$}"] (t_b+1)->["\escale{$X_{i_1}$}"] (t_b+2)->["\escale{$X_{i_2}$}"] (t_b+3)->["\escale{$z$}"] (t_b+4)->["\escale{$k$}"] (t_b+5);
(t_2b+3)->["\escale{$k$}"] (t_2b+4);
};
\end{tikzpicture}
\end{center}
Altogether, \[U_\tau=(H_3,F_0,G_0,\dots, F_{m-1}, G_{m-1}, T_0,\dots,T_{m-1}).\]

\begin{lemma}\label{lem:tau_zppt_tau_zpt_essp_implies_model}
If $U_\tau$ has the $\tau$-ESSP, then $\varphi$ has a one-in-three model.
\end{lemma}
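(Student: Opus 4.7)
The plan is to analyze an arbitrary $\tau$-region $R=(sup,sig)$ of $U_\tau$ that solves $\alpha=(k,h_{3,1,b-1})$ and extract from it a one-in-three model of $\varphi$. The key idea is that the signature of each variable event $X_j$ must land in one of two disjoint pools of signatures, and then for each clause gadget $T_i$ a simple mod-$(b+1)$ count forces exactly one of the three variables of $\zeta_i$ to lie in the "active" pool.

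First, I would pin down $sig(k)$. Because $k$ occurs $b$ times in a row in $H_3$, Lemma~\ref{lem:observations} restricts $sig(k)$ to one of $\{(1,0),(0,1)\}$, an element of $E_0$, or a group event in $\{0,\dots,b\}$. The latter two families are defined at every state of $\tau$, so they cannot solve $\alpha$; hence $sig(k)\in\{(1,0),(0,1)\}$. By symmetry it suffices to treat $sig(k)=(0,1)$, which by Lemma~\ref{lem:observations} yields $sup(h_{3,0,0})=0$, $sup(h_{3,0,b})=b$, and (since $(0,1)$ must be undefined at $h_{3,1,b-1}$) $sup(h_{3,1,b-1})=b$. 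The $z$-edge of $H_3$ then forces $sig(z)$ to map $b$ to $b$, and a brief case split over $E_{\tau_{\mathbb{Z}PT}^b}$ gives $sig(z)\in E_0$.

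Next I would classify $sig(X_j)$ using $F_j$ and $G_j$. The $k$-chains give $sup(f_{j,0,0})=sup(g_{j,0})=0$ and $sup(f_{j,0,b})=sup(g_{j,b})=b$, while counting $b-1$ consecutive firings of $(0,1)$ from $f_{j,1,0}$ forces $sup(f_{j,1,b-1})\in\{b-1,b\}$. Since $sig(X_j)$ must send $sup(f_{j,1,b-1})$ to $b$ and must also be defined at $sup(g_{j,b})=b$, I would show that the only options are $sig(X_j)\in E_0$ (the "idle" pool, when $sup(f_{j,1,b-1})=b$) or $sig(X_j)=1$ (the "active" group event, when $sup(f_{j,1,b-1})=b-1$). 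The delicate piece here, which I expect to be the main obstacle, is ruling out every pure/impure $(p,p+1)$ as a candidate in the active branch: although each of these realises the map $b-1\mapsto b$, all of them overflow at $b$, so they fail the $G_j$-constraint; only the group event $1$ survives.

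Finally I would push these two pools into $T_i$. The surrounding $k$-chains give $sup(t_{i,b})=b$ and $sup(t_{i,b+4})=0$, and $sig(z)\in E_0$ then forces $sup(t_{i,b+3})=0$. Hence the subpath $X_{i_0}X_{i_1}X_{i_2}$ must realise a net change from $b$ to $0$ modulo $b+1$. Each idle signature contributes $0$ and each occurrence of $1$ contributes $+1\pmod{b+1}$, so the number $k$ of active variables in $\zeta_i$ satisfies $k\equiv 1\pmod{b+1}$. As $k\in\{0,1,2,3\}$ and $b\ge 2$, this forces $k=1$. Setting $M=\{X_j\in V(\varphi)\mid sig(X_j)=1\}$ therefore yields $|M\cap\zeta_i|=1$ for every $i$, which is the required one-in-three model. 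The symmetric case $sig(k)=(1,0)$ is handled in the same way, with the active pool now being the group event $b$ (which acts as $-1$ modulo $b+1$); this completes the argument.
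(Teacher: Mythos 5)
Your proposal follows essentially the same route as the paper's proof: pin down $sig(k)\in\{(0,1),(1,0)\}$ from $H_3$, deduce $sig(z)\in E_0$ from the chord $h_{3,1,b-1}\!\rightarrow\! h_{3,0,b}$, use $F_j$ and $G_j$ to confine each $sig(X_j)$ to $E_0\cup\{1\}$ (resp.\ $E_0\cup\{b\}$), and finish with a mod-$(b+1)$ count along the $X_{i_0}X_{i_1}X_{i_2}$ subpath of $T_i$. Your congruence $k\equiv 1 \pmod{b+1}$ with $k\in\{0,1,2,3\}$ and $b\geq 2$ is a cleaner packaging of the paper's explicit case analysis (its Equations (1)--(12)), and your identification of the $G_j$-edge as the constraint that kills every candidate $(p,p+1)$ is exactly the paper's argument.

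There is, however, one step whose justification as written is false. You exclude $sig(k)\in E_0$ on the grounds that "the latter two families are defined at every state of $\tau$." That is true for the group events $\{0,\dots,b\}$, but not for $(m,m)\in E_0$ with $m\geq 1$: by definition $\delta_\tau(s,(m,m))$ is defined only for $s\geq m$ (e.g., $(2,2)$ occurs only at state $2$ in $\tau_{\mathbb{Z}PT}^2$, cf.\ Figure~1). So a region with $sig(k)=(m,m)$ and $sup(h_{3,1,b-1})<m$ would, a priori, solve $\alpha$, and your argument does not rule this out. The conclusion is still correct, but the repair needs the gadget: if $sig(k)=(m,m)$, then the support is constant along the chain $h_{3,1,0}\rightarrow\dots\rightarrow h_{3,1,b-1}$, and since $k$ occurs at $h_{3,1,0}$ we get $sup(h_{3,1,b-1})=sup(h_{3,1,0})\geq m$, so $sig(k)$ is defined at $sup(h_{3,1,b-1})$ after all --- contradicting that the region solves $\alpha$. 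This is how the paper argues, and with this substitution your proof goes through.
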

\begin{proof}
Since $U_\tau$ has the $\tau$-ESSP, there is a $\tau$-region, that solves $\alpha$.
Let $R=(sup, sig)$ be such a region.
In the following, we first argue that $sig(k)\in \{(1,0), (0,1)\}$ and $sig(z)\in E_0$.
Secondly, we show that this implies that $M=\{X\in V(\varphi)\mid sig(X)=1\}$ is a one-in-three model of $\varphi$.

Let $(sup, sig)$ be a $\tau$-region that solves $\alpha$, that is, $\neg sup(h_{3,1,b-1})\ledge{sig(k)}$.
If $sig(k)\in E_0$, then we inductively obtain $sup(h_{3,1,0})=sup(h_{3,1,b-1})$.
This contradicts $\neg sup(h_{3,1,b-1})\ledge{sig(k)}$.
Moreover, if $e\in \{0,\dots, b\}$, then $s\edge{e}$ for all $s\in S_\tau$.
Consequently, we have $sig(k)\not\in E_0 \cup \{1,\dots, b\}$.

The event $k$ occurs $b$ times in a row.
Therefore, by Lemma~\ref{lem:observations}, we get $sig(k)\in \{(1,0), (0,1)\}$.
Moreover, if $sig(k)= (1,0)$, then $sup(h_{3,0,b})=0$ and if $sig(k)= (0,1)$, then $sup(h_{3,0,b})=b$.
If $s\in \{0,\dots, b-1\}$ then $s\ledge{(0,1)}$ is true, and if $s\in \{1,\dots, b\}$, then $s\ledge{(1,0)}$ is true.
Consequently, by $\neg sup(h_{3,1,b-1})\ledge{sig(k)}$, if $sig(k)=(0,1)$, then $sup(h_{3,1,b-1})=b$, and if $sig(k)=(1,0)$, then $sup(h_{3,1,b-1})=0$.
For both cases, this implies $sup(h_{3,0,b})=sup(h_{3,1,b-1})$ and thus $sig(z)\in E_0$.

\medskip
We now argue that this makes $M$ a one-in-three model of $\varphi$.
Let $i\in \{0,\dots, m-1\}$ be arbitrary but fixed.
By the definition of $\tau$-regions, if $p_i$ is defined by
\begin{center}
\begin{tikzpicture}
\node (init) at (-1,0) {$p_i=$};
\node (t_b) at (0,0) {\nscale{$sup(t_{i,b})$}};
\node (t_b+1) at (3,0) {\nscale{$sup(t_{i,b+1})$}};
\node (t_b+2) at (6,0) {\nscale{$sup(t_{i,b+2})$}};
\node (t_b+3) at (9,0) {\nscale{$sup(t_{i,b+3})$}};

\graph{
 (t_b) ->["\escale{$sig(X_{i_0})$}"] (t_b+1)->["\escale{$sig(X_{i_1})$}"] (t_b+2)->["\escale{$sig(X_{i_2})$}"] (t_b+3);
 };
\end{tikzpicture}
\end{center}
\noindent
then $p_i$ is a directed labeled path in $\tau$.
By $sig(z)\in E_0$ and $t_{i,b+3}\edge{z}t_{i,b+4}$ we obtain that $sup(t_{i,b+3})=sup(t_{i,b+4})$.
Moreover, $k$ occurs $b$ times in a row at $t_{i,0}$ and $t_{i,b+4}$.
By Lemma~\ref{lem:observations}, this implies if $sig(k)=(0,1)$, then $sup(t_{i,b})=b$ and $sup(t_{i,b+4})=0$.
Similarly, and if $sig(k)=(1,0)$, then $sup(t_{i,b})=0$ and $sup(t_{i,b+4})=b$.
Altogether, we obtain that the following conditions are true:
If $sig(z)\in E_0$ and $sig(k) = (1,0)$, then the path $p_i$ starts at $0$ and terminates at $b$, and if $sig(z)\in E_0$ and $sig(k) = (0,1)$, then the path $p_i$ starts at $b$ and terminates at $0$.
In particular, both cases imply that there has to be at least one event $X\in \{X_{i_0}, X_{i_1}, X_{i_2}\}$ whose signature satisfies $sig(X)\not\in E_0$.
Via the functionality of the gadgets $F_0,G_0,\dots, F_{m-1},G_{m-1} $, our reduction ensures that $X$ is unique.
More exactly, the aim of $F_0,G_0,\dots, F_{m-1},G_{m-1} $ is to restrict the possible signatures for the variable events as follows:
\begin{itemize}
\item
If $sig(k) = (1,0)$, then $X\in V(\varphi)$ implies $sig(X)\in E_0 \cup \{ b \}$, and
\item
if $sig(k) = (0,1)$, then $X\in V(\varphi)$ implies $sig(X)\in E_0 \cup \{ 1 \}$.
\end{itemize}
Before we argue that $F_0,G_0,\dots, F_{m-1},G_{m-1} $ satisfy the announced functionality, we first argue that these restrictions of the signature of $X_{i_0}, X_{i_1}, X_{i_2}$ ensure that there is exactly one variable event $X\in \{X_{i_0}, X_{i_1}, X_{i_2}\}$ with $sig(X)\not\in E_0$.
Remember that, by definition, if $sig(X)\in E_0$ then $sig^-(X) + sig^+(X) = \vert sig(X)\vert = 0$.

\medskip
For a start, let $sig(z)\in E_0$ and $sig(k) = (1,0)$, which implies that $p_i$ starts at $0$ and terminates at $b$.
Moreover, assume $sig(X)\in E_0 \cup \{ b \}$.
By Lemma~\ref{lem:observations}, we obtain
\begin{equation}\label{eq:modulo=b}
(\vert sig(X_{i_0})\vert + \vert sig( X_{i_1} ) \vert +\vert sig( X_{i_2}) \vert)  \equiv b \text{ mod } (b+1)
\end{equation}
If $sig(X_{i_0}), sig(X_{i_1}), sig(X_{i_2})\in E_0$, then $\vert sig(X_{i_0})\vert = \vert sig( X_{i_1} ) \vert =\vert sig( X_{i_2}) \vert=0$.
This contradicts Equation~1.
Hence, there has to be at least one variable event $X\in \{ X_{i_0}, X_{i_1} , X_{i_2}  \}$ such that $sig(X)=b$.
In the following, we argue that $X$ is unique.

\medskip
Assume, for a contradiction, that there are two different variable events $X, Y\in \{ X_{i_0}, X_{i_1} , X_{i_2}  \}$ such that $sig(X)=sig(Y)=b$ and that $sig(Z)\in E_0$ for $Z \in \{ X_{i_0}, X_{i_1} , X_{i_2}  \}\setminus \{X, Y\}$.
By symmetry and transitivity, we obtain

\eject

\hbox{}
\vspace*{-12mm}
\begin{align}
& b \equiv  (\vert sig(X_{i_0})\vert + \vert sig( X_{i_1} ) \vert +\vert sig( X_{i_2}) \vert)  \text{ mod } (b+1)  && \vert (1) \\ 
& (\vert sig(X_{i_0})\vert + \vert sig( X_{i_1} ) \vert +\vert sig( X_{i_2}) \vert)  \equiv 2b \text{ mod } (b+1)  && \vert \text{assumpt.} \\
& b  \equiv 2b  \text{ mod } (b+1)  && \vert (2),(3) \\ 
& 2b  \equiv (b-1)  \text{ mod } (b+1)  && \vert \text{def. }  \equiv  \\ 
& b  \equiv  (b-1) \text{ mod } (b+1)  &&\vert  (4),(5)\\
& \exists m\in \mathbb{Z}: m(b+1)=1 && \vert  (6)
\end{align}
By Equation~7, we get $b=0$, a contradiction.
Similarly, if we assume that $\vert sig(X_{i_0})\vert = \vert sig( X_{i_1} ) \vert =\vert sig( X_{i_2}) \vert=b$, then we obtain
\begin{align}
& (\vert sig(X_{i_0})\vert + \vert sig( X_{i_1} ) \vert +\vert sig( X_{i_2}) \vert)  \equiv 3b \text{ mod } (b+1)  && \vert \text{assumpt.} \\
& b  \equiv 3b  \text{ mod } (b+1)  && \vert (2),(8) \\ 
& 3b  \equiv (b-2)  \text{ mod } (b+1)  && \vert \text{def. }  \equiv \\ 
& b  \equiv  (b-2) \text{ mod } (b+1)  &&\vert  (9),(10)\\
& \exists m\in \mathbb{Z}: m(b+1)=2 && \vert (11)
\end{align}
By Equation~12, we have $b\in \{0,1\}$, which contradicts $b\geq 2$.
Consequently, if $sig(z)\in E_0$ and $sig(k) = (1,0)$ and $sig(X)\in E_0 \cup \{ b \}$, then there is exactly one variable event $X\in \{X_{i_0}, X_{i_1}, X_{i_2}\}$ with $sig(X)\not\in E_0$.

\medskip
Otherwise, if $sig(z) \in E_0$, $sig(k) = (0,1)$, implying that $p_i$ starts at $b$ and terminates at $0$, and $sig(X)\in E_0 \cup \{ 1 \}$, then the following equation is true:
\begin{equation}\label{eq:modulo=0}
(b+ \vert sig(X_{i_0})\vert + \vert sig( X_{i_1} ) \vert +\vert sig( X_{i_2}) \vert)  \equiv 0 \text{ mod } (b+1)
\end{equation}
This implies $\vert sig(X_{i_0})\vert + \vert sig( X_{i_1} ) \vert +\vert sig( X_{i_2}) \vert)  \equiv 1 \text{ mod } (b+1)$.
If there is more than one $X\in \{X_{i_0}, X_{i_1}, X_{i_2}\}$ such that $sig(X)=1$, then $2  \equiv 1 \text{ mod } (b+1)$ or $3  \equiv 1 \text{ mod } (b+1)$ is true.
If $2  \equiv 1 \text{ mod } (b+1)$, then $b=0$, and if $3  \equiv 1 \text{ mod } (b+1)$, then $b\in \{0,1\}$.
Since $b\geq 2$, both cases yield a contradiction.
Consequently, there is exactly one $X\in \{X_{i_0}, X_{i_1}, X_{i_2}\}$ such that $sig(X)=1$, and if $Y\in \{X_{i_0}, X_{i_1}, X_{i_2}\}\setminus \{X\}$, then $sig(Y)\in E_0$.

Under the assumption that the gadgets $F_0,G_0,\dots, F_{m-1}, G_{m-1}$ behave as announced, we have shown the following:
If $(sup, sig)$ is a $\tau$-region of $U_\tau$ such that $sig(k)\in \{(0,1), (1,0)\}$ and $sig(z)\in E_0$, then, for every $i\in \{0,\dots, m-1\}$, there is exactly one variable event $X\in \{X_{i_0}, X_{i_1}, X_{i_2}\}$ such that $sig(X)\not\in E_0$.
As a result, the set $M=\{X\in V(\varphi) \vert sig(X) \not\in E_0\}$ defines a one-in-three model of $\varphi$.

It remains to argue that the gadgets $F_0,G_0,\dots, F_{m-1},G_{m-1} $ behave as announced.
Let $j\in \{0,\dots, m-1\}$.
In the following, we show that if $sig(k)=(1,0)$, then $sig(X_j)\in E_0\cup \{b\}$, and if $sig(k)=(0,1)$, then $sig(X_j)\in E_0\cup \{1\}$.

To begin with, let $sig(k)=(1,0)$.
The event $k$ occurs $b$ times in a row at $f_{j,0,0}$ and $g_{j,0}$ and $b-1$ times in a row at $f_{j,1,0}$.
By Lemma~\ref{lem:observations} this implies $sup(f_{j,0,b})=sup(g_{j,b})=0$ and $sup(f_{j,1,b-1})\in \{0,1\}$.
Clearly, if $sup(f_{j,0,b})=sup(f_{j,1,b-1})=0$ then $sig(X_j)\in E_0$.
We argue that $sup(f_{j,1,b-1})=1$ implies $sig(X_j) = b$.

Assume, for a contradiction, that $sig(X_j)\not=b $.
If $sig(X_j)=(m,m)$ for some $m\in \{1,\dots, b\}$, then $-sig^-(X_j)+sig^+(X_j)=\vert sig(X_j) \vert = 0$.
By Lemma~\ref{lem:observations}, this contradicts $sup(f_{j,0,b})\not=sup(f_{j,1,b-1})$.
If $sig(X_j)=(m,n)$ with $m\not=n$, then $\vert sig(X_j)\vert =0$.
By Lemma~\ref{lem:observations}, we have $sup(f_{j,0,b})=sup(f_{j,1,b-1})-sig^-(X_j)+sig^+(X_j)$, which implies $sig(X_j)=(1,0)$.
But, this contradicts $sup(g_{j,b})\lledge{sig(X_j)}$, since $sup(g_{j,b})=0$ and $\neg 0 \ledge{(1,0)}$ in $\tau$.
Finally, if $sig(X_j) = e \in  \{0,\dots, b-1 \}$, then we have  $1 + e \not\equiv 0 \text{ mod } (b+1)$.
This contradicts $sup(f_{j,1,b-1})\lledge{sig(X_j)}sup(f_{j,0,b})$.
Hence, we have $sig(X_j)=b$.
Overall, it is proven that if $sig(k)=(1,0)$, then $sig(X_j)\in E_0\cup \{b\}$.

To continue, let $sig(k)=(0,1)$.
Similarly to the former case, by Lemma~\ref{lem:observations}, we obtain that $sup(f_{j,0,b})=sup(g_{j,b})=b$ and $sup(f_{j,1,b-1})\in \{b-1,b\}$.
If $sup(f_{j,1,b-1})=b$, then $sig(X_j)\in E_0$.
We show that $sup(f_{j,1,b-1})=b-1$ implies $sig(X_j)=1$.
Assume $sig(X_j)=(m,n)\in E_\tau$.
If $m=n$ or $m>n$, then, by $sup(f_{j,0,b})=sup(f_{j,1,b-1})-sig^-(X_j)+sig^+(X_j)$, we get $sup(f_{j,0,b}) < b$.
This is a contradiction.
If $m < n$ then, by $sup(g_{j,b+1})=sup(g_{j,b})-sig^-(X_j)+sig^+(X_j)$, we get the contradiction $sup(g_{j,b+1}) >  b$.
Hence, $sig(X_j)\in \{0,\dots, b\}$ and $(b-1 + \vert sig(X_j)\vert) \equiv b \text{ mod } (b+1)$.
This implies that $(b+1)$ divides $(\vert sig(X_j)\vert-1)$ and thus $\vert sig(X_j)\vert \equiv 1 \text{ mod } (b+1)$.
Consequently, we obtain $sig(X_j)=1$.
This shows that $sig(k)=(0,1)$ and $z\in E_0$ implies $sig(X_j)\in E_0\cup \{1\}$.
\end{proof}

\newcommand{\freezer}[4]{

\ifstrequal{#4}{0}{
\begin{scope}[nodes={set=import nodes}, xshift= #2cm, yshift=#3 cm]
\coordinate (c00) at (0,0);
\coordinate(c01) at (1,0) ;
\coordinate (c02) at (2,0) ;
\coordinate (c10) at (0,-1) ;
\coordinate (c11) at (2,-1) ;


\node (f00) at (0,0) {\nscale{$f_{#1,0,0}$}};
\node (f01) at (1.5,0) {\nscale{$f_{#1,0,1}$}};
\node (f02) at (3,0) {\nscale{$f_{#1,0,2}$}};
\node (f10) at (0,-1.2) {\nscale{$f_{#1,1,0}$}};
\node (f11) at (2,-1.2) {\nscale{$f_{#1,1,1}$}};
\graph{
(f00) ->[,"\escale{$k$}"] (f01)->[,"\escale{$k$}"] (f02);
(f10) ->[,"\escale{$k$}"] (f11);
(f00) ->[,swap, "\escale{$v_#1$}"] (f10);
(f11) ->[,swap, "\escale{$X_#1$}"] (f02);
};
\end{scope}
}{
\begin{scope}[nodes={set=import nodes}, xshift= #2cm, yshift=#3 cm]

\coordinate (c00) at (0,0);
\coordinate(c01) at (1,0) ;
\coordinate (c02) at (2,0) ;
\coordinate (c10) at (0,-1) ;
\coordinate (c11) at (2,-1) ;


\node (f00) at (0,0) {\nscale{$f_{#1,0,0}$}};
\node (f01) at (1.5,0) {\nscale{$f_{#1,0,1}$}};
\node (f02) at (3,0) {\nscale{$f_{#1,0,2}$}};
\node (f10) at (0,-1.2) {\nscale{$f_{#1,1,0}$}};
\node (f11) at (2,-1.2) {\nscale{$f_{#1,1,1}$}};
\graph{
(f00) ->[,"\escale{$k$}"] (f01)->[,"\escale{$k$}"] (f02);
(f10) ->[,"\escale{$k$}"] (f11);
(f00) ->[,swap, "\escale{$v_#1$}"] (f10);
(f11) ->[,swap, "\escale{$X_#1$}"] (f02);
};
\end{scope}
}
}
\newcommand{\generator}[4]{

\ifstrequal{#4}{0}{
\begin{scope}[nodes={set=import nodes}, xshift = #2cm, yshift= #3cm, ]
\coordinate (c00) at (0,0);
\coordinate(c01) at (1,0) ;
\coordinate(c02) at (2,0) ;
\coordinate(c03) at (3,0) ;
%
\node (g00) at (0,0) {\nscale{$g_{#1,0}$}};
\node (g01) at (1.3,0) {\nscale{$g_{#1,1}$}};
\node (g02) at (2.6,0) {\nscale{$g_{#1,2}$}};
\node (g03) at (3.9,0) {\nscale{$g_{#1,2}$}};
\graph{
(g00) ->[,"\escale{$k$}"] (g01)->[,"\escale{$k$}"] (g02)->[,"\escale{$X_#1$}"] (g03);
};
\end{scope}
}{
\begin{scope}[nodes={set=import nodes}, xshift = #2cm, yshift= #3cm, ]
\coordinate (c00) at (0,0);
\coordinate(c01) at (1.2,0) ;
\coordinate(c02) at (2.4,0) ;
\coordinate(c03) at (3.6,0) ;
%
\node (g00) at (0,0) {\nscale{$g_{#1,0}$}};
\node (g01) at (1.3,0) {\nscale{$g_{#1,1}$}};
\node (g02) at (2.6,0) {\nscale{$g_{#1,2}$}};
\node (g03) at (3.9,0) {\nscale{$g_{#1,2}$}};
\graph{
(g00) ->[,"\escale{$k$}"] (g01)->[,"\escale{$k$}"] (g02)->[,"\escale{$X_#1$}"] (g03);
};
\end{scope}
}
}
\newcommand{\translator}[7]{

\ifstrequal{#7}{1}{
\begin{scope}[nodes={set=import nodes}, xshift=#5cm+0.5 cm, yshift=#6 cm]

\coordinate (c0) at (0,0);
\coordinate (c1) at (1,0) ;
\coordinate (c2) at (2,0) ;
\coordinate (c3) at (3,0) ;
\coordinate (c4) at (4,0) ;
\coordinate (c5) at (5,0) ;
\coordinate(c6) at (6,0) ;
\coordinate (c7) at (7,0) ;
\coordinate (c8) at (8,0) ;


\node (t0) at (0,0) {\nscale{$t_{#1,0}$}};
\node (t1) at (1.5,0) {\nscale{$t_{#1,1}$}};
\node (t2) at (3,0) {\nscale{$t_{#1,2}$}};
\node (t3) at (4.5,0) {\nscale{$t_{#1,3}$}};
\node (t4) at (6,0) {\nscale{$t_{#1,4}$}};
\node (t5) at (7.5,0) {\nscale{$t_{#1,5}$}};
\node (t6) at (9,0) {\nscale{$t_{#1,6}$}};
\node (t7) at (10.5,0) {\nscale{$t_{#1,7}$}};
\node (t8) at (12,0) {\nscale{$t_{#1,8}$}};
\graph{
(t0) ->[,"\escale{$k$}"] (t1)->[,"\escale{$k$}"] (t2)->[,"\escale{$X_#2$}"] (t3)->[,"\escale{$X_#3$}"] (t4)->[,"\escale{$X_#4$}"] (t5)->[,"\escale{$z$}"] (t6)->[,"\escale{$k$}"] (t7)->[,"\escale{$k$}"] (t8);
};
\end{scope}
}{
\begin{scope}[nodes={set=import nodes}, xshift=#5cm+0.5cm, yshift=#6 cm]

\coordinate (c0) at (0,0);
\coordinate (c1) at (1,0) ;
\coordinate (c2) at (2,0) ;
\coordinate (c3) at (3,0) ;
\coordinate (c4) at (4,0) ;
\coordinate (c5) at (5,0) ;
\coordinate(c6) at (6,0) ;
\coordinate (c7) at (7,0) ;
\coordinate (c8) at (8,0) ;


\node (t0) at (0,0) {\nscale{$t_{#1,0}$}};
\node (t1) at (1.5,0) {\nscale{$t_{#1,1}$}};
\node (t2) at (3,0) {\nscale{$t_{#1,2}$}};
\node (t3) at (4.5,0) {\nscale{$t_{#1,3}$}};
\node (t4) at (6,0) {\nscale{$t_{#1,4}$}};
\node (t5) at (7.5,0) {\nscale{$t_{#1,5}$}};
\node (t6) at (9,0) {\nscale{$t_{#1,6}$}};
\node (t7) at (10.5,0) {\nscale{$t_{#1,7}$}};
\node (t8) at (12,0) {\nscale{$t_{#1,8}$}};
\graph{
(t0) ->[,"\escale{$k$}"] (t1)->[,"\escale{$k$}"] (t2)->[,"\escale{$X_#2$}"] (t3)->[,"\escale{$X_#3$}"] (t4)->[,"\escale{$X_#4$}"] (t5)->[,"\escale{$z$}"] (t6)->[,"\escale{$k$}"] (t7)->[,"\escale{$k$}"] (t8);
};
\end{scope}

}
}
\begin{figure}[t!]
\vspace*{1mm}
\centering
\begin{tikzpicture}[scale=0.95]
\begin{scope}
\coordinate (c00) at (0,0);
\coordinate(c01) at (1,0) ;
\coordinate (c02) at (2,0) ;
\coordinate (c10) at (0,-1) ;
\coordinate (c11) at (2,-1) ;

\node (h00) at (0,0) {\nscale{$h_{3,0,0}$}};
\node (h01) at (1.5,0) {\nscale{$h_{3,0,1}$}};
\node (h02) at (3,0) {\nscale{$h_{3,0,2}$}};
\node (h10) at (0,-1) {\nscale{$h_{3,1,0}$}};
\node (h11) at (2,-1) {\nscale{$h_{3,1,1}$}};
\graph{
(h00) ->[,"\escale{$k$}"] (h01)->[,"\escale{$k$}"] (h02);
(h10) ->[,swap, "\escale{$k$}"] (h11);
(h00) ->[,"\escale{$u$}", swap](h10);
(h11) ->[,"\escale{$z$}", swap](h02);
};
\end{scope}
\node (q_0) at (0,1) {\nscale{$q_0$}};
\draw[->, ] (q_0)--(h00)node [pos=0.4, left ] {\escale{$y_0$}};
\freezer{0}{4}{0}{1}
\node (q_1) at (4,1) {\nscale{$q_1$}};
\draw[->, ] (q_1)--(f00)node [pos=0.4, left ] {\escale{$y_1$}};
\draw[->, ] (q_0)--(q_1)node [pos=0.5,above] {\escale{$w_1$}};
\freezer{1}{8}{0}{0}
\node (q_2) at (8,1) {\nscale{$q_2$}};
\draw[->, ] (q_2)--(f00)node [pos=0.4, left ] {\escale{$y_2$}};
\draw[->, ] (q_1)--(q_2)node [pos=0.5,above] {\escale{$w_2$}};
\freezer{2}{12}{0}{0}
\node (q_3) at (12,1) {\nscale{$q_3$}};
\draw[->, ] (q_3)--(f00)node [pos=0.4, left ] {\escale{$y_3$}};
\draw[->, ] (q_2)--(q_3)node [pos=0.5, above] {\escale{$w_3$}};
\freezer{3}{12}{-3.5}{0}
\coordinate (corner1) at (15.5,1) {};
\coordinate(corner2) at (15.5,-2) {};
\node (q_4) at (12,-2) {\nscale{$q_4$}};
\draw[->, ]  (q_3)--(corner1)--(corner2)->(q_4) node [pos=0.5, above] {\escale{$w_4$}};
\draw[->, ] (q_4)--(f00)node [pos=0.4, left] {\escale{$y_4$}};
\freezer{4}{8}{-3.5}{1}
\node (q_5) at (8,-2) {\nscale{$q_5$}};
\draw[->, ]  (q_5)->(f00)node [pos=0.4, left] {\escale{$y_5$}};
\draw[->, ]  (q_4)->(q_5) node [pos=0.5, above] {\escale{$w_5$}};
\freezer{5}{4}{-3.5}{0}
\node (q_6) at (4,-2) {\nscale{$q_6$}};
\draw[->, ]  (q_6)->(f00)node [pos=0.4, left] {\escale{$y_6$}};
\draw[->, ]  (q_5)->(q_6) node [pos=0.5, above] {\escale{$w_6$}};
\translator{0}{0}{1}{2}{1}{-6}{1}
\coordinate (corner3) at (0,-2);
\node (q_7) at (0,-6) {\nscale{$q_7$}};
\draw[->, ] (q_6)--(corner3)->(q_7)node [pos=0.5, left] {\escale{$w_7$}};
\draw[->, ] (q_7)--(t0)node [pos=0.4, above] {\escale{$y_7$}};
\translator{1}{0}{2}{3}{1}{-7.2}{1}
\node (q_8) at (0,-7.2) {\nscale{$q_8$}};
\draw[->, ]  (q_7)->(q_8)node [pos=0.5, left] {\escale{$w_8$}};
\draw[->, ]  (q_8)->(t0)node [pos=0.4, above] {\escale{$y_8$}};
\translator{2}{1}{0}{3}{1}{-8.4}{1}
\node (q_9) at (0,-8.4) {\nscale{$q_9$}};
\draw[->, ]  (q_8)->(q_9)node [pos=0.5, left] {\escale{$w_9$}};
\draw[->, ]  (q_9)->(t0)node [pos=0.4, above] {\escale{$y_9$}};
\translator{3}{2}{4}{5}{1}{-9.6}{2}
\node (q_10) at (0,-9.6) {\nscale{$q_{10}$}};
\draw[->, ]  (q_9)->(q_10)node [pos=0.5, left] {\escale{$w_{10}$}};
\draw[->, ]  (q_10)->(t0)node [pos=0.4, above] {\escale{$y_{10}$}};
\translator{4}{1}{4}{5}{1}{-10.8}{2}
\node (q_11) at (0,-10.8) {\nscale{$q_{11}$}};
\draw[->, ]  (q_10)->(q_11)node [pos=0.5, left] {\escale{$w_{11}$}};
\draw[->, ]  (q_11)->(t0)node [pos=0.4, above] {\escale{$y_{11}$}};
\translator{5}{3}{4}{5}{1}{-12}{2}
\node (q_12) at (0,-12) {\nscale{$q_{12}$}};
\draw[->, ]  (q_11)->(q_12)node [pos=0.5, left] {\escale{$w_{12}$}};
\draw[->, ]  (q_12)->(t0)node [pos=0.4, above] {\escale{$y_{12}$}};
\generator{0}{0}{-14}{1}
\node (q_13) at (0,-13) {\nscale{$q_{13}$}};
\draw[->, ]  (q_13)->(g00)node [pos=0.4, left] {\escale{$y_{13}$}};
\generator{1}{4.75}{-14}{0}
\node (q_14) at (4.75,-13) {\nscale{$q_{14}$}};
\draw[->, ]  (q_14)->(g00)node [pos=0.4, left] {\escale{$y_{14}$}};
\generator{2}{9.5}{-14}{0}
\node (q_15) at (9.5,-13) {\nscale{$q_{15}$}};
\draw[->, ]  (q_15)->(g00)node [pos=0.4, left] {\escale{$y_{15}$}};
\draw[->, ]  (q_12)->(q_13)node [pos=0.5, left] {\escale{$w_{13}$}}->(q_14)node [pos=0.5, above] {\escale{$w_{14}$}}->(q_15)node [pos=0.5, above] {\escale{$w_{15}$}};
\coordinate (corner4) at(15,-13);
\coordinate (corner5) at(15,-15);
\generator{3}{9.5}{-16}{0}
\node (q_16) at (9.5,-15) {\nscale{$q_{16}$}};
\draw[->, ]  (q_16)->(g00)node [pos=0.4, left] {\escale{$y_{16}$}};
\generator{4}{4.75}{-16}{1}
\node (q_17) at (4.75,-15) {\nscale{$q_{17}$}};
\draw[->, ]  (q_17)->(g00)node [pos=0.4, left] {\escale{$y_{17}$}};
\generator{5}{0}{-16}{0}
\node (q_18) at (0,-15) {\nscale{$q_{18}$}};
\draw[->, ]  (q_18)->(g00)node [pos=0.4, left] {\escale{$y_{18}$}};
\draw[->, ]  (q_15)--(corner4)--(corner5)->(q_16) node [pos=0.5, above] {\escale{$w_{16}$}}->(q_17) node [pos=0.5, above] {\escale{$w_{17}$}}->(q_18) node [pos=0.5, above] {\escale{$w_{18}$}};
\end{tikzpicture}\vspace{2mm}
\caption{For all $\tau\in \{\tau_{\mathbb{Z}PPT}^2,\tau_{\mathbb{Z}PT}^2\}$, the joining $J(U_\tau)$ where $\varphi$ corresponds to Example~\ref{ex:varphi}.
}
\label{fig:example} \vspace*{-3mm}
\end{figure}


Conversely, a one-in-three model of $\varphi$ implies the $\tau$-ESSP and the $\tau$-SSP for $U_\tau$:
\begin{lemma}\label{lem:tau_zppt_tau_zpt_model_implies_solvability}
If $\varphi$ has a one-in-three model, then $U_\tau$ has the $\tau$-ESSP and the $\tau$-SSP.
\end{lemma}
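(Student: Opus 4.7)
The plan is to build, from the one-in-three model $M$, a single ``model region'' $R_M=(sup,sig)$ that solves the critical ESSA $\alpha=(k,h_{3,1,b-1})$ and simultaneously witnesses many other atoms, and then to cover the remaining (E)SSA with a family of simple generic regions in the spirit of Lemma~\ref{lem:easy_solvability}. For $R_M$ I intend to put $sig(k)=(0,1)$, $sig(z)=0$, $sig(u)=1$, and for every variable $X_j\in V(\varphi)$ I set $sig(X_j)=1$ and $sig(v_j)=0$ if $X_j\in M$, and $sig(X_j)=0$ and $sig(v_j)=1$ otherwise; the supports at the initial state of each gadget are all $0$, which then determines $sup$ everywhere via Lemma~\ref{lem:observations}.\ref{lem:sig_summation_along_paths}.

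The first technical step is to verify that $R_M$ is a $\tau$-region. In $H_3$, the upper branch gives $sup(h_{3,0,j})=j$ for $0\le j\le b$, while the lower branch gives $sup(h_{3,1,j})=1+j$ for $0\le j\le b-1$; so $sup(h_{3,1,b-1})=b$, making $sig(z)=0$ consistent with the edge $h_{3,1,b-1}\edge{z}h_{3,0,b}$, and since $(0,1)$ is undefined at $b$ in $\tau$ the atom $\alpha$ is solved. In every $F_j$ the choice of $sig(v_j)$ is made precisely so that $sig(v_j)+sig(X_j)\equiv 1\pmod{b+1}$ holds, which is exactly the consistency condition closing the diamond. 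In every $T_i$, the key observation is that the sub-path $t_{i,b}\edge{X_{i_0}}t_{i,b+1}\edge{X_{i_1}}t_{i,b+2}\edge{X_{i_2}}t_{i,b+3}$ must translate $sup(t_{i,b})=b$ into $sup(t_{i,b+3})=0$, i.e., the cumulative signature $sig(X_{i_0})+sig(X_{i_1})+sig(X_{i_2})$ must equal $1$ modulo $b+1$. This is exactly where the assumption that $M$ is a one-in-three model is used: exactly one of the three literals lies in $M$ and contributes $1$, while the remaining two contribute $0$; the event $z$ with signature $0$ then propagates $sup(t_{i,b+3})=0$ to $sup(t_{i,b+4})=0$, from which the $b$ further $k$-occurrences climb back up to $sup(t_{i,2b+4})=b$ as required.

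For the remaining ESSA I would proceed schematically. If $e\in E(U_\tau)$ and $s$ lies in a gadget that does not contain $e$, then an ``isolating'' region that assigns non-trivial signature only to $e$ and supports $0$ on the initial states of gadgets not containing $e$ and $b$ on those that do, solves $(e,s)$ directly, just as in Lemma~\ref{lem:easy_solvability}. For within-gadget ESSA of $u$, $v_j$, $X_j$, and $z$, every such event is $1$-consistent or $b$-consistent and thinly distributed, so the same kind of construction works, now using signatures in $\{(0,1),(1,0)\}$ and otherwise $0$. The within-gadget ESSA for $k$ are either $\alpha$ itself, which is solved by $R_M$, or they are of the form $(k,s)$ with $s$ an interior state of the $b$-consistent $k$-runs of one of the gadgets, and they are solved by slight variants of $R_M$ in which the initial supports are rotated modulo $b+1$.

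Finally, the $\tau$-SSP: the gadgets $G_j$ and $T_i$ are linear, so by Lemma~\ref{lem:essp_implies_ssp} the ESSP-witnesses already separate all their state pairs. Only the non-linear gadgets $H_3$ and $F_j$ need extra attention. Each of these consists of two $k$-labelled branches sharing their entry and their exit state; same-branch pairs are separated by the $k$-region already present, and cross-branch pairs can be separated by at most two additional regions obtained by varying $sig(u)$, respectively $sig(v_j)$, among the allowed values of $\mathbb{Z}_{b+1}$ while keeping $sig(k)=(0,1)$, so that the supports of the two branches become disjoint arithmetic progressions modulo $b+1$. The real obstacle of the proof is the verification that $R_M$ is globally well-defined on the $T_i$'s; every other case reduces to routine bookkeeping, and together with Lemma~\ref{lem:joining} this will complete the NP-hardness reductions for $\tau_{\mathbb{Z}PT}^b$ and $\tau_{\mathbb{Z}PPT}^b$.
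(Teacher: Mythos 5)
Your central region $R_M$ is exactly the region the paper uses to solve $\alpha=(k,h_{3,1,b-1})$, and your verification of it on $H_3$, the $F_j$, the $G_j$ and the $T_i$ --- in particular the observation that the one-in-three property is what makes the cumulative signature over $X_{i_0},X_{i_1},X_{i_2}$ equal to $1$ --- is correct and is the heart of the matter. The gap lies in the ``schematic'' coverage of the remaining atoms. First, Lemma~\ref{lem:easy_solvability} is stated only for $\tau\in\{\tau_{PT}^b,\tau_{PPT}^b\}$ and for unions of \emph{linear} gadgets; $H_3$ and the $F_j$ are not linear, so you cannot invoke it. More importantly, the ``isolating'' regions you describe do not exist: $H_3$ and each $F_j$ contain an undirected cycle, so every region must satisfy the fundamental-cycle equations $abs(k)\equiv abs(u)+abs(z)$ and $abs(k)\equiv abs(v_j)+abs(X_j) \text{ mod } (b+1)$, where $abs(e)=-sig^-(e)+sig^+(e)+\vert sig(e)\vert$. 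Since \emph{every} event of $U_\tau$ lies on one of these cycles, no event can be given a signature in $\{(0,1),(1,0)\}$ ``and otherwise $0$'': for instance $sig(u)=(0,1)$ with $sig(k)=sig(z)=0$ forces $0\equiv 1 \text{ mod } (b+1)$, which is false for $b\geq 2$. This is precisely why the paper's explicit regions always pair the nontrivial signature of the target event with compensating signatures for the other events on its cycle (e.g.\ $sig(z)=(0,b)$ together with $sig(u)=1$, or $sig(u)=(0,b)$ together with $sig(z)=2$ and $sig(k)=1$), and why it needs separate constructions for $b=2$ and $b>2$.

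Two further steps would fail as written. ``Rotating the initial supports modulo $b+1$'' of a variant of $R_M$ is not available: $k$ occurs $b$ times in a row with $sig(k)=(0,1)$, so Lemma~\ref{lem:observations}.\ref{lem:absolute_value} pins the support at the start of every such run to $0$; the only freedom is in the signatures of $u$, $v_j$ and $z$, not in $sup(\iota)$. For the SSP of $H_3$ and $F_j$, the two branches can never have disjoint support sets, because with $sig(k)=(0,1)$ the long branch already assumes all $b+1$ values $0,\dots,b$; moreover the short $k$-run forces $abs(u)\in\{0,1\}$ (resp.\ $abs(v_j)\in\{0,1\}$). What actually works is that the two admissible offsets separate complementary families of cross-branch pairs: with offset $1$ the pair $(h_{3,0,i},h_{3,1,j})$ is unsolved only if $i=j+1$, with offset $0$ only if $i=j$, and no pair satisfies both. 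So your overall architecture --- one model-dependent region plus generic regions, with Lemma~\ref{lem:essp_implies_ssp} handling the SSP of the linear gadgets --- is the paper's, but the generic regions are exactly where the cycle constraints must be handled by hand, and your proposal does not do that.
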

\begin{proof}
Let $M$ be a one-in-three model of $\varphi$, and let $I=\{h_{3,0,0}, t_{j,0}, f_{j,0,0},g_{j,0}\mid 0\leq j\leq m-1\}$ be the set of the initial states of the gadgets of $U_\tau$.

We start with the solvability of $k$.
The following $\tau$-region $R=(sup, sig)$ solves $\alpha=(k, h_{3,1,b-1})$ and thus $k$ completely in $H_3$:
for all $s\in I$, $sup(s)=0$;
for all $e\in E(U_\tau)$, if $e=k$, then $sig(e)=(0,1)$;
if $e\in \{z\}\cup (V(\varphi)\setminus M)$, then $sig(e)=0$;
for all $j\in \{0,\dots, m-1\}$, if $e=v_j$ and  $X_j\in M$, then $sig(e)=0$;
for all $j\in \{0,\dots, m-1\}$, if $e=v_j$ and  $X_j\not\in M$, then $sig(e)=1$;
otherwise holds $e \in M\cup\{u\}$, and we define $sig(e)=1$.

Notice that this region solves also a lot SSA of $U_\tau$.
In particular, if $q_0\edge{k}\dots\edge{k}q_b$, then this region solves $(s,s')$ for all $s\not=s' \in \{1_0,\dots, q_b\}$.

The following region $R=(sup, sig)$ solves $(k,s)$ for all remaining relevant states of $U_\tau$:
for all $s\in I$, $sup(s)=0$;
for all $e\in E(U_\tau)$, if $e=k$, then $sig(e)=(0,1)$;
if $e\in \{z\}\cup \{v_0,\dots, v_{m-1}\}$, then $sig(e)=1$;
otherwise, $sig(e)=0$.

We proceed with the solvability of $z$.
Let $i\in \{0,\dots, m-1\}$ be arbitrary but fixed.
Let $j,\ell\in \{0,\dots, m-1\}\setminus\{i\}$ such that $j\not=\ell$ and $X_{i_2}\in E(T_j)$ and $X_{i_2}\in E(T_\ell)$.
The following region solves $(z,s)$ for all $s\in \{h_{3,0,0}\}\cup S(T_i)$:
for all $s\in \{h_{3,0,0},t_{i,0}, t_{j,0}, t_{\ell,0}\}$, $sup(s)=b$;
for all $s\in \{f_{0,0,0}, g_{j,0}\mid j\in \{0,\dots, m-1\}\}$, $sup(s)=1$;
for all $s\in \{t_{j,0}\mid j\in \{0,\dots, m-1\}\}\setminus\{t_{i,0}, t_{j,0}, t_{\ell,0}\}$, $sup(s)=0$;
for all $e\in E(U_\tau)$, if $e=z$, then $sig(z)=(0,b)$;
if $e=X_{i_2}$, then $sig(e)=1$;
if $n\in \{0,\dots, m-1\}$ and $e=v_n$ and $X_{i_2}\in E(F_n)$, then $sig(e)=b$;
if $e=u$, then $sig(e)=1$;
otherwise $sig(e)=0$.
By the arbitrariness of $i$, this proves also the $\tau$-solvability of $(z,s)$ for all relevant $s\in \bigcup_{j=0}^{m-1}S(T_j)$.

Notice that if $s\in \{h_{3,0,0},\dots, h_{3,0,b}\}$ and $s'\in \{h_{3,1,0},\dots, h_{3,1,b-1}\}$ or if $s\in \{f_{i,0,0},\dots, h_{i,0,b}\}$ and $s'\in \{s_{i,1,0},\dots, f_{i,1,b-1}\}$, then this region also solves $(s,s')$.
Thus, altogether, we already have proven the solvability of all states of $H_3,F_0,\dots, F_{m-1}$.

The following region $R=(sup, sig)$ solves $(z,s)$ for all relevant $s\in S(H_3)\setminus\{h_{3,0,0}\}$:
for all $s\in I\setminus\{t_{i,0}\mid i\in \{0,\dots, m-1\}\}$, $sup(s)=0$;
for all $s\in \{t_{i,0}\mid i\in \{0,\dots, m-1\}$, $sup(s)=1$;
for all $e\in E(U_\tau)$, if $e=z$, then $sig(e)=(0,b)$;
if $e=k$, then $sig(k)=1$;
if $e=u$, then $u=2$;
otherwise $sig(e)=0$.

The following region $R=(sup, sig)$ solves $(z,s)$ for all remaining relevant states:
for all $s\in \{h_{3,0,0}\}\cup\{ f_{j,0,0}, g_{j,0}\mid j\in \{0,\dots, m-1\}\}$, $sup(s)=b$;
for all $s\in \{t_{i,0}\mid i\in \{0,\dots, m-1\}$, $sup(s)=0$;
for all $e\in E(U_\tau)$, if $e=z$, then $sig(e)=(0,b)$;
if $e=u$, then $sig(e)=1$;
otherwise, $sig(e)=0$.

We proceed by arguing for the solvability of $u$.
The following region $R=(sup, sig)$ solves $(u,s)$ for all $s\in \{h_{3,0,1}, \dots, h_{3,0,b}\}$:
for all $s\in I$, $sup(s)=0$;
for all $e\in E(U_\tau)$, if $e=u$, then $sig(e)=(0,b)$;
if $e=z$, then $sig(e)=2$;
of $e=k$, then $sig(e)=1$;
otherwise $sig(e)=0$.

If $b >2$, then the following region $R=(sup, sig)$ solves $(u,s)$ for relevant states $s\in S(U_\tau)\setminus \{h_{3,0,1}, \dots, h_{3,0,b}\}$:
for all $s\in I$, if $s=h_{3,0,0}$, then $sup(s)=0$; otherwise, $sup(s)=1$;
for all $e\in E(U_\tau)$, if $e=u$, then $sig(e)=(0,b)$;
if $e=z$, then $sig(e)=1$;
otherwise $sig(e)=0$.
If $b=2$, then we additionally need a slightly modified region that maps $sup(s)=0$ for all $s\in \{t_{j,0}\mid j\in \{0,\dots, m-1\}\}$.
This proves the solvability of $u$.

We proceed with the solvability of the events $v_0,\dots, v_{m-1}$.
Let $i\in \{0,\dots, m-1\}$ be arbitrary but fixed.
The following region $R=(sup, sig)$ solves $(u_i,s)$ for all $s\in \{f_{i,0,1},\dots, f_{i,0,b}\}$:
for all $s\in I$, $sup(s)=0$;
for all $e\in E(U_\tau)$, if $e=u_i$, then $sig(e)=(0,b)$;
if $e=X_i$, then $sig(e)=2$;
if $e=k$, then $sig(e)=1$;
otherwise $sig(e)=0$.

If $b> 2$, then the following region $R=(sup, sig)$ solves $(v_i, s)$ for all remaining relevant states $S(U_\tau)\setminus  \{f_{i,0,1},\dots, f_{i,0,b}\}$:
$sup(f_{i,0,0})=0$;
for all $s\in I\setminus\{f_{i,0,0}\}$, $sup(s)=1$;
for all $e\in E(U_\tau)$, if $e=u_i$, then $sig(e)=(0,b)$;
if $e=X_i$, then $sig(e)=1$;
otherwise $sig(e)=0$.
If $b=2$, then we additionally need a slightly modified region that maps $sup(s)=0$ for all $s\in \{g_{j,0},t_{j,0}\mid j\in \{0,\dots, m-1\}\}$.
This proves the solvability of $v_i$.
Since $i$ was arbitrary, this proves the solvability of all $v_0,\dots, v_{m-1}$.

It is easy to see, that the variable events $X_0,\dots, X_{m-1}$ are solvable.
Thus,  for the sake of simplicity, we refrain from the explicit representation of the corresponding regions.
Moreover, one easily verifies that the remaining regions that complete the $\tau$-ESSP of $U_\tau$ also solve the remaining SSA of $U_\tau$.
Altogether, we have finally proven that if $M$ has a one in three model, then $U_\tau$ has the $\tau$-ESSP and the $\tau$-SSP.
\end{proof}

\section{Polynomial time results}\label{sec:poly_results}%

The following theorem states the main result of this section:
\begin{theorem}\label{the:tractability}
\begin{enumerate}
\item
\textsc{$\tau_{R\mathbb{Z}PT}^b$-ESSP} can be solved in time polynomial in the size of input $A$.
\item\label{the:tau_zppt_tau_zpt_tau_rzpt_ssp}
If $\tau\in \{\tau_{\mathbb{Z}PT}^b, \tau_{\mathbb{Z}PPT}^b, \tau_{R\mathbb{Z}PT}^b\}$, then 
\textsc{$\tau$-SSP} can be solved in time polynomial in the size of input $A$.
\end{enumerate}
\end{theorem}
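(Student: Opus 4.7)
The plan is to follow the approach suggested in the introduction: generalize Schmitt's reduction~\cite{DBLP:conf/stacs/Schmitt96} and encode the existence of a solving $\tau$-region as a system of linear equations and one disequation over $\mathbb{Z}_{b+1}$. By Lemma~\ref{lem:observations}.\ref{lem:sig_summation_along_paths}, a $\tau$-region is essentially determined by $sup(\iota)$ together with the ``total effect'' $\Delta(e) := (|sig(e)| + sig^+(e) - sig^-(e)) \bmod (b+1)$ of each event $e$: for any state $s$ reached by a word $w$ we have $sup(s) \equiv sup(\iota) + \sum_e \#_e(w) \cdot \Delta(e) \pmod{b+1}$. Fixing a spanning tree of $A$ rooted at $\iota$, I would compute the canonical Parikh vector $\pi_s \in \mathbb{N}^{E(A)}$ of the tree-path to each state $s$, and extract a polynomial-size set of fundamental cycles from the non-tree edges; their Parikh vectors must lie in the kernel of $\Delta$, yielding the \emph{cycle constraints} $\sum_e \#_e(c) \Delta(e) \equiv 0 \pmod{b+1}$.

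For the SSP result, I would iterate over each SSA $(s,s')$ and test whether the linear system consisting of the cycle constraints together with the single disequation $\sum_e (\pi_s(e) - \pi_{s'}(e)) \Delta(e) \not\equiv 0 \pmod{b+1}$ has a solution $\Delta \in \mathbb{Z}_{b+1}^{E(A)}$. Completeness in all three listed types rests on a ``group-ification'' argument: since the group events $0,\dots,b$ are totally defined in $\tau_{\mathbb{Z}PT}^b$, $\tau_{\mathbb{Z}PPT}^b$ and $\tau_{R\mathbb{Z}PT}^b$, any vector $\Delta$ satisfying the cycle constraints yields a valid $\tau$-region by setting $sig(e) := \Delta(e)$ as a group event, and conversely every region induces such a $\Delta$ with the same $sup$ map. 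For the $\tau_{R\mathbb{Z}PT}^b$-ESSP result, group signatures cannot block anything, so solving an ESSA $(e,s)$ with $e$ actually occurring in $A$ requires $sig(e) = (m,n)$ with $m,n \in \{0,\dots,b\}$; by the restricted transition function of $\tau_{R\mathbb{Z}PT}^b$, $(m,n)$ fires only at state $m$, hence every source $q$ of $e$ in $A$ must have $sup(q) = m$, and blocking at $s$ requires $sup(s) \neq m$. Accordingly, I would augment the cycle constraints by the \emph{uniform-source equations} $\sum_f (\pi_q(f) - \pi_{q'}(f)) \Delta(f) \equiv 0$ for every pair of $e$-sources $q,q'$, and add the disequation $\sum_f (\pi_s(f) - \pi_q(f)) \Delta(f) \not\equiv 0$ for a fixed source $q$. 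Given a solution $\Delta$, the witnessing region is obtained by setting $sig(e) := (m,(m+\Delta(e)) \bmod (b+1))$ for the uniquely determined $m = sup(q)$ and $sig(f) := \Delta(f)$ as a group event for every $f \neq e$; ESSA for events not occurring in $A$ are trivially solvable by a constant region.

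In both cases, deciding solvability of a linear system of equations together with one disequation over $\mathbb{Z}_{b+1}$ reduces to Gaussian elimination / Smith-normal-form computation modulo $b+1$ and runs in time polynomial in $|A|$ and $\log(b+1)$; iterating over at most $|S|^2 + |S|\cdot |E|$ atoms yields the claimed polynomial bounds. The main technical obstacle is completeness of the ESSP reduction, namely arguing that restricting to a single non-group signature (for the blocked event $e$) with all other events assigned group signatures is without loss of generality. This follows from group-ifying every other event of a hypothetical solving region while preserving $sup$, combined with the observation that the restricted transition function of $\tau_{R\mathbb{Z}PT}^b$ forces the uniform-source condition the moment $sig(e)$ is non-group. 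Precisely this structural rigidity is what separates $\tau_{R\mathbb{Z}PT}^b$-ESSP from the NP-hard $\tau_{\mathbb{Z}PT}^b$-ESSP and $\tau_{\mathbb{Z}PPT}^b$-ESSP cases covered by Theorem~\ref{the:hardness_results}.
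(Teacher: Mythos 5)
Your proposal follows essentially the same route as the paper: your $\Delta$ is exactly the paper's notion of an \emph{abstract signature}, the cycle constraints from a spanning tree are its Lemma on fundamental cycles, the SSA test is its Lemma~\ref{lem:ssp}, and the uniform-source equations plus one disequation for $\tau_{R\mathbb{Z}PT}^b$-ESSP are precisely the conditions of its Lemma~\ref{lem:essp_tau_4} (the paper handles the disequation and the choice of $(m,n)$ and $sup(\iota)$ by enumerating the constantly many values in $\{0,\dots,b\}$, which also papers over your small corner case that $(0,0)\notin E_\tau$). The argument is correct and matches the paper's proof in substance.
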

The contribution of Theorem~\ref{the:tractability} is threefold.
Firstly, \textsc{$\tau$-ESSP} and \textsc{$\tau$-Solvability} are NP-complete for all $\tau\in \{\tau_{\mathbb{Z}PT}^b,\tau_{\mathbb{Z}PPT}^b\}$ by Theorem~\ref{the:hardness_results}.
However, Theorem~\ref{the:tractability}.\ref{the:tau_zppt_tau_zpt_tau_rzpt_ssp} states that \textsc{$\tau$-SSP} is solvable in polynomial-time for these types.
Hence, to the best of our knowledge, Theorem~\ref{the:tractability} discovers the first Petri net types for which \textsc{$\tau$-SSP} and \textsc{$\tau$-ESSP} as well as \textsc{$\tau$-SSP} and \textsc{$\tau$-Solvability} provably have a different computational complexity.

\medskip
Secondly, in \cite{DBLP:conf/stacs/Schmitt96}, Schmitt extended the type $\tau_{PPT}^1$ by the additive group of integers modulo $2$, which leads to the tractable (super-) type to $\tau_{\mathbb{Z}PPT}^1$.
Moreover, in~\cite{DBLP:conf/tamc/TredupR19},  we argued that Schmitts approach transferred to $\tau_{PT}^1$ yields the tractable type $\tau_{\mathbb{Z}PT}^1$.
However, by Theorem~\ref{the:hardness_results}, lifting Schmitts technique to $\tau_{PPT}^b$ and $\tau_{PT}^b$ does not lead to superclasses with a tractable synthesis problem for all $2\leq b\in  \mathbb{N}$.
Hence, Theorem~\ref{the:tractability} proposes the first tractable type of $b$-bounded Petri nets, where $b\geq 2$, so far.
Finally, Theorem~\ref{the:tractability} gives us insight into which of the $\tau$-\emph{net properties}, where $\tau\in \{\tau_{PT}^b,\tau_{PPT}^b\}$, cause the hardness of \textsc{$\tau$-Synthesis} and the corresponding separation problems.
In particular, flow arc relations (events in $\tau$) between places and transitions in a $\tau$-net define conditions when a transition is able to fire.
For example, if $N$ is a $\tau$-net with transition $t$ and place $p$ such that $f(p,t)=(1,0)$ then the firing of $t$ in a marking $M$ requires $M(p)\geq 1$.
By Theorem~\ref{the:tractability}, the hardness of finding a $\tau$-net $N$ for $A$ originates from the potential possibility of $\tau$-nets to satisfy such conditions by multiple markings $M(p)\in \{1,\dots,b\}$.
In fact, the definition of  $\tau_{R\mathbb{Z}PT}^b$ implies that $f(p,t)=(m,n)$ requires $M(p)=m$ for the firing of $t$ and prohibits the possibility of multiple choices.
By Theorem~\ref{the:tractability}, this makes $\tau_{R\mathbb{Z}PT}^b$-synthesis tractable.

While the question of whether there are superclasses of $\tau_{PT}^b,\tau_{PPT}^b$, $b\geq 2$, for which synthesis is doable in polynomial time remains unanswered, the following lemma shows that the type $\tau_{R\mathbb{Z}PT}^b$ yields at least a tractable superclasses of Schmitt's type $\tau_{R\mathbb{Z}PT}^b$ \cite{DBLP:conf/stacs/Schmitt96}.
More generally, if $b < b'$ then the class of $\tau_{R\mathbb{Z}PT}^b$-nets is strictly more comprehensive than the class of $\tau_{R\mathbb{Z}PT}^{b'}$-nets:

\begin{lemma}\label{lem:contribution_of_sharp_nets}
If $b < b'\in \mathbb{N}^+$ and if $\mathcal{T}$ is the set of $\tau_{R\mathbb{Z}PT}^b$ -solvable TS and $\mathcal{T'}$ the set of $\tau_{R\mathbb{Z}PT}^{b'}$-solvable TS then $\mathcal{T }\subset \mathcal{T'}$.
\end{lemma}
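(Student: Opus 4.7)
My plan is to split the strict inclusion into (a) the containment $\mathcal{T} \subseteq \mathcal{T'}$ and (b) the existence of a witness in $\mathcal{T'} \setminus \mathcal{T}$, and handle them in opposite order. For (b) I would generalize Example~\ref{ex:admissible_set_2}: take the directed cycle $A = (S, \{a\}, \delta, s_0)$ with $S = \{s_0, \ldots, s_{b'}\}$ and $\delta(s_j, a) = s_{(j+1) \bmod (b'+1)}$. The region $(sup, sig)$ with $sup(s_j) = j$ and $sig(a) = 1$ is a $\tau_{R\mathbb{Z}PT}^{b'}$-admissible singleton, so $A \in \mathcal{T'}$. To rule out $A \in \mathcal{T}$, I would focus on the SSA $(s_0, s_{b+1})$: any pair signature $(m,n)$ for $a$ forces every $a$-source to have support $m$, collapsing the whole cycle onto one support value; any modular signature $sig(a) = j$ must satisfy $(b'+1) \cdot j \equiv 0 \pmod{b+1}$ (so that the cycle closes), and then $sup(s_{b+1}) - sup(s_0) = (b+1) \cdot j \equiv 0 \pmod{b+1}$. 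Hence no $\tau_{R\mathbb{Z}PT}^b$-region separates $s_0$ from $s_{b+1}$, so $A \notin \mathcal{T}$.

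For (a) I would translate a $\tau_{R\mathbb{Z}PT}^b$-admissible set $\mathcal{R}$ of $A$ atom-by-atom into a $\tau_{R\mathbb{Z}PT}^{b'}$-admissible set $\mathcal{R}'$. Given $R = (sup, sig) \in \mathcal{R}$ solving an atom $\alpha$, the natural lift $R' = (sup', sig')$ sets $sup'(s) = sup(s)$ (valid because $\{0, \ldots, b\} \subseteq \{0, \ldots, b'\}$) and $sig'(e) = sig(e)$ for each event $e$ with a pair signature, since the unique edge $m \ledge{(m,n)} n$ coincides in both types. For an event $e$ with modular signature $sig(e) = i$, I would split on the wrap behaviour along the $e$-edges of $A$: if $sup(s) + i \leq b$ for every $e$-source $s$, set $sig'(e) = i$; the inequality $sup(s) + i \leq b \leq b'$ then precludes wrap in $\mathbb{Z}_{b'+1}$, so $sup'$ realises the correct target. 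If instead $sup(s) + i > b$ for every $e$-source, set $sig'(e) = i + (b'-b) \in \{b'-b, \ldots, b'\}$; a direct calculation shows that $sup(s) + i + (b'-b) \in [b'+1, 2b']$, so reducing modulo $b'+1$ yields $sup(s) + i - (b+1) = sup(s')$, as required.

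The main obstacle is the mixed case for a modular signature $sig(e) = i$, in which some $e$-edges wrap under $sup$ and others do not. No single event of $\tau_{R\mathbb{Z}PT}^{b'}$ realises both behaviours under the naive lift $sup' = sup$, because the two required congruences on $sig'(e)$ differ by $b+1$, and $b'+1$ does not divide $b+1$ when $b < b'$. I would resolve this by exploiting that the lifted region is only required to solve the single atom $\alpha$ and need not imitate the full behaviour of $R$: the support $sup'$ can be recomputed on states unrelated to $\alpha$ by walking a spanning tree of $A$ from $\iota$ and, for each modular event, choosing lifted support values in $\{0, \ldots, b'\}$ that avoid wrap in $\mathbb{Z}_{b'+1}$; the extra range $b'-b$ provides exactly the slack needed to unroll any wrap-carrying $e$-edge. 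The final verification that $sup'$ still realises the separation demanded by $\alpha$ reduces to preserving one strict inequality that was already witnessed by $R$, and I would complete the argument by case distinction on whether $\alpha$ is an SSA or an ESSA and on the signature type solving it in $R$.
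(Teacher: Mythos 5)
Your second part---exhibiting the single-event cycle on $b'+1$ states as a witness of $\mathcal{T}'\setminus\mathcal{T}$---is precisely the paper's proof: the same TS, the same solving region $sup(s_i)=i$, $sig(a)=1$ for $\tau_{R\mathbb{Z}PT}^{b'}$, and the same case split (a pair signature collapses the support of all $a$-sources; a modular signature $j$ gives $sup(s_{b+1})-sup(s_0)=(b+1)\cdot j\equiv 0 \bmod (b+1)$). The paper's proof in fact consists of nothing else; it never addresses the containment $\mathcal{T}\subseteq\mathcal{T}'$.

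Your first part is where the genuine gap lies, and it is not repairable. You correctly isolate the obstruction---a modular signature whose $e$-edges wrap in $\mathbb{Z}_{b+1}$ at some sources but not at others---but the proposed fix (re-choosing supports along a spanning tree so as to avoid wraps) ignores the chord constraints: around any cycle of $A$ the lifted signature must still sum to $0$ modulo $b'+1$, and that cannot in general be arranged. Indeed, the containment itself is false. Take $b=1$, $b'=2$ and the two-state cycle $A=s_0\stackrel{a}{\longrightarrow}s_1\stackrel{a}{\longrightarrow}s_0$. It has no ESSA, and the $\tau_{R\mathbb{Z}PT}^{1}$-region with $sup(s_0)=0$, $sup(s_1)=1$, $sig(a)=1$ solves its only SSA, so $A\in\mathcal{T}$. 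In $\tau_{R\mathbb{Z}PT}^{2}$, however, a pair signature $(m,n)$ forces $sup(s_0)=sup(s_1)=m$ (both states are sources of $a$, and $(m,n)$ occurs only at the state $m$), while a modular signature $j$ must satisfy $2j\equiv 0 \bmod 3$, hence $j=0$ and again $sup(s_0)=sup(s_1)$; so $(s_0,s_1)$ is unsolvable and $A\notin\mathcal{T}'$. The same failure occurs for the $(b+1)$-cycle whenever $b+1$ does not divide $b'+1$. Consequently no lifting argument of the kind you sketch can close the mixed case; the most that can be established---and all that the paper's own proof establishes---is that $\mathcal{T}'\setminus\mathcal{T}\neq\emptyset$.
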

\begin{proof}
We present a TS $A$ that is $\tau_{R\mathbb{Z}PT}^{b'}$-solvable but not $\tau_{R\mathbb{Z}PT}^{b}$-solvable:
Let $A=(\{s_0,\dots , s_{b'}\},\{a\},\delta, s_0)$ be the TS with transition function $\delta (s_i, a)=s_{i+1}$ for $i\in \{0,\dots, b'-1\}$ and $\delta (s_{b'},a)=s_0$.
By other words, $A$ is a directed labeled cycle $s_0\edge{a}\dots \edge{a}s_{b'}\edge{a}s_0$ where every transition is labeled by $a$.
Notice, that $A$ has no ESSA.
Hence, it has the $\tau$-ESSP for every type of nets $\tau$.
Consequently, $A$ is $\tau$-solvable if and only if it has the $\tau$-SSP.

Assume, for a contradiction, that $A$ is $\tau_{R\mathbb{Z}PT}^b$-solvable.
By $b <b'$, $A$ provides the SSA $(s_0, s_{b+1})$ and the $\tau_{R\mathbb{Z}PT}^b$-solvability of $A$ implies that there is a $\tau_{R\mathbb{Z}PT}^b$-region $(sup, sig)$ that solves it.
If $sig(a)=(m,n)$ then $sup(s_1)=sup(s_0)-m+n\not=sup(s_0)$ and, by definition of $\tau_{R\mathbb{Z}PT}^b$, $\neg sup(s_1)\ledge{(m,n)}$.
This is a contradiction to $s_1\edge{a}$.
Hence, $sig(a)\in \{1,\dots, b\}$.
By induction, $sup(s_{b+1})=sup(s_0) + (b+1)\cdot sig(a) = sup(s_0) \text{ mod } (b+1)$ implying $sup(s_{b+1})=sup(s_0)$.
Thus, $(sup, sig)$ does not solve $(s_0, s_{b+1})$, which proves that $A$ is not $\tau_{R\mathbb{Z}PT}^{b}$-solvable.

On the contrary, it is easy to see that the $\tau_{R\mathbb{Z}PT}^{b'}$-region $(sup, sig)$, which is defined by $sup(s_0)=0$, $sig(a)=1$ and $sup(s_{i+1})=sup(s_i)+ sig(a)$ for $i\in \{0,\dots, b'-1\}$, solves every SSA of $A$.
Hence, $A$ is $\tau_{R\mathbb{Z}PT}^{b'}$-solvable.
\end{proof}

\subsection{Abstract regions and fundamental cycles}%

In the remainder of this paper, unless explicitly stated otherwise, we assume that $A=(S,E,\delta,\iota)$ is an arbitrary but fixed (non-trivial) TS with at least two states and event set $E=\{ e_1,\dots, e_n \}$.
Recall that $\tau\in \{\tau_{\mathbb{Z}PT}^b, \tau_{\mathbb{Z}PPT}^b, \tau_{R\mathbb{Z}PT}^b\}$ and $b\in \mathbb{N}^+$ are also arbitrary but fixed.

The proof of Theorem~\ref{the:tractability} bases on a generalization of the approach used in~\cite{DBLP:conf/stacs/Schmitt96} that reduces the solvability of ESSA and SSA to the solvability of systems of linear equations modulo $b+1$.
It exploits that the solvability of such systems is decidable in polynomial time:

\begin{lemma}[\cite{DBLP:journals/iandc/GoldmannR02}]\label{lem:complexity_solving_linear_system}
Let $M \in \mathbb{Z}^{k \times n}_{b+1}$ and $c\in \mathbb{Z}^k_{b+1}$.
There is an algorithm that decides in time  $\mathcal{O}(nk\cdot max\{n,k\})$ whether there is an element $x\in \mathbb{Z}^n_{b+1}$ such that $Mx=c$.
\end{lemma}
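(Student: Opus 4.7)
The plan is to adapt Gaussian elimination to the ring $\mathbb{Z}_{b+1}$, computing a Hermite-like normal form of the augmented matrix $[M \mid c]$ and then reading off a solution or detecting inconsistency. Working over $\mathbb{Z}_{b+1}$ rather than over a field is the essential complication, so I would treat $b$ as a constant (as is standard when $\mathbb{Z}_{b+1}$-operations count as unit cost) and focus on bounding the number of row operations by $\mathcal{O}(nk\cdot\max\{n,k\})$.

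First, I would process the columns of $[M\mid c]$ from left to right, building a pivot in each. Within the current column, say column $j$, I would use the extended Euclidean algorithm to combine the rows below the already-processed rows into a single row whose entry in column $j$ is the $\gcd$ (mod $b+1$) of all entries in that column. Concretely, pairs of rows $r_i,r_{i'}$ with entries $a_i,a_{i'}$ are merged using Bezout coefficients $u,v$ with $ua_i+va_{i'}=\gcd(a_i,a_{i'})\bmod{b+1}$; the second row is replaced by a complementary combination so that the rank is preserved. After at most $k-1$ such pair merges I obtain a designated pivot row for column $j$ whose pivot entry $p_j$ divides (in $\mathbb{Z}_{b+1}$) every other entry of that column. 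Each entry above or below is then eliminated by a single row operation of the form $r\leftarrow r - q\cdot r_{\text{pivot}}$, where $q$ is chosen so that the column-$j$ entry of $r$ becomes $0$; such a $q$ exists precisely because $p_j$ divides the entry.

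Once every column has been processed, the matrix is in (reduced) row echelon form over $\mathbb{Z}_{b+1}$. I would then walk through the trivial rows: if any row reads $(0,\dots,0\mid c')$ with $c'\neq 0$, reject; otherwise, solve by back-substitution column by column, picking each pivot variable by solving a single congruence $p_j x_j\equiv c'_j\pmod{b+1}$, which has a solution iff $\gcd(p_j,b+1)\mid c'_j$. Non-pivot variables may be set to $0$. If every pivot equation is consistent, the system has a solution; otherwise it does not. Correctness follows from the standard fact that the Bezout-based combinations preserve row span over $\mathbb{Z}_{b+1}$.

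For the running time, the reduction has at most $\min\{n,k\}$ pivot columns. Building the pivot of one column costs $\mathcal{O}(k)$ Bezout combinations, each touching a whole row of length $n+1$, i.e.\ $\mathcal{O}(nk)$ per column. Clearing the column with the pivot then contributes another $\mathcal{O}(nk)$. Summing over the columns gives $\mathcal{O}(nk\cdot\min\{n,k\})$, and back-substitution adds only $\mathcal{O}(n\cdot\max\{n,k\})$ further work, so the total is bounded by $\mathcal{O}(nk\cdot\max\{n,k\})$ as claimed. The main obstacle I expect is the bookkeeping around non-invertible pivots: showing that after the Bezout merging the pivot $p_j$ always divides the remaining column entries in $\mathbb{Z}_{b+1}$, and that the rank of the matrix is genuinely preserved by the replacement $r_{i'}\leftarrow (a_i/\gcd)\,r_{i'}-(a_{i'}/\gcd)\,r_i$. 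Handling this carefully (ideally via the Smith normal form machinery over $\mathbb{Z}_{b+1}$) is what makes the algorithm correct and, in my view, is the real content of~\cite{DBLP:journals/iandc/GoldmannR02}.
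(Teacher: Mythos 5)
First, a note on scope: the paper does not prove this lemma at all --- it is imported verbatim from Goldmann and Russell~\cite{DBLP:journals/iandc/GoldmannR02} --- so there is no in-paper argument to compare yours against; what follows judges your proposal on its own terms. There is a genuine gap in the decision step. Row operations alone, followed by back-substitution with the free variables set to $0$, does not correctly decide solvability over $\mathbb{Z}_{b+1}$ when $b+1$ is composite. Concretely, take $b+1=4$ and the single equation $2x_1+x_2=1$. Your procedure places the pivot $p_1=2$ in column $1$, finds no remaining row for column $2$, declares $x_2$ free, sets $x_2=0$, and rejects because $\gcd(2,4)=2$ does not divide $1$ --- yet $(x_1,x_2)=(0,1)$ is a solution. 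The underlying problem is that over a ring with zero divisors, membership of $c$ in the column span of $M$ cannot be read off a row echelon form by independently solving the pivot congruences: the free (non-pivot) variables can contribute residues that the pivot entries alone cannot reach, so zeroing them loses solutions.

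The standard fixes are exactly the machinery you defer to in your final sentence but never actually build into the algorithm: either diagonalize using row \emph{and} column operations (Smith normal form over $\mathbb{Z}_{b+1}$, or after lifting to $\mathbb{Z}$), after which solvability is checked coordinatewise on the diagonal entries; or compute a Howell-type echelon form in which, for every pivot row $r$ with pivot $p$, the annihilator multiple $\bigl((b+1)/\gcd(p,b+1)\bigr)\cdot r$ is fed back into the elimination so that it can generate pivots in later columns --- in the example above this produces the additional row $(0\ \ 2 \mid 2)$, which makes back-substitution succeed. Your Bezout-based pivot construction and the complexity accounting are fine, and the divisibility and rank-preservation issues you single out as the ``real content'' are in fact routine; the fatal point is the one above, and without one of these two corrections the algorithm you describe rejects solvable instances.
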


Essentially, our generalization composes for every ESSA and every SSA $\alpha=(x,y)$ of the TS $A$ a system of equations modulo $b+1$ that has a solution if and only if $\alpha$ is $\tau$-solvable.
Hence, the TS $A$ has the $\tau$-ESSP, respectively the $\tau$-SSP, if and only if every system, defined by the ESSA of $A$, respectively by the SSA of $A$, has a solution.

We proceed by deducing the notion of abstract regions.
Our starting point is the goal to obtain $\tau$-regions $(sup, sig)$ of $A$ as solutions of linear equation systems modulo $b + 1$.
By Definition~\ref{def:region} and the definition of $\tau$, $(sup, sig)$ is a $\tau$-region of $A$ if and only if for every transition $s\edge{e}s'$ it is true that
\begin{equation}\label{eq:region_definition}
sup(s')=(sup(s)-sig^-(e)+sig^+(e)+\vert sig(e) \vert) \text{ mod } (b+1)
\end{equation}
Hence, installing for every transition $s\edge{e}s'$ the corresponding Equation~\ref{eq:region_definition} yields a linear system of equations whose solutions are regions of $A$.
If $(sup, sig)$ is a solution of this system such that $sig(e)=(m,n)\in E_\tau\setminus \{0,\dots, b\}$ for $e\in E(A)$ then, by definition, for every transition $s\edge{e}s'$ it has to be true that $m \leq sup(s)$ and $sup(s')-m+n\leq b$.
Unfortunately, the conditions $m \leq sup(s)$ and $sup(s')-m+n\leq b$ can not be tested in the group $\mathbb{Z}_{b+1}$.
To cope with this obstacle, we abstract from elements $(m,n)\in E_\tau$ by restricting to regions (solutions) that identify $(m,n)$ with the unique element $x\in \{0,\dots, b\}$ such that $x= (n-m) \text{ mod } (b+1)$.
This leads to the notion of \emph{abstract} $\tau$-regions.
\begin{definition}[Abstract Region]\label{def:abstract_region}
A $\tau$-region $(sup, sig)$ of $A=(S,E,\delta,\iota)$ is called \emph{abstract} if the codomain of $sig$ is restricted to the elements of $\mathbb{Z}_{b+1}$, that is, $sig: E\longrightarrow \{0,\dots, b\}$.
If $(sup, sig)$ is an abstract region, then we call $sig$ an \emph{abstract} signature.
\end{definition}

\begin{remark}[Notation of abstract regions]\label{rem:notation_abstract_region}
For the sake of clarity, we denote abstract signatures by $abs$ instead of $sig$ and abstract regions by $(sup, abs)$ instead of $(sup, sig)$.
For convenience, we also identify $abs=(abs(e_1),\dots, abs(e_n))$.
\end{remark}

By definition, two mappings $sup:\{0,\dots, b\} \rightarrow \{0,\dots, b\}$ and $abs: E\rightarrow \{0,\dots, b\}$ define an abstract $\tau$-region if and only if for every transition $s\edge{e}s'$ of $A$ it is true that
\begin{equation}\label{eq:abstract_region_definition}
sup(s')=(sup(s) + abs(e)) \text{ mod } (b+1)
\end{equation}
Obviously, for abstract regions, the Equation~\ref{eq:region_definition} reduces to Equation~\ref{eq:abstract_region_definition}.
Installing for every transition $s\edge{e}s'$ of $A$ its corresponding Equation~\ref{eq:abstract_region_definition} yields a system modulo $b+1$ whose solutions are abstract regions.
However, such systems require to deal with $sup$ and $abs$ simultaneously, which is very inconvenient.
It is better to first obtain $abs$ independently of $sup$ and then to define $sup$ with the help of $abs$.
The following observations show how to realize this idea.

\medskip
By induction and Equation~\ref{eq:abstract_region_definition}, one immediately obtains that $(sup, abs)$ is an abstract region if and only if for every directed labeled path $p=\iota\Edge{e'_1}\dots\Edge{e'_m}s_m$ of $A$ from the initial state $\iota$ to the state $s_m$ the \emph{path equation} holds:
\begin{equation}\label{eq:path_equation}
sup(s_m) = (sup(\iota) + abs(e'_1)+ \dots  + abs(e'_m)) \text{ mod }(b+1)
\end{equation}

In order to exploit Equation~\ref{eq:path_equation}, we first introduce the following notions:

\begin{definition}[Parikh-vector]\label{def:parikh_vector}
Let $p=z_0\edge{a_1}\dots\edge{a_m}z_m$ be a path of the TS $A$ on pairwise distinct states $z_0,\dots, z_m$.
The \emph{Parikh-vector} of $p$ is the mapping $\psi_p:\{e_1,\dots, e_n\}\rightarrow \{0,\dots, b\}$ such that $\psi_p(e)=\vert \{i\in \{0,\dots, m-1\}\mid z_i\edge{e}\}\vert \text{ mod }(b+1)$ for every event $e\in \{e_1,\dots, e_n\}$, that is, $\psi_p$ assigns to $e$ the number of its occurrences on $p$ modulo $b+1$.
For convenience, we identify $\psi_p=(\psi_p(e_1),\dots, \psi_p(e_n))$.
\end{definition}

\begin{definition}[Product]\label{def:product}
If $x=(x_1,\dots, x_n)$ and $y=(y_1,\dots, y_n)$ are two elements of $\mathbb{Z}^n_{b+1}$, then we say $x\cdot y= (x_1\cdot y_1 +\dots + x_n\cdot y_n) \text{ mod } b+1 $ is the \emph{product} of $x$ and $y$.
\end{definition}

Definition~\ref{def:parikh_vector} and Definition~\ref{def:product} allow us to reformulate the path equation~\ref{eq:path_equation} as follows:
\begin{equation}\label{eq:path_equation_reformulated}
sup(s_m)=(sup(\iota) + \psi_p\cdot abs) \text{ mod } (b+1)
\end{equation}

Notice that if $p,p'$ are two different paths from $\iota$ to $s_m$, then $\psi_p\cdot abs =\psi_p\cdot abs$.
Thus, the support $sup$ is fully determined by $sup(\iota)$ and $abs$.
We obtain $sup$ explicitly by $sup(s)=(sup(\iota) + \psi_p\cdot abs) \text{ mod } (b+1)$ for all $s\in S$, where $p$ is an arbitrary but fixed path of $A$ that starts at $\iota$ and terminates at $s$.
Consequently, every abstract signature $abs$ implies $b+1$ different abstract $\tau$-regions of $A$, one for every $sup(\iota)\in \{0,\dots, b\}$.
Altogether, we have argued that the challenge of finding abstract regions of $A$ reduces to the task of finding the abstract signatures of $A$.
In the following, we introduce the notion of fundamental cycles, defined by so-called chords of a spanning tree of $A$, which enables us to find abstract signatures.

\begin{figure}[t!]
\centering
\begin{tikzpicture}
\begin{scope}
\node (0) at (0,0) {\nscale{$0$}};
\node (1) at (1.25,0) {\nscale{$1$}};
\node (2) at (2.5,0) {\nscale{$2$}};
\node (3) at (2.5,-1) {\nscale{$3$}};
\node (4) at (2.5,-2) {\nscale{$4$}};
\node (5) at (0,-1) {\nscale{$5$}};
\node (6) at (0,-2) {\nscale{$6$}};
\node (7) at (1.25,-2) {\nscale{$7$}};

\path (0) edge [->] node[pos=0.5, above] {\escale{$a$}} (1);
\path (1) edge [->] node[pos=0.5,above] {\escale{$b$}} (2);
\path (2) edge [->, bend right=30] node[pos=0.5,left] {\escale{$c$}} (3);
\path (3) edge [->, bend right=30] node[pos=0.5,left] {\escale{$c$}} (4);
\path (4) edge [->, bend right=30] node[pos=0.5,right] {\escale{$c$}} (2);

\path (0) edge [->, bend right=30] node[pos=0.5,left ] {\escale{$c$}} (5);
\path (5) edge [->, bend right=30] node[pos=0.5, left] {\escale{$c$}} (6);
\path (6) edge [->] node[pos=0.5, below] {\escale{$a$}} (7);
\path (6) edge [->, bend right=30] node[pos=0.5, right] {\escale{$c$}} (0);
\path (7) edge [->] node[pos=0.5, below] {\escale{$d$}} (4);
\end{scope}
\begin{scope}[xshift=6cm]
\node (0) at (0,0) {\nscale{$0$}};
\node (1) at (1.25,0) {\nscale{$1$}};
\node (2) at (2.5,0) {\nscale{$2$}};
\node (3) at (2.5,-1) {\nscale{$3$}};
\node (4) at (2.5,-2) {\nscale{$4$}};
\node (5) at (0,-1) {\nscale{$5$}};
\node (6) at (0,-2) {\nscale{$6$}};
\node (7) at (1.25,-2) {\nscale{$7$}};

\path (0) edge [->] node[pos=0.5, above] {\escale{$a$}} (1);
\path (1) edge [->] node[pos=0.5,above] {\escale{$b$}} (2);
\path (2) edge [->] node[pos=0.5,left] {\escale{$c$}} (3);
\path (3) edge [->] node[pos=0.5,left] {\escale{$c$}} (4);
\path (0) edge [->] node[pos=0.5,left ] {\escale{$c$}} (5);
\path (5) edge [->] node[pos=0.5, left] {\escale{$c$}} (6);
\path (6) edge [->] node[pos=0.5, below] {\escale{$a$}} (7);
\end{scope}
\end{tikzpicture}
\caption{Left: An input TS $A$.
Right: A spanning tree $A'$ of TS $A$.
The unique Parikh vectors $\psi_0,\dots \psi_7$ of $A'$ (written as rows) are given by $\psi_0= (0,0,0,0), \psi_1=(1,0,0,0), \psi_2= (1,1,0,0), \psi_3=(1,1,1,0), \psi_4=(1,1,2,0), \psi_5= (0,0,1,0)$, $\psi_6=(0,0,2,0) $ and $\psi_7=(1,0,2,0) $.
The transitions $\delta_A(7,d)=4$, $\delta_A(4,c)=2$ and $\delta_A(6,c)=0$ of $A$ define the chords of $A'$.
The corresponding fundamental cycles are given by $\psi_t=\psi_7 +(0,0,0,1) -\psi_4 = (0,2,0,1) $ and $\psi_{t'}= \psi_4 + (0,0,1,0)-\psi_2 =(0,0,0,0)$ and  $\psi_{t''}=\psi_6 + (0,0,1,0)-\psi_0 =(0,0,0,0)$.
Hence, if $abs=(x_a,x_b,x_c,x_d)$ then $\psi_t\cdot abs= 0\cdot x_a +2\cdot x_b +0\cdot x_c + x_d=2\cdot x_b + x_d$.
By $\psi_{t'}\cdot abs = \psi_{t''}\cdot abs =0$ for every map $abs$, only the equation $2\cdot x_b + x_d=0$ contributes to the basic part of every upcoming system. } \label{fig:fundamental_cycles}\vspace*{-3mm}
\end{figure}
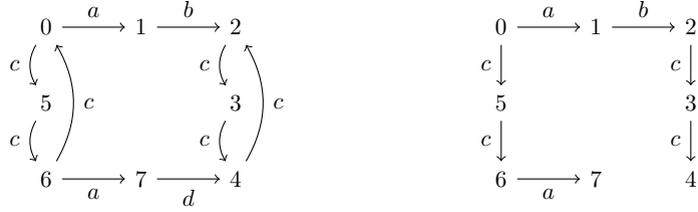

\begin{definition}[Spanning tree, chord]\label{def:spanning_tree}
A \emph{spanning tree} $A'$ of TS $A$ is a sub-transition system $A'=(S, E', \delta_{A'}, \iota)$ of $A$ with the same set of states $S$, an event set $E'\subseteq E$ and a restricted transition function $\delta_{A'}$ such that first $\delta_{A'}(s,e)=s'$ entails $\delta_A(s,e)=s'$ and, moreover, for every $s\in S$ there is \emph{exactly} one path $p=\iota\edge{e_1} \dots \edge{e_m} s$ in $A'$.
Every transition $s\edge{e}s'$ of $A$ which is not in $A'$ is called a \emph{chord} (of $A'$).
\end{definition}

\begin{remark}[Parikh-vector of a state in the spanning tree]
For every $s\in S$, by $\psi_s$ we denote the Parikh-vector $\psi_p$ of the unique path $p=\iota\edge{e_1} \dots \edge{e_m} s$ in $A'$.
\end{remark}

Notice that the underlying undirected graph of $A'$ is a tree in the common graph-theoretical sense.
The chords of $A'$ are exactly the edges that induce a cycle in the underlying undirected graph of $A'$.
This gives rise to the following notion of fundamental cycles:

\begin{definition}[Fundamental cycle]\label{def:fundamental_cycle}
Let $t=s\edge{e}s'$ be a chord of $A'$.
The \emph{fundamental cycle} of $t$ is the mapping $\psi_t:\{e_1,\dots, e_n\}\rightarrow \{0,\dots, b\}$ that is defined as follows for all $i\in \{1,\dots, n\}$:
\[\psi_t(e_i)=
\begin{cases}
\psi_s(e_i)-\psi_{s'}(e_1) \text{ mod } b+1, & \text{if } e_i\not=e\\
\psi_s(e_i) -\psi_{s'}(e_i) +1 \text{ mod } b+1, & \text{else.}\\
\end{cases}
\]
For convenience, we identify $\psi_t=(\psi_t(e_1),\dots, \psi_t(e_n))$.
\end{definition}

By the following lemma, we can use the fundamental cycles to generate abstract signatures of $A$:

\begin{lemma}\label{lem:fundamental_cycles}
If $A'$ is a spanning tree of a TS $A$ with chords $t_1,\dots, t_k$ then $\emph{abs}\in \mathbb{Z}^n_{b+1}$ is an abstract signature of $A$ if and only if $\psi_{t_i} \cdot \emph{abs}  = 0$ for all $i\in \{1,\dots, k\}$.
Two different spanning trees $A'$ and $A''$ provide equivalent systems of equations.
\end{lemma}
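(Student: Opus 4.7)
The plan is to prove both implications by invoking Equation~\ref{eq:path_equation_reformulated} and exploiting that in a spanning tree every state is reached by exactly one path. After that, the equivalence of the systems obtained from different spanning trees follows from the fact that each system characterizes the same object, namely the set of abstract signatures of $A$, which is defined independently of any spanning tree.

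For the \emph{only-if} direction, I assume that $abs$ is the signature of an abstract $\tau$-region $(sup, abs)$ of $A$. Let $t = s \edge{e} s'$ be a chord of $A'$, and let $p_s$ and $p_{s'}$ denote the unique paths of $A'$ from $\iota$ to $s$ and $s'$, respectively. By Equation~\ref{eq:path_equation_reformulated} applied to $p_s$ and to $p_{s'}$, I obtain $sup(s) = (sup(\iota) + \psi_s \cdot abs) \bmod (b+1)$ and $sup(s') = (sup(\iota) + \psi_{s'} \cdot abs) \bmod (b+1)$. Moreover, the edge $t$ implies $sup(s') = (sup(s) + abs(e)) \bmod (b+1)$ by Equation~\ref{eq:abstract_region_definition}. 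Combining these three equations yields $(\psi_s - \psi_{s'}) \cdot abs + abs(e) \equiv 0 \bmod (b+1)$, which is precisely $\psi_t \cdot abs \equiv 0 \bmod (b+1)$ by Definition~\ref{def:fundamental_cycle}.

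For the \emph{if} direction, assume $\psi_{t_i} \cdot abs \equiv 0 \bmod (b+1)$ for all chords $t_1, \dots, t_k$ of $A'$. I define $sup: S \to \{0, \dots, b\}$ by $sup(s) = \psi_s \cdot abs \bmod (b+1)$ for every $s \in S$ (implicitly fixing $sup(\iota) = 0$, which is legitimate since $\psi_\iota$ is the zero vector). It remains to show that for every edge $s \edge{e} s'$ of $A$ the region condition $sup(s') = (sup(s) + abs(e)) \bmod (b+1)$ holds. If $s \edge{e} s'$ belongs to $A'$, then $p_{s'}$ is obtained from $p_s$ by appending $e$, hence $\psi_{s'} \equiv \psi_s + \mathbf{1}_e \bmod (b+1)$, where $\mathbf{1}_e$ is the unit vector at $e$, and the claim follows directly. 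If $s \edge{e} s'$ is a chord $t_i$, then Definition~\ref{def:fundamental_cycle} gives $\psi_{t_i} \cdot abs = (\psi_s - \psi_{s'}) \cdot abs + abs(e) \equiv 0 \bmod (b+1)$, so $sup(s') \equiv sup(s) + abs(e) \bmod (b+1)$ as required. Thus $(sup, abs)$ is an abstract $\tau$-region of $A$.

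For the final statement, let $A'$ and $A''$ be two spanning trees of $A$ with chord sets $T'$ and $T''$, respectively, and let $\Sigma'$ and $\Sigma''$ denote the corresponding systems $\{\psi_t \cdot abs \equiv 0 \mid t \in T'\}$ and $\{\psi_t \cdot abs \equiv 0 \mid t \in T''\}$. By the equivalence just established, the solution set of $\Sigma'$ equals the set of abstract signatures of $A$, and the same is true for $\Sigma''$. Consequently, $\Sigma'$ and $\Sigma''$ have the same solutions and are therefore equivalent. The main care in the proof lies in keeping the mod-$(b{+}1)$ bookkeeping consistent and in verifying that the choice of $sup(\iota)=0$ in the \emph{if}-direction is harmless; both are routine.
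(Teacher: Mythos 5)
Your proposal is correct and follows essentially the same route as the paper's proof: the \emph{if}-direction constructs $sup$ from $\psi_s\cdot abs$ and checks tree edges and chords separately via Definition~\ref{def:fundamental_cycle}, the \emph{only-if}-direction runs the same computation backwards, and the equivalence of systems for different spanning trees is deduced from both characterizing the same set of abstract signatures. The only cosmetic difference is that you fix $sup(\iota)=0$ where the paper keeps $sup(\iota)$ arbitrary, which is harmless since an abstract signature only requires the existence of some compatible support.
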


\begin{proof}
We start with proving the first statement.
\textit{If}:
Let $\emph{abs}\in \mathbb{Z}^n_{b+1}$ such that $\psi_{t_i} \cdot abs  = 0$ for all $i\in \{1,\dots, k\}$ and $sup(\iota)\in \{0,\dots, b\}$.
Let $sup(\iota)\in \{0,\dots, b\}$ be arbitrary but fixed and, for all $s\in S$, let $sup(s)=sup(\iota)+\psi_s\cdot abs$.
We show that $(sup, sig)$ is an abstract region of $A$, that is, for all edges $t=s\edge{a}s'$ of $A$ holds $sup(s')=sup(s) +abs(a) \text{ mod } b+1$:
By definition, we have $sup(s)=sup(\iota)+\psi_s\cdot abs$ and $sup(s')=sup(\iota)+\psi_{s'}\cdot abs$.
If $t$ is not a chord, then $\psi_{s'}(a)=\psi_s(a)+1 \text{ mod }b+1$ and $\psi_{s'}(e)=\psi_s(e)$ for all $e\in \{e_1,\dots, e_n\}\setminus\{a\}$.
This implies $sup(s')=sup(\iota)+\psi_s\cdot abs + abs(a) \text{ mod }b+1$ and thus $sup(s')=sup(s) +abs(a) \text{ mod } b+1$.

\medskip
Otherwise, if $t$ is a chord of $A'$, then it holds $\psi_t(a)=\psi_s(a) -\psi_{s'}(a)+1$ and the following implications (considered modulo $b+1$) are true:
\begin{align*}
 0 & = \psi_t \cdot abs  &\Longleftrightarrow  \\
 0 &= \sum_{i=1}^n((\psi_s(e_i)-\psi_{s'}(e_i))\cdot abs(e_i) +abs(a)&\Longleftrightarrow  \\
  0 &= \sum_{i=1}^n \psi_s(e_i)\cdot abs(e_i)- \sum_{i=1}^n \psi_{s'}(e_i)\cdot abs(e_i) +abs(a)&\Longleftrightarrow  \\
  \psi_{s'}\cdot abs  &= \psi_s\cdot abs+abs(a)&\Longleftrightarrow  \\
   sup(\iota)+ \psi_{s'}\cdot abs  &=  sup(\iota) + \psi_s\cdot abs+abs(a)&\Longleftrightarrow  \\
 sup(s')  & = sup(s) + abs(a) &
 \end{align*}

Hence, $abs$ is an abstract signature of $A$ and the proof shows how to get a corresponding abstract region $(sup,abs)$ of $A$.

\textit{Only-if}:
If $abs$ is an abstract region of $A$ then we have $sup(s') = sup(s) + abs(e)$ for every transition in $A$.
Hence, if $t=s\edge{e}s'$ is a chord of a spanning tree $A'$ of $A$ then working backwards the equivalent equalities above proves $\psi_t\cdot abs =0$.

The second statement is implied by the first:
If $A'$, $A''$ are two spanning trees of $A$ with fundamental cycles $\psi^{A'}_{t_1},\dots, \psi^{A'}_{t_k}$ and $\psi^{A''}_{t'_1},\dots, \psi^{A''}_{t'_k}$, respectively, then we have for $abs\in \mathbb{Z}^n_{b+1}$ that $\psi^{A'}_{t_i}\cdot abs =0 ,i\in \{1,\dots, k\}$ if and only if $abs$ is an abstract signature of $A$ if and only if $\psi^{A''}_{t'_i}\cdot abs =0 ,i\in \{1,\dots, k\}$.
\end{proof}

In the following, justified by Lemma~\ref{lem:fundamental_cycles}, we assume $A'$ to be a fixed spanning tree of $A$ with chords $t_1,\dots, t_k$.
By $M_{A'}$ we denote the system of equations that consists of $\psi_{t_i} \cdot abs  = 0$ for all $i\in \{1,\dots, k\}$.
A spanning tree of $A$ is computable in polynomial time:
As $\delta_A$ is a function, $A$ has at most $ \vert E\vert\vert S\vert^2$ edges and $A'$ contains $\vert S\vert -1$ edges.
Thus, by $2 \leq \vert S\vert$, $A'$ has at most $\vert E\vert\vert S\vert^2 -1$ chords.
Consequently, a spanning tree $A'$ of $A$ is computable in time $\mathcal{O}(\vert E\vert\vert S \vert^3)$~\cite{DBLP:journals/networks/Tarjan77}.

To get polynomial time solvable systems of equations, we have restricted ourselves to equations like Equation~\ref{eq:path_equation} or its reformulated version Equation~\ref{eq:path_equation_reformulated}.
This restriction results in the challenge to compute abstract signatures of $A$.
By Lemma~\ref{lem:fundamental_cycles}, abstract signatures of $A$ are solutions of $M_{A'}$.
We get an (abstract) $\tau$-region $(sup, abs)$ of $A$ from $sup(\iota)$ and $abs$ by defining $sup(\iota)$ and $sup(s)=sup(\iota)+\psi_s \cdot abs $ for all $s\in S$.
However, if $(s, s')$ is an SSA of $A$ then $sup(s)\not=sup(s')$ is not implied.
Moreover, by definition, to solve an ESSA $(e,s)$, we need (concrete) $\tau$-regions $(sup, sig)$ such that $sig:E \longrightarrow E_\tau$.
The next section shows how to extend $M_{A'}$ to get such solving $\tau$-regions.

\subsection{The Proof of Theorem~\ref{the:tractability}}%

This section shows how to extend $M_{A'}$ for a given (E)SSA $\alpha$ to get a system $M_\alpha$, whose solution yields a region solving $\alpha$ if there is one.
But first we need the following lemma that tells us how to obtain abstract regions from (concrete) regions:

\begin{lemma}\label{lem:concrete_to_abstract}
If $(sup, sig)$ is a $\tau$-region of a TS $A=(S,E,\delta,\iota)$ then we obtain a corresponding abstract $\tau$-region $(sup, abs)$ by defining $abs$ for $e\in E$ as follows:
If $sig(e)=(m,n)$ then $abs(e)=-m+n \text{ mod } (b+1)$ and, otherwise, if $sig(e)\in \{0,\dots, b\}$ then $abs(e)=sig(e)$.
\end{lemma}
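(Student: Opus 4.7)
The plan is to verify directly that the pair $(sup, abs)$ constructed from $(sup, sig)$ satisfies the defining equation of an abstract $\tau$-region, namely Equation~\ref{eq:abstract_region_definition}: for every edge $s \edge{e} s'$ in $A$, we need $sup(s') \equiv sup(s) + abs(e) \pmod{b+1}$. Since the support $sup$ is left unchanged, and since we have already characterized abstract regions by this single equation per edge, the whole proof reduces to a case analysis on the form of $sig(e)$.

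First, I would fix an arbitrary edge $s \edge{e} s'$ of $A$ and split on whether $sig(e) \in E_{\tau_{PT}^b}$ or $sig(e) \in \{0, \dots, b\}$ (the group events, available only for $\tau \in \{\tau_{\mathbb{Z}PT}^b, \tau_{\mathbb{Z}PPT}^b, \tau_{R\mathbb{Z}PT}^b\}$). In the first case, $sig(e) = (m,n)$, and because $(sup, sig)$ is a $\tau$-region, the image $sup(s) \ledge{(m,n)} sup(s')$ exists in $\tau$, which by the definition of $\delta_{\tau_{PT}^b}$ (inherited by all extensions) gives $sup(s') = sup(s) - m + n$ as integers, with $0 \le sup(s), sup(s') \le b$. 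Reducing modulo $b+1$ and applying the definition $abs(e) = -m + n \bmod (b+1)$ yields the required equation.

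In the second case, $sig(e) \in \{0, \dots, b\}$. Then the definition of $\delta_{\tau_{\mathbb{Z}PT}^b}$ (and, for $\tau_{R\mathbb{Z}PT}^b$, its unchanged action on $\{0,\dots,b\}^2$ of this form) gives $sup(s') = (sup(s) + sig(e)) \bmod (b+1)$. Since $abs(e) = sig(e)$, the equation $sup(s') \equiv sup(s) + abs(e) \pmod{b+1}$ holds trivially. In both cases the equation holds for the fixed edge, so it holds for all edges of $A$, and hence $(sup, abs)$ is an abstract $\tau$-region by Definition~\ref{def:abstract_region} (the codomain of $abs$ is $\{0,\dots,b\}$ by construction).

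There is no real obstacle here; the lemma is essentially a translation lemma that packages the purely additive content of a concrete signature into a single residue modulo $b+1$. The only point requiring a word of care is to note that $sup(s)$ and $sup(s')$ already lie in $\{0,\dots,b\} = S_\tau$, so passing from the integer equality $sup(s') = sup(s) - m + n$ to its modular counterpart is unambiguous and does not require any additional argument about wrap-around.
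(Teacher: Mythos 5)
Your proposal is correct and follows essentially the same route as the paper's proof: an edge-by-edge verification with a case split on whether $sig(e)=(m,n)$ or $sig(e)\in\{0,\dots,b\}$, reducing the integer relation $sup(s')=sup(s)-m+n$ to its counterpart modulo $b+1$ in the first case and observing that nothing changes in the second. No gaps.
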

\begin{proof}
We have to show that $s\edge{e}s'$ in $A$ entails $sup(s)\ledge{abs(e)}sup(s')$ in $\tau$.
If $abs(e)=sig(e)\in \{0,\dots, b\}$ this is true as $(sup, sig)$ is a $\tau$-region.

If $sig(e)=(m,n)$ then, by definition, we have $sup(s')=sup(s)-m+n \text{ mod } (b+1)$ implying $sup(s')-sup(s)= -m+n \text{ mod } (b+1)$.
By $abs(e)=-m+n \text{ mod } (b+1)$ and symmetry, we get $-m+n = abs(e) \text{ mod } (b+1)$ and, by transitivity, we obtain $sup(s')-sup(s) = abs(e) \text{ mod } (b+1)$ which implies $sup(s')= sup(s) + abs(e) \text{ mod } (b+1)$.
Thus $sup(s)\ledge{abs(e)}sup(s')$.
\end{proof}

If $\alpha$ is an SSA $(s,s')$ then we only need to assure that the (abstract) region $(sup, abs)$ built on a solution of $M_{A'}$ satisfies $sup(s)\not=sup(s')$.
By $sup(s)=sup(\iota) + \psi_s\cdot abs$ and $sup(s')=sup(\iota) + \psi_{s'}\cdot abs$, it is sufficient to extend $M_{A'}$ in a way that ensures $\psi_s\cdot abs\not= \psi_{s'}\cdot abs$.
The next lemma proves this claim.

\begin{lemma}\label{lem:ssp}
If $\tau\in \{\tau_{\mathbb{Z}PT}^b,\tau_{\mathbb{Z}PPT}^b,\tau_{R\mathbb{Z}PT}^b\}$ then an SSA $(s,s')$ of $A=(S,E,\delta,\iota)$ is $\tau$-solvable if and only if there is an abstract signature $abs$ of $A$ with $\psi_{s}\cdot abs \not = \psi_{s'}\cdot abs $.
\end{lemma}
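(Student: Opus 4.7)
The plan is to prove the two directions by passing between concrete and abstract regions via Lemma~\ref{lem:concrete_to_abstract} and the path equation~\ref{eq:path_equation_reformulated}. The key observation is that for all three types $\tau\in \{\tau_{\mathbb{Z}PT}^b,\tau_{\mathbb{Z}PPT}^b,\tau_{R\mathbb{Z}PT}^b\}$, the event set $E_\tau$ contains the elements $\{0,\dots,b\}$ with $\delta_\tau(s,e)=(s+e)\bmod (b+1)$ defined on all of $S_\tau$, which means any abstract signature can be lifted to a concrete signature simply by reinterpreting its values.

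For the \emph{only-if} direction, I would start from a $\tau$-region $R=(sup,sig)$ solving the SSA $(s,s')$, so $sup(s)\not=sup(s')$. Applying Lemma~\ref{lem:concrete_to_abstract} yields an abstract $\tau$-region $(sup,abs)$ with the same support. By Equation~\ref{eq:path_equation_reformulated} we have $sup(s)=sup(\iota)+\psi_s\cdot abs$ and $sup(s')=sup(\iota)+\psi_{s'}\cdot abs$ modulo $b+1$, and subtracting gives $\psi_s\cdot abs\not=\psi_{s'}\cdot abs$, as required.

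For the \emph{if} direction, suppose $abs$ is an abstract signature of $A$ with $\psi_s\cdot abs\not=\psi_{s'}\cdot abs$. I would define a concrete $\tau$-region $(sup,sig)$ by setting $sig(e)=abs(e)\in\{0,\dots,b\}\subseteq E_\tau$ for every $e\in E$, choosing $sup(\iota)=0$, and putting $sup(q)=\psi_q\cdot abs\bmod (b+1)$ for every $q\in S$. To see that this is a $\tau$-region it suffices to check that for every edge $q\edge{e}q'$ of $A$ the equation $sup(q')=(sup(q)+sig(e))\bmod (b+1)$ holds; this follows because $(sup,abs)$ is an abstract region (Equation~\ref{eq:abstract_region_definition}) and because for any $e\in\{0,\dots,b\}$ the event $e$ is defined on every state of $\tau$ with $\delta_\tau(x,e)=(x+e)\bmod(b+1)$ in each of the three types. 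Finally, $sup(s)-sup(s')=\psi_s\cdot abs-\psi_{s'}\cdot abs\not=0$, so $(sup,sig)$ $\tau$-solves the SSA $(s,s')$.

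There is no serious obstacle here: the main subtlety is only that the construction in the backward direction must produce a valid $\tau$-region for \emph{every} one of the three types, which is immediate because the elements of $\{0,\dots,b\}$ are totally defined in all of $\tau_{\mathbb{Z}PT}^b$, $\tau_{\mathbb{Z}PPT}^b$, and $\tau_{R\mathbb{Z}PT}^b$. This is precisely the feature of the group extension that rescues SSP from the hardness results of Section~\ref{sec:hardness_results}: in contrast to the ESSP situation, no case analysis on the form of $sig(e)$ is needed because abstract signatures can be realized directly.
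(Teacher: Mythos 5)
Your proof is correct and follows essentially the same route as the paper: the \emph{only-if} direction converts a solving concrete region to an abstract one via Lemma~\ref{lem:concrete_to_abstract} and the path equation, and the \emph{if} direction builds the solving region from $abs$ with $sup(\iota)=0$ and $sup(q)=\psi_q\cdot abs$ (the paper simply uses the abstract region itself as the solving $\tau$-region, which is the same object as your lift since $\{0,\dots,b\}\subseteq E_\tau$).
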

\begin{proof}%
\textit{If}:
If $abs$ is an abstract signature with $\psi_{s}\cdot abs \not = \psi_{s'}\cdot abs $ then the $\tau$-region $(sup, abs)$ with $sup(\iota)=0$ and $sup(s)=\psi_{s}\cdot abs$ satisfies $sup(s)\not=sup(s')$.
\textit{Only-if}:
If $(sup, sig)$ is a $\tau$-region then we obtain a corresponding abstract $\tau$-region $(sup, abs)$ as defined in Lemma~\ref{lem:concrete_to_abstract}.
Clearly, $abs$ is an abstract signature and satisfies the path equations.
Consequently, by $sup(s_0)+\psi_{s}\cdot abs= sup(s) \not= sup(s')=sup(s_0)+\psi_{s'}\cdot abs $, we have that $\psi_{s}\cdot abs \not=\psi_{s'}\cdot abs $.
\end{proof}

The next lemma applies Lemma~\ref{lem:ssp} to get a polynomial time algorithm which decides the $\tau$-SSP if $\tau\in \{\tau_{\mathbb{Z}PT}^b,\tau_{\mathbb{Z}PPT}^b,\tau_{R\mathbb{Z}PT}^b\}$.

\begin{lemma}\label{lem:ssp_tractability}
If $\tau\in \{\tau_{\mathbb{Z}PT}^b,\tau_{\mathbb{Z}PPT}^b,\tau_{R\mathbb{Z}PT}^b\}$ then to decide whether a TS $A=(S,E,\delta,\iota)$ has the $\tau$-SSP is doable in time $\mathcal{O}(\vert E\vert^3 \cdot \vert S \vert^6\cdot)$.
\end{lemma}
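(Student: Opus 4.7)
The plan is to reduce the $\tau$-SSP decision problem to polynomially many linear-system feasibility checks over $\mathbb{Z}_{b+1}$ and then invoke Lemma~\ref{lem:complexity_solving_linear_system}. First I would compute a spanning tree $A'$ of $A$ in time $\mathcal{O}(|E||S|^3)$, enumerate its chords $t_1,\dots,t_k$ (where $k = \mathcal{O}(|E||S|^2)$) and their fundamental cycles $\psi_{t_1},\dots,\psi_{t_k}$, and assemble the base system $M_{A'}$ of $k$ equations $\psi_{t_i}\cdot abs = 0$ in the $n=|E|$ unknowns $abs(e_1),\dots,abs(e_n)$. By Lemma~\ref{lem:fundamental_cycles}, the solutions of $M_{A'}$ are precisely the abstract signatures of $A$.

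Next, for each SSA $\alpha=(s,s')$ (there are at most $|S|^2$ of them), I would invoke Lemma~\ref{lem:ssp}: $\alpha$ is $\tau$-solvable if and only if there exists an abstract signature $abs$ satisfying $(\psi_s-\psi_{s'})\cdot abs \neq 0$ in $\mathbb{Z}_{b+1}$. Since non-equality is not directly expressible as a linear equation modulo $b+1$, I would replace the single test by $b$ linear feasibility queries: for every $c\in\{1,\dots,b\}$, extend $M_{A'}$ with the single equation $(\psi_s-\psi_{s'})\cdot abs = c$ and check whether the resulting system is feasible via Lemma~\ref{lem:complexity_solving_linear_system}. The SSA $\alpha$ is $\tau$-solvable exactly when at least one of these $b$ augmented systems admits a solution, and $A$ has the $\tau$-SSP exactly when every SSA passes this test.

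For the time bound, each augmented system has $k+1 = \mathcal{O}(|E||S|^2)$ equations in $n=|E|$ variables, so by Lemma~\ref{lem:complexity_solving_linear_system} one feasibility check costs $\mathcal{O}(nk\cdot\max\{n,k\}) = \mathcal{O}(|E|^3|S|^4)$. Since $b$ is a fixed constant of the type $\tau$, iterating over the $b$ values of $c$ and over the at most $|S|^2$ SSAs yields the claimed overall bound $\mathcal{O}(|E|^3|S|^6)$.

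The principal conceptual point is already done for us: the bridge between concrete $\tau$-regions and abstract ones is provided by Lemma~\ref{lem:concrete_to_abstract} and exploited in the ``only-if'' direction of Lemma~\ref{lem:ssp}, which guarantees that restricting attention to abstract signatures does not lose any solvable SSA; this is crucially valid for all three types $\tau_{\mathbb{Z}PT}^b,\tau_{\mathbb{Z}PPT}^b,\tau_{R\mathbb{Z}PT}^b$ because $\{0,\dots,b\}\subseteq E_\tau$ in each case, so every abstract region is automatically a legitimate $\tau$-region. The only ``obstacle'' is thus purely bookkeeping, namely correctly accounting for the $b$-fold branching on the value of $c$ when linearising the inequality.
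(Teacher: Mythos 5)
Your proposal is correct and follows essentially the same route as the paper: build the fundamental-cycle system $M_{A'}$ once, and for each SSA $(s,s')$ linearise the inequality $(\psi_s-\psi_{s'})\cdot abs\neq 0$ by trying each right-hand side $q\in\{1,\dots,b\}$ as a separate feasibility query, which is sound by Lemma~\ref{lem:ssp} and costs $\mathcal{O}(|E|^3|S|^4)$ per atom by Lemma~\ref{lem:complexity_solving_linear_system}. The paper phrases the $b$-fold branching as an incremental loop on $q$ rather than an enumeration, but the algorithm and the resulting bound $\mathcal{O}(|E|^3|S|^6)$ are identical.
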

\begin{proof}
If $\alpha=(s,s')$ is an SSA of $A$ then the (basic) part $M_{A'}$ of $M_\alpha $ consists of at most  $\vert E\vert\cdot \vert S\vert^2 - 1$ equations for the fundamental cycles.
To satisfy $\psi_{s}\cdot abs \not = \psi_{s'}\cdot abs $, we add the equation $(\psi_{s}-\psi_{s'})\cdot abs = q$, where initially $q=1$, and get (the first possible) $M_\alpha$.
A solution of $M_\alpha$ provides an abstract region satisfying $\psi_{s} \not= \psi_{s'}$.
By Lemma~\ref{lem:ssp}, this proves the solvability of $\alpha$.
If $M_\alpha$ is not solvable then we modify $M_\alpha$ to $M_\alpha'$ simply by incrementing $q$ and try to solve $M_\alpha'$.
Either we get a solution or we modify $M_\alpha'$ to $M_\alpha''$ by incrementing $q$ again.
By Lemma~\ref{lem:ssp}, if $(s,s')$ is solvable then there is a $q\in \{1,\dots, b\}$ such that the corresponding (modified) system has a solution.
Hence, after at most $b$ iterations we can decide whether $(s,s')$ is solvable or not.
Consequently, we have to solve at most $b$ linear systems with at most $ \vert E\vert \cdot  \vert S\vert^2 $ equations for $(s,s')$.
The value $b$ is not part of the input.
Thus, by Lemma~\ref{lem:complexity_solving_linear_system}, this is doable in $\mathcal{O}(\vert E\vert^3 \cdot \vert S\vert^4)$ time.
We have at most $\vert S \vert^2$ different SSA  to solve.
Hence, we can decide the $\tau$-SSP in time $\mathcal{O}(\vert E\vert^3 \cdot \vert S\vert^6)$.
\end{proof}

As a next step, we let $\tau=\tau_{R\mathbb{Z}PT}^b$ and prove the polynomial time decidability of $\tau$-ESSP.
Let $\alpha$ be an ESSA $(e,s)$ and let $s_1,\dots, s_k$ be the sources of $e$ in $A$.
By definition, a $\tau$-region $(sup, sig)$ solves $\alpha$ if and only if $sig(e)=(m,n)$ and $\neg sup(s)\ledge{sig(e)}$ for a $(m,n)\in E_\tau$.
By definition of $\tau$, every element $(m,n)\in E_\tau$ occurs at exactly one state in $\tau$ and this state is $m$.
Hence, $sup(s_1)=\dots=sup(s_k)=m$ and $sup(s)\not=m$.
We base the following lemma on this simple observation.
It provides necessary and sufficient conditions that an \emph{abstract} region must fulfill to imply a \emph{solving} (concrete) region.

\begin{lemma}\label{lem:essp_tau_4}
Let $\tau=\tau_{R\mathbb{Z}PT}^b$ and $A=(S,E,\delta,\iota)$ be a TS and let $s_1\edge{e}s'_1,\dots, s_k\edge{e}s'_k$ be the $e$-labeled transitions in $A$, that is, if $s'\in S\setminus \{s_1,\dots, s_k\}$ then $\neg s'\edge{e} $.
The atom $(e,s)$ is $\tau$-solvable if and only if there is an event $(m,n)\in E_\tau$ and an abstract region $(sup,abs)$ of $A$ such that the following conditions are satisfied:
\begin{enumerate}
\item
$abs(e) = -m+n \text{ mod } (b+1)$,
\item
$\psi_{s_1}\cdot abs  = m - sup(\iota) \text{ mod } (b+1)$,
\item
$(\psi_{s_1}-\psi_{s_i})\cdot abs = 0 \text{ mod } (b+1)$ for $i\in \{2,\dots, k\}$
\item
$(\psi_{s_1}-\psi_s)\cdot abs \not= 0 \text{ mod } (b+1)$.

\end{enumerate}
\end{lemma}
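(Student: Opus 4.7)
\textbf{Proof plan for Lemma~\ref{lem:essp_tau_4}.} The key structural fact about $\tau_{R\mathbb{Z}PT}^b$ is that every event $(m,n)\in\{0,\dots,b\}^2$ occurs in $\tau_{R\mathbb{Z}PT}^b$ \emph{exactly once}, namely at the unique state $m$. My plan is to exploit this uniqueness to convert the ESSA solving condition "$sig(e)$ is undefined at $sup(s)$" into the equational conditions (1)--(4).

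For the \emph{only-if} direction, assume a $\tau$-region $(sup,sig)$ solves $(e,s)$. First I would argue that $sig(e)\notin\{0,\dots,b\}$, since any such singleton event fires at every state of $\tau_{R\mathbb{Z}PT}^b$ and therefore cannot disable $e$ at $s$. Hence $sig(e)=(m,n)$ for some pair in $E_\tau$. By the structure of $\tau_{R\mathbb{Z}PT}^b$, the fact that $s_i\edge{e}s_i'$ is realized forces $sup(s_i)=m$ for every $i\in\{1,\dots,k\}$, while $\neg sup(s)\ledge{(m,n)}$ forces $sup(s)\ne m$. Then I would pass to the abstract region $(sup,abs)$ via Lemma~\ref{lem:concrete_to_abstract}, which gives $abs(e)=-m+n\bmod(b+1)$ (condition~(1)). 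Plugging $sup(s_1)=\dots=sup(s_k)=m$ and $sup(s)\ne m$ into the path equation $sup(x)=sup(\iota)+\psi_x\cdot abs\bmod(b+1)$ yields (2)--(4) directly.

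For the \emph{if} direction, given $(m,n)$ and an abstract region $(sup,abs)$ satisfying (1)--(4), I would build a concrete $\tau$-region $(sup,sig)$ by setting $sig(e):=(m,n)$ and $sig(e'):=abs(e')\in\{0,\dots,b\}$ for $e'\ne e$. Conditions (2) and (3) combined with the path equation force $sup(s_i)=m$ for all $i$, and condition (4) forces $sup(s)\ne m$. I then need to verify that $(sup,sig)$ is a $\tau$-region: for edges $x\edge{e'}y$ with $e'\ne e$, the abstract region property gives $sup(y)=sup(x)+abs(e')\bmod(b+1)$, which is exactly $\delta_{\tau_{R\mathbb{Z}PT}^b}(sup(x),sig(e'))=sup(y)$ because $\tau_{R\mathbb{Z}PT}^b$ agrees with $\tau_{\mathbb{Z}PT}^b$ on the group events $\{0,\dots,b\}$. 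For edges $s_i\edge{e}s_i'$ with $sig(e)=(m,n)$, one checks that $sup(s_i)=m$ allows $(m,n)$ to fire and sends $m$ to $n=sup(s_i')$; the latter follows from condition~(1) and the path equation. Finally, $\neg sup(s)\ledge{(m,n)}$ holds because $(m,n)$ only fires at $m$ in $\tau_{R\mathbb{Z}PT}^b$, and $sup(s)\ne m$.

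The main subtlety, rather than any obstacle, is keeping track of when we are working modulo $b+1$ (abstract regions, path equations) and when we are working with integer arithmetic (the concrete transition function of $\tau_{R\mathbb{Z}PT}^b$ on pair events). The translation is clean only because $n\in\{0,\dots,b\}$ already sits inside $\mathbb{Z}_{b+1}$, so $sup(s_i')=n$ as an integer coincides with $m+(-m+n)\bmod(b+1)=n$; it is precisely the \emph{uniqueness} of the firing state for each pair $(m,n)$ in $\tau_{R\mathbb{Z}PT}^b$ that makes the four scalar equations (1)--(4) both necessary and sufficient.
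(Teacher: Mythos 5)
Your proposal is correct and follows essentially the same route as the paper's proof: both directions hinge on the fact that each pair event $(m,n)$ occurs in $\tau_{R\mathbb{Z}PT}^b$ only at state $m$ (so singleton signatures cannot solve the atom and $sup(s_1)=\dots=sup(s_k)=m$, $sup(s)\neq m$ are forced), and both use Lemma~\ref{lem:concrete_to_abstract} together with the path equation to translate between the concrete region and conditions (1)--(4). No substantive differences from the paper's argument.
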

\begin{proof}
\textit{If}:
Let $(sup, abs)$ be an abstract region that satisfies the conditions $1$-$4$.
We obtain a $\tau$-solving region $(sup, sig)$ with (the same support and) the signature $sig$ defined by $sig(e')=abs(e')$ if $e'\not=e$ and $sig(e')=(m, n)$ if $e'=e$.
To argue that $(sup, sig)$ is a $\tau$-region we have to argue that $q\edge{e'}q'$ in $A$ implies $sup(q)\ledge{sig(e')}sup(q')$.
As $(sup, abs)$ is an abstract region this is already clear for transitions $q\edge{e'}q'$ where $e'\not=e$.
Moreover, $(sup, abs)$ satisfies $\psi_{s_1}\cdot abs = m - sup(\iota) \text{ mod } (b+1)$ and the path equation holds, that is, $sup(s_1)=sup(\iota) + \psi_{s_1}\cdot abs \text{ mod } (b+1)$ which implies $sup(s_1)=m$.
Consequently, by definition of $\tau$, we have $sup(s_1) \ledge{(m,n)} n $ in $\tau$.
Furthermore, by $abs(e)= -m+n \text{ mod } (b+1)$ we have $m+abs(e) = n\text{ mod } (b+1)$.
Hence, by $sup(s_1)\ledge{abs(e)}sup(s'_1)$, we conclude $sup(s'_1)=n$ and, thus, $sup(s_1)\ledge{(m,n)}sup(s'_1)$.
By $(\psi_{s_1}-\psi_{s_i})\cdot abs = 0 \text{ mod } (b+1)$ for $i\in \{2,\dots, k\}$, we obtain that $sup(s_1)=\dots =sup(s_k)=m$.
Therefore, similar to the discussion for $s_1\edge{e}s'_1$, we obtain by $sup(s_i)\ledge{abs(e)}sup(s'_i)$ that the transitions $sup(s_i)\ledge{(m,n)}sup(s'_i)$ are present in $\tau$ for $i\in \{2,\dots, k\}$.
Consequently, $(sup, sig)$ is a $\tau$-region.

Finally, by $(\psi_{s_1}-\psi_s)\cdot abs \not = 0 \text{ mod } (b+1)$, have that $sup(s_1)\not=sup(s)$ and thus $\neg sup(s)\ledge{sig(e)}$.
This proves $(e,s)$ to be $\tau$-solvable by $(sup, sig)$.

\textit{Only-if}:
Let $(sup, sig)$ be a $\tau$-region that solves $(e,s)$ implying, by definition, $\neg sup(s)\ledge{sig(e)}$.
We use $(sup, sig)$ to define a corresponding abstract $\tau$-region $(sup, abs)$ in accordance to Lemma~\ref{lem:concrete_to_abstract}.
If $sig(e)\in \{0,\dots,b\}$ then $sup(s)\ledge{sig(e)}$, a contradiction.
Hence, it is $sig(e)=(m,n)\in E_\tau$ such that $sup(s_i)\ledge{(m,n)}$ for $i\in \{1,\dots, k\}$ and $\neg sup(s)\ledge{(m,n)}$.
This immediately implies $sup(s)\not=sup(s_1)$ and, hence, $(\psi_{s_1}-\psi_s)\cdot abs \not= 0 \text{ mod } (b+1)$.
By $sup(s_i)\ledge{(m,n)}sup(s'_i)$ and definition of $\tau$, we have that $sup(s_i)=m$ and $sup(s'_i)=n$ for $i\in \{1,\dots, k\}$ implying $(\psi_{s_1}-\psi_{s_i})\cdot abs = 0 \text{ mod } (b+1)$ for $i\in \{2,\dots, k\}$.
Moreover, by $sup(s_1)\ledge{abs(e)}sup(s'_1)$ we have $ abs(e) = sup(s'_1)-sup(s_1)\text{ mod } (b+1)$.
Hence, it is $abs(e)= -m+n \text{ mod } (b+1)$.
Finally, by the path equation, we have $sup(s_1)=sup(\iota)+\psi_{s_1}\cdot abs \text{ mod } (b+1)$ which with $sup(s_1)=m$ implies $\psi_{s_1}\cdot abs  =  m - sup(\iota)\text{ mod } (b+1)$.
This proves the lemma.
\end{proof}

The proof of the following lemma exhibits a polynomial time decision algorithm for the $\tau_{R\mathbb{Z}PT}^b$-ESSP:
Given a TS $A=(S,E,\delta,\iota)$ and a corresponding ESSA $\alpha$, the system $M_{A'}$ is extended to a system $M_\alpha$.
If $M_\alpha$ has a solution $abs$, then it implies a region $(sup, abs)$ satisfying the conditions of Lemma~\ref{lem:essp_tau_4} and thus implies the $\tau$-solvability of $\alpha$.
Conversely, if $\alpha$ is solvable, then there is an abstract region $(sup, abs)$ that satisfies the four conditions of by Lemma~\ref{lem:essp_tau_4}.
The abstract signature $abs$ is the solution of a corresponding equation system $M_\alpha$.
Hence, we get a solvable $M_\alpha$ if and only if $\alpha$ is solvable.
We argue that the number of possible systems is bounded polynomially in the size of $A$.
The solvability of every system is also decidable in polynomial time.
Consequently, by the at most $\vert E\vert \cdot \vert S\vert $ ESSA to solve, this yields the announced decision procedure.

\begin{lemma}\label{lem:essp_tractability}
If a TS $A=(S,E,\delta,\iota)$ has the $\tau_{R\mathbb{Z}PT}^b$-ESSP is decidable in time $\mathcal{O}(\vert E\vert^4\cdot \vert S\vert^5)$.
\end{lemma}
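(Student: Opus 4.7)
The plan is to translate Lemma~\ref{lem:essp_tau_4} into a polynomial-time search procedure in the same spirit as Lemma~\ref{lem:ssp_tractability}. Given an ESSA $\alpha=(e,s)$ of $A$, let $s_1,\dots,s_k$ be the sources of $e$. By Lemma~\ref{lem:essp_tau_4}, $\alpha$ is $\tau$-solvable iff there exist $(m,n)\in E_{\tau_{PT}^b}\setminus\{(0,0)\}$ and an abstract $\tau$-region $(sup,abs)$ satisfying four conditions. First I would treat $sup(\iota)$ as an additional free variable over $\mathbb{Z}_{b+1}$, so that Condition~2 becomes the linear equation $\psi_{s_1}\cdot abs+sup(\iota)\equiv m\pmod{b+1}$; Conditions~1 and~3 are already linear in $abs$.

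For each candidate $(m,n)$ I form the system $M_{\alpha,(m,n)}$ obtained from the base system $M_{A'}$ of fundamental-cycle equations by adjoining Conditions~1--3. The remaining Condition~4, namely $(\psi_{s_1}-\psi_s)\cdot abs\not\equiv 0\pmod{b+1}$, is an inequality, which I handle exactly as in the SSP proof: for each residue $q\in\{1,\dots,b\}$ I adjoin the single equation $(\psi_{s_1}-\psi_s)\cdot abs=q$ to obtain a system $M_{\alpha,(m,n),q}$. By Lemma~\ref{lem:essp_tau_4} and the exhaustion over $(m,n)$ and $q$, the atom $\alpha$ is $\tau$-solvable iff at least one of the finitely many systems $M_{\alpha,(m,n),q}$ admits a solution, and a solution, combined with the path equation $sup(s')=sup(\iota)+\psi_{s'}\cdot abs$, yields an abstract region fulfilling all four conditions that lifts to a concrete $\tau$-solving region by setting $sig(e)=(m,n)$ and $sig(e')=abs(e')$ for $e'\neq e$.

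The complexity analysis then follows the template of Lemma~\ref{lem:ssp_tractability}. A spanning tree $A'$ with at most $|E|\cdot|S|^2-1$ chords is computable in polynomial time. For a fixed $\alpha$ the family contains at most $|E_{\tau_{PT}^b}|\cdot b\in\mathcal{O}(b^3)=\mathcal{O}(1)$ systems (since $b$ is fixed), each of which has at most $\mathcal{O}(|E|\cdot|S|^2)$ equations in $|E|+1$ unknowns over $\mathbb{Z}_{b+1}$. By Lemma~\ref{lem:complexity_solving_linear_system} each such system is solvable in $\mathcal{O}(|E|^3\cdot|S|^4)$ time. Since $A$ has at most $|E|\cdot|S|$ ESSA, the total running time is $\mathcal{O}(|E|^4\cdot|S|^5)$, matching the claimed bound.

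The main obstacle is ensuring tightness in both directions of the reduction. The \emph{only-if} direction is handled directly by Lemma~\ref{lem:essp_tau_4}: from any concrete $\tau$-region solving $\alpha$ one reads off the unique $(m,n)$ with $sig(e)=(m,n)$ and the unique $q\equiv(\psi_{s_1}-\psi_s)\cdot abs$, whence the corresponding $M_{\alpha,(m,n),q}$ is solvable. The \emph{if} direction requires verifying that conditions~1--3 together with the path equation force $sup(s_1)=\dots=sup(s_k)=m$, so that the defining restriction of $\tau_{R\mathbb{Z}PT}^b$ (where $(m,n)$ fires exactly at state $m$) guarantees $sup(q)\ledge{(m,n)}sup(q')$ for every $q\edge{e}q'$, while Condition~4 forbids $sup(s)=m$ and hence prevents $e$ from firing at $s$; this is precisely where promoting $sup(\iota)$ to a free variable preserves completeness without over-constraining Condition~2.
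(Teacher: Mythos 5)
Your proposal is correct and follows essentially the same route as the paper: enumerate the constantly many choices of $(m,n)\in E_\tau$ and $q\in\{1,\dots,b\}$, extend the fundamental-cycle system $M_{A'}$ by the linear constraints of Lemma~\ref{lem:essp_tau_4} (handling the single inequation by exhausting its $b$ possible nonzero residues), and solve each resulting system over $\mathbb{Z}_{b+1}$ via Lemma~\ref{lem:complexity_solving_linear_system}, giving $\mathcal{O}(\vert E\vert^3\cdot\vert S\vert^4)$ per atom and $\mathcal{O}(\vert E\vert^4\cdot\vert S\vert^5)$ overall. The only (harmless) deviation is that you treat $sup(\iota)$ as an extra unknown in Condition~2 rather than enumerating its $b+1$ values as the paper does; since no other constraint involves $sup(\iota)$, this is equivalent and equally correct.
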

\begin{proof}
To estimate the computational complexity of deciding the $\tau_{R\mathbb{Z}PT}^b$-ESSP for $A$ observe that $A$ has at most $\vert S\vert \cdot \vert E\vert$ ESSA to solve.
Hence, the maximum costs of deciding the $\tau_{R\mathbb{Z}PT}^b$-ESSP for $A$ equals $\vert S\vert \cdot \vert E \vert$ times the maximum effort for a single atom.

In order to decide the $\tau$-solvability of a single ESSA $(e,s)$, we compose systems in accordance to Lemma~\ref{lem:essp_tau_4}.
The maximum costs can be estimated as follows:
The (basic) part $M_{A'}$ of $M_\alpha$ has at most $\vert E\vert \cdot \vert  S \vert^2$ equations.
Moreover, $e$ occurs at most at $\vert S\vert -1$ states.
This makes at most $\vert S\vert$ equations to ensure that $e$'s sources will have the same support, the third condition of Lemma~\ref{lem:essp_tau_4}.
According to the first and the second condition, we choose an event $(m,n)\in E_\tau$, a value $sup(\iota)\in \{0,\dots, b\}$, define $abs(e)=-m+n \text{ mod } (b+1)$ and add the corresponding equation $\psi_{s_1} \cdot abs = m - sup(\iota)$.
For the fourth condition we choose a fixed value $q \in \{1,\dots, b\}$ and add the equation $(\psi_{s_1} - \psi_{s})\cdot abs =q$.
Hence, the system has at most $2 \cdot  \vert E\vert  \cdot  \vert S\vert^2 $ equations.

By Lemma~\ref{lem:complexity_solving_linear_system}, one checks in time $\mathcal{O}(\vert E\vert^3 \cdot  \vert S\vert^4 )$ if such a system has a solution.
Notice, we use that $2\cdot  \vert E\vert\cdot  \vert S\vert^2  = max\{\vert E\vert , 2 \cdot  \vert E\vert \cdot \vert S\vert^2  \}$.
There are at most $(b+1)^2$ possibilities to choose a corresponding $(m,n)\in E_\tau$ and only $b+1$ possible values for $x$ and for $q$, respectively.
Hence, for a fixed atom $(e,s)$, we have to solve at most $(b+1)^4$ such systems and $b$ is not part of the input.
Consequently, we can decide in time $\mathcal{O}(\vert E\vert^3\cdot   \vert S\vert^4 )$ if $(e,s)$ is solvable.
$A$ provides at most $\vert S \vert\cdot  \vert E \vert$ ESSA.
Hence, the $\tau_{R\mathbb{Z}PT}^b$-ESSP for $A$ is decidable in time $\mathcal{O}(\vert E\vert^4 \cdot  \vert S\vert^5)$.
\end{proof}

The following lemma completes the proof of Theorem~\ref{the:tractability} and, moreover, shows that \textsc{$\tau_{R\mathbb{Z}PT}^b$-Synthesis} is solvable in polynomial time.

\begin{corollary}\label{cor:tractability_synthesis}
There is an algorithm that constructs, for a TS $A=(S,E,\delta,\iota)$, a $\tau_{R\mathbb{Z}PT}^b$-net $N$ with a state graph $A_N$ isomorphic to $A$ if it exists in time $\mathcal{O}(\vert E\vert ^3 \cdot \vert S\vert^5  \cdot max\{ \vert E\vert, \vert S\vert  \})$.
\end{corollary}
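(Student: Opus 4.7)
The plan is to leverage Lemma~\ref{lem:admissible}, which tells us that a $\tau_{R\mathbb{Z}PT}^b$-net $N$ with reachability graph isomorphic to $A$ exists if and only if there is a $\tau_{R\mathbb{Z}PT}^b$-admissible set $\mathcal{R}$ of $A$, in which case $N = N_A^{\mathcal{R}}$. Since an admissible set is by definition a witness of both the $\tau$-SSP and the $\tau$-ESSP, it suffices to turn the decision procedures of Lemma~\ref{lem:ssp_tractability} and Lemma~\ref{lem:essp_tractability} into constructive witness-building algorithms and then collect the produced regions.

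Concretely, the procedure I would carry out is as follows. First, compute a spanning tree $A'$ of $A$ together with the matrix $M_{A'}$ of fundamental cycle equations; by the discussion preceding Lemma~\ref{lem:fundamental_cycles}, this is doable in time $\mathcal{O}(|E|\cdot|S|^3)$, well within the target bound. Then, enumerate the at most $|S|^2$ SSA and, for each atom $(s,s')$, form the systems $M_\alpha$ described in the proof of Lemma~\ref{lem:ssp_tractability} (extending $M_{A'}$ with $(\psi_s-\psi_{s'})\cdot abs = q$ for $q\in\{1,\dots,b\}$); any solution $abs$ together with an arbitrary choice of $sup(\iota)$ yields, via $sup(s)=sup(\iota)+\psi_s\cdot abs$, an abstract $\tau$-region solving the atom, from which one recovers a concrete region since $\tau_{R\mathbb{Z}PT}^b$ allows every abstract signature. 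Next, enumerate the at most $|S|\cdot|E|$ ESSA and, for each atom $(e,s)$, form the systems $M_\alpha$ described in the proof of Lemma~\ref{lem:essp_tractability} by plugging in every possible choice of $(m,n)\in E_\tau$, $sup(\iota)\in\{0,\dots,b\}$, and $q\in\{1,\dots,b\}$; a solution, extended according to Lemma~\ref{lem:essp_tau_4} with $sig(e)=(m,n)$ and $sig(e')=abs(e')$ otherwise, yields a concrete $\tau_{R\mathbb{Z}PT}^b$-region that solves $(e,s)$. If any atom admits no solving region, abort and report that no net exists; otherwise collect all produced regions into $\mathcal{R}$ and output $N=N_A^{\mathcal{R}}$.

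Correctness then follows directly from Lemma~\ref{lem:ssp}, Lemma~\ref{lem:essp_tau_4}, and Lemma~\ref{lem:admissible}: the collected $\mathcal{R}$ is admissible whenever $A$ is $\tau_{R\mathbb{Z}PT}^b$-solvable, and the synthesized net $N_A^{\mathcal{R}}$ has a reachability graph isomorphic to $A$. For the complexity estimate, combining the costs gives $\mathcal{O}(|E|^3\cdot|S|^6)$ for the SSA phase (from Lemma~\ref{lem:ssp_tractability}) and $\mathcal{O}(|E|^4\cdot|S|^5)$ for the ESSA phase (from Lemma~\ref{lem:essp_tractability}), whose sum is $\mathcal{O}(|E|^3\cdot|S|^5\cdot\max\{|E|,|S|\})$. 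Assembling the net from $\mathcal{R}$ requires only reading off $sig$ and $sup(\iota)$ for each region, which is clearly dominated by this bound.

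The main obstacle is not algorithmic novelty but bookkeeping: one has to be careful that the regions produced by the system $M_\alpha$ for an ESSA are genuinely concrete $\tau_{R\mathbb{Z}PT}^b$-regions, i.e.\ that plugging $sig(e)=(m,n)$ back in is consistent with the peculiarly restricted transition function of $\tau_{R\mathbb{Z}PT}^b$ where $(m,n)$ fires only at the state $m$. This is precisely what Conditions~2 and~3 of Lemma~\ref{lem:essp_tau_4} enforce, so no additional work beyond quoting that lemma is needed; and iterating $(m,n)$, $sup(\iota)$ and $q$ over the constantly many choices absorbs into the $\mathcal{O}$-notation since $b$ is not part of the input.
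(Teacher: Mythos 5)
Your proposal is correct and follows essentially the same route as the paper: the paper's proof likewise invokes the admissibility criterion (citing Badouel et al.\ directly rather than Lemma~\ref{lem:admissible}) and obtains the bound by combining the constructive decision procedures of Lemma~\ref{lem:ssp_tractability} and Lemma~\ref{lem:essp_tractability}, exactly as you do. Your complexity bookkeeping, summing $\mathcal{O}(|E|^3|S|^6)$ and $\mathcal{O}(|E|^4|S|^5)$ into $\mathcal{O}(|E|^3|S|^5\max\{|E|,|S|\})$, matches the paper's stated bound.
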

\begin{proof}
By \cite{DBLP:series/txtcs/BadouelBD15}, if $\mathcal{R}$ is a set of regions of $A$ containing for each ESSP and SSA of $A$ a solving region, respectively,  then the $\tau$-net $N^\mathcal{R}_A=(\mathcal{R}, E(A), f, M_0)$, where $f((sup,sig),e)=sig(e)$ and $M_0((sup,sig))=sup(\iota)$ for $(sup, sig)\in \mathcal{R}, e\in E(A)$, has a state graph isomorphic to $A$.
Hence, the corollary follows from Lemma~\ref{lem:ssp_tractability} and Lemma~\ref{lem:essp_tractability}.
\end{proof}

\begin{example}
We pick up our running example TS $A$ and its spanning tree of Figure~\ref{fig:fundamental_cycles}.
We present two steps of the method given by Lemma~\ref{lem:essp_tractability} for the type $\tau^2_4$ and check $\tau^2_4$-solvability of the ESSA $(c,1)$.

For a start, we choose $(m,n)=(0,1)$ and $sup(0)=0$ and determine $abs(c)=-0+1=1$ which yields $abs=(x_a, x_b, 1 , x_d)$.
We have to add $\psi_0\cdot abs=m-sup(0)=0$ which, by $\psi_0=(0,0,0,0)$, is always true and do not contribute to the system.
Moreover, for $i\in \{0,2,3,4,5,6\}$, we add the equation $(\psi_0-\psi_i)\cdot abs =0$.
We have $\psi_0-\psi_6=(0,0,-2,0)$ and $(0,0,-2,0)\cdot abs= 0\cdot x_a - 0\cdot x_b - 2 -0\cdot x_d=0$  yields a contradiction.
Hence, $(c,1)$ is not solvable by a region $(sup, sig)$ where $sup(0)=0$ and $sig(c)=(0,1)$.
Similarly, we obtain that the system corresponding to $sup(0)\in \{1,2\}$ and $sig(c)=(0,1)$ is also not solvable.

\medskip
For another try, we choose $(m,n)=(2,2)$ and $sup(0)=2$.
In accordance to the first and the second condition of Lemma~\ref{lem:essp_tau_4} this determines $abs=(x_a, x_b, 0 , x_d)$ and yields the equation $\psi_0 \cdot abs=m-sup(0)=2-2=0$ which is always true.
For the fourth condition, we pick $q=2$ and add the equation  $(\psi_0-\psi_1)\cdot abs= 2\cdot x_a=2$.
Finally, for the third condition, we add for $i\in \{0,2,3,4,5,6\}$ the equation $(\psi_0-\psi_i)\cdot abs =0$ and obtain the following system of equations modulo $(b+1)$:
\begin{align*}
\psi_t\cdot abs 		&= 			&  	&2\cdot x_b 		& \  				& + x_d	&=0 \\
(\psi_0-\psi_1)\cdot abs 	&=  2\cdot x_a 	& \ 	&			&\				&\		&= 2 \\
(\psi_0-\psi_2)\cdot abs 	&= 2\cdot x_a 	& +\ 	&2\cdot x_b 		&\				&\		&= 0\\
(\psi_0-\psi_3)\cdot abs 	&= 2\cdot x_a 	& +\	& 2\cdot x_b 		& + 2\cdot 0			&\		&= 0
\end{align*}
\begin{align*}
(\psi_0-\psi_4)\cdot abs 	&= 2\cdot x_a 	& +\	& 2\cdot x_b 		& + 1\cdot 0			&\		&= 0\\
(\psi_0-\psi_5)\cdot abs 	&= 			& 	& 			&  2\cdot 0			&\		&= 0\\
(\psi_0-\psi_6)\cdot abs 	&= 			& \	& 			& 1\cdot 0			&\		&= 0
\end{align*}
This system is solvable by $abs=(1,2,0,2)$.
We construct a region in accordance to the proof of Lemma~\ref{lem:essp_tau_4}:
By $sup(0)=2$ we obtain $sup(1)=2+\psi_1\cdot abs=2+(1,0,0,0)\cdot (1,2,0,2)=0$.
Similarly,  by $sup(i)=2+\psi_i\cdot abs$ for $i\in \{2,\dots, 7\}$ we obtain $sup(2)=sup(3)=sup(4)=sup(5)=sup(6)=2$ and $sup(7)=0$.
Hence, by defining $sig(c)=(2,2)$, $sig(a)=1$, $sig(b)=2$ and $sig(d)=2$ we obtain a fitting $\tau_{R\mathbb{Z}PT}^b$-region $(sup, sig)$ that solves $(c,1)$.
\end{example}

\section{Conclusion}\label{sec:conclusion}%

In this paper, for all $b\in \mathbb{N}$, we completely characterize the computational complexity of \textsc{$\tau$-SSP} and \textsc{$\tau$-ESSP} and \textsc{$\tau$-Solvability} for the types of pure $b$-bounded P/T-nets, $b$-bounded P/T-nets and their corresponding $\mathbb{Z}_{b+1}$-extensions.
This answers an open problem posed by Schlachter et al. in~\cite{DBLP:conf/concur/SchlachterW17}.

\smallskip
Some open problems in the field of Petri net synthesis concern the computational complexity of $\tau$-\emph{synthesis up to language equivalence} (\textsc{$\tau$-Language Synthesis}) and $\tau$-\emph{synthesis from modal TS} (\textsc{$\tau$-Modal Synthesis}):

\smallskip
\textsc{$\tau$-Language Synthesis} is the task to find for a given TS $A=(S,E,\delta,\iota)$ a $\tau$-net $N$ whose state graph $A_N$ has the same language as $A$, that is, $L(A_N)=L(A)$.
If there is a sought $\tau$-net $N$ for $A$, then $A$ is called $\tau$-solvable up to language equivalence.
To attack this problem, in~\cite[p.~164]{DBLP:series/txtcs/BadouelBD15}, the language $L(A)$ of $A$ is viewed as the TS $L_A=(L(A), E,\delta_L,\varepsilon)$ where $\delta_L(w,e)=we$ if and only if $we\in L(A)$.
By the result of~\cite[p.~164]{DBLP:series/txtcs/BadouelBD15}, there is a $\tau$-net $N$ that solves $A$ up to language equivalence if and only if the TS $L_A$ has the $\tau$-ESSP.
Since there might be exponentially (or even infinite) many paths in $A$, computing $L_A$ and then checking the ESSP yields an algorithm that, in general, is at least exponential in the size of $A$.
Anyway, the exact computational complexity of $\tau$-language synthesis has not yet been proven, and, so far, there has been also no lower bound.
For $\tau\in \{\tau_{PT}^b,\tau_{PPT}^b\}$, our results imply a lower bound, to be seen as follows:
If $A=s_0\edge{e_1}s_1\edge{e_2}\dots\edge{e_n}s_n$ is a linear TS, then $L_A=\varepsilon\edge{e_1}e_1\edge{e_2}\dots\edge{e_n}e_1\dots e_n$ (the states of $L_A$ are $e_1$ and $e_1e_2$ and $\dots$ and $e_1\dots e_n$).
In particular, it is easy to see that $A$ and $L_A$ are isomorphic.
Consequently, by~\cite[p.~164]{DBLP:series/txtcs/BadouelBD15}, a linear TS $A$ is $\tau$-solvable up to language equivalence if and only if it has the $\tau$-ESSP.
Thus, by Theorem~\ref{the:hardness_results}, $\tau$-language synthesis is NP-hard, since there is a trivial reduction from \textsc{$\tau$-ESSP} to \textsc{$\tau$-language synthesis}.

\smallskip
$\tau$-modal synthesis~\cite{DBLP:conf/concur/SchlachterW17} is the task to find for a given \emph{modal} TS $M$ a $\tau$-net $N$ such that the state graph $A_N$ \emph{implements} $A$:
A \emph{modal} TS $M=(S, E, \delta_{must}, \delta_{may}, s_0)$ has a set of states $S$, events $E$, an initial state $s_0$, a (partial) function $\delta_{must}:S\times E\rightarrow S$ that defines the \emph{must}-edges and a (partial) function $\delta_{may}:S\times E\rightarrow S$ that defines the \emph{may} edges of $A$;
moreover, $\delta_{must}$ and $\delta_{may}$ satisfy that if $\delta_{must}(s,e)=s'$, then $\delta_{may}(s,e)=s'$, that is, every must-arc is a may-arc, but not every may-arc is necessarily a must-arc.
A TS $A$ that has the same event set as $M$ \emph{implements} $M$ if a relation $R\subseteq M(S)\times A(S)$ exists such that $(s_{0,M}, \iota)\in R$ and for all $(s, q)\in R$ and $e\in E(M)=E(A)$ the following holds:
\begin{enumerate}
\item
If $\delta_{must}(s, e)=s'$, then there is a $q'\in S(A)$ such that $\delta_A(q,e)=q'$ and $(s',q')\in R$.
\item
If $\delta_A(q,e)=q'$, then there is a $s'\in S(M)$ such that $\delta_{may}(s,e)=s'$ and $(s',q')\in R$.
\end{enumerate}
If there is a searched net $N$ for $M$, then $M$ is called $\tau$-\emph{implementable}.
The computational complexity of $\tau$-modal synthesis has been stated as an open problem in~\cite{DBLP:conf/concur/SchlachterW17}.
While at least an (exponential) upper bound is given in~\cite{DBLP:conf/concur/SchlachterW17}, a lower bound has not yet been stated.
Our results imply a lower bound for $\tau\in \{\tau_{PT}^b,\tau_{PPT}^b\}$.
This can bee seen as follows:
Every TS $A$ can be interpreted as a modal TS where the must-edges and the may-edges coincide.
For such a TS, the just introduced implementation relation then reduces to the well known relation of bisimulation~\cite[p.~22]{DBLP:series/eatcs/Gorrieri17}.
Moreover, it is also known that deterministic TS $A_0$ and $A_1$ are bisimilar if and only if they are language equivalent (also-called \emph{trace equivalent})~\cite[p.~26]{DBLP:series/eatcs/Gorrieri17}.

\smallskip
Altogether, we have justified that a linear TS $A$ has the $\tau$-ESSP if and only if it is $\tau$-solvable up to language equivalence if and only if, interpreted as modal TS, it is implementable by the state graph $A_N$ of a $\tau$-net $N$.
Thus, for $\tau\in\{\tau_{PT}^b,\tau_{PPT}^b\}$, the following theorem is a corollary of Theorem~\ref{the:hardness_results} and, at least, gives lower bounds for the computational complexity of both $\tau$-language synthesis and $\tau$-modal synthesis:
\begin{theorem}
Let $\tau\in \{\tau_{PT}^b,\tau_{PPT}^b\}$.
Deciding for a TS $A$ if it is $\tau$-solvable up to language equivalence or deciding for a modal TS $M$ if it is $\tau$-implementable is NP-hard.
\end{theorem}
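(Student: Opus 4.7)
The plan is to reduce \textsc{$\tau$-ESSP} restricted to linear TS, which is NP-complete by Theorem~\ref{the:hardness_results}.\ref{the:hardness_results_1}, to both \textsc{$\tau$-language synthesis} and \textsc{$\tau$-modal synthesis}. The reduction is the identity: given a linear TS $A=s_0\edge{e_1}s_1\edge{e_2}\dots\edge{e_n}s_n$, we use $A$ itself as input for language synthesis, and we interpret $A$ as a modal TS $M_A$ in which every may-edge is also a must-edge (that is, $\delta_{must}=\delta_{may}=\delta_A$). The reduction is clearly polynomial, so the entire work is concentrated in justifying that $A$ has the $\tau$-ESSP if and only if $A$ is $\tau$-solvable up to language equivalence if and only if $M_A$ is $\tau$-implementable.

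First I would handle the language-equivalence side. Because $A$ is linear, the states of $L_A$ are exactly the prefixes $\varepsilon, e_1, e_1e_2, \ldots, e_1\dots e_n$, each reached by a unique directed labeled path, so the map $s_i\mapsto e_1\dots e_i$ is an isomorphism of initialized TS between $A$ and $L_A$. Consequently, $A$ has the $\tau$-ESSP if and only if $L_A$ has the $\tau$-ESSP, and by the characterization~\cite[p.~164]{DBLP:series/txtcs/BadouelBD15} cited in the conclusion, the latter is equivalent to the $\tau$-solvability of $A$ up to language equivalence.

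Second I would handle the modal side. For the modal TS $M_A$ where the must- and may-relations coincide with $\delta_A$, the implementation relation from Section~\ref{sec:conclusion} specializes to an ordinary bisimulation between $A$ and the state graph $A_N$ of a candidate $\tau$-net $N$. Since $A$ is linear, it is in particular deterministic; any $A_N\cong L(A_N)$ candidate arising from a $\tau$-net is deterministic by construction (the flow function being total makes firing deterministic). For deterministic TS, bisimilarity coincides with language equivalence \cite[p.~26]{DBLP:series/eatcs/Gorrieri17}. Hence $M_A$ is $\tau$-implementable exactly when there is a $\tau$-net $N$ with $L(A_N)=L(A)$, i.e., precisely when $A$ is $\tau$-solvable up to language equivalence.

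Chaining the two equivalences, $M_A$ is $\tau$-implementable if and only if $A$ has the $\tau$-ESSP. Combined with Theorem~\ref{the:hardness_results}.\ref{the:hardness_results_1}, which says that \textsc{$\tau$-ESSP} is NP-complete already on linear TS, both decision problems are NP-hard. The only delicate point is the second step: I need to verify carefully that restricting attention to deterministic implementations loses nothing, i.e., that when a non-deterministic implementation exists then a deterministic one does too. But this is immediate for us because the only candidate implementations considered are reachability graphs of $\tau$-nets, and these are deterministic by Definition~\ref{def:tau_nets}, so the reduction of bisimilarity to language equivalence applies directly without any side conditions.
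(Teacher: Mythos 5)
Your proposal is correct and follows essentially the same route as the paper's own argument in Section~\ref{sec:conclusion}: the identity reduction from \textsc{$\tau$-ESSP} on linear TS, the observation that $A\cong L_A$ for linear $A$ combined with the characterization from~\cite[p.~164]{DBLP:series/txtcs/BadouelBD15}, and the collapse of modal implementability to bisimulation and then to language equivalence for deterministic TS via~\cite[p.~26]{DBLP:series/eatcs/Gorrieri17}. Your explicit remark that reachability graphs of $\tau$-nets are deterministic by construction is a welcome clarification of a point the paper leaves implicit, but it does not change the argument.
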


It remains for future work to settle the exact complexity of $\tau$-language synthesis and $\tau$-modal synthesis.
Moreover, one might investigate if \textsc{$\tau$-Solvability} and \textsc{$\tau$-ESSP} remain NP-complete for $1$-grade TS if $\tau\in \{\tau_{\mathbb{Z}PT}^b,\tau_{\mathbb{Z}PPT}^b\}$.


\end{document}